\newcommand*{\cA}{\mathcal{A}} 
\newcommand*{\cB}{\mathcal{B}}
\newcommand*{\cC}{\mathcal{C}}
\newcommand*{\cG}{\mathcal{G}}
\newcommand*{\cH}{\mathcal{H}}
\newcommand*{\cI}{\mathcal{I}}
\newcommand*{\cK}{\mathcal{K}}
\newcommand*{\cL}{\mathcal{L}}
\newcommand*{\cM}{\mathcal{M}}
\newcommand*{\cN}{\mathcal{N}}
\newcommand*{\cD}{\mathcal{D}}
\newcommand*{\cR}{\mathcal{R}}
\newcommand*{\cP}{\mathcal{P}}
\newcommand*{\cU}{\mathcal{U}}
\newcommand*{\cT}{\mathcal{T}}
\newcommand*{\cV}{\mathcal{V}}
\newcommand*{\cW}{\mathcal{W}}
\newcommand*{\tr}{\mathop{\mathrm{tr}}\nolimits}
\newcommand{\id}{\operatorname{id}}
\newcommand{\1}{\mathds{1}}
\newcommand{\und}[1]{\underline{#1}}
\newcommand{\erfc}{\operatorname{erfc}}
\newtheorem*{rep@theorem}{\rep@title}
\newcommand{\newreptheorem}[2]{%
\newenvironment{rep#1}[1]{%
 \def\rep@title{#2 \ref{##1}}%
 \begin{rep@theorem}}%
 {\end{rep@theorem}}}
\newtheorem{theorem}{Theorem} %durchegehende Nummerierung
\newtheorem{lemma}[theorem]{Lemma}
\newtheorem{definition}[theorem]{Definition}
\newtheorem{remark}[theorem]{Remark}
\newtheorem{proposition}[theorem]{Proposition}
\newcommand{\myacknowledgements}{\begin{center}{\bf Acknowledgements}\end{center}\par}
\newcommand{\ket}[1]{|#1\rangle}
\newcommand*{\mylabel}[1]{\label{#1}}
\newcommand{\bra}[1]{\langle#1|}
\newcommand{\proj}[1]{|#1\rangle\!\langle#1|}
\newcommand{\assign}{:=}
\renewcommand*{\@fnsymbol}[1]{\ensuremath{\ifcase#1\or \circ \or \diamond\or \natural\or
    \sharp\or \dagger\or \ddagger\or \|\or **\or \dagger\dagger
    \or \ddagger\ddagger \else\@ctrerr\fi}}
\newcommand*{\Ck}{\mathbb{C}^k}
\begin{document}

\title{\LARGE Infinite-Dimensional Programmable \protect\\ Quantum Processors}

\author[1,2]{Martina Gschwendtner\thanks{martina.gschwendtner@tum.de}}
\author[3,4]{Andreas Winter\thanks{andreas.winter@uab.cat}}
\renewcommand\Affilfont{\fontsize{10}{11.8}}
\affil[1]{\footnotesize Munich Center for Quantum Science and Technology (MCQST), 80799 M\"unchen, Germany}
\affil[2]{\footnotesize Zentrum Mathematik, Technical University of Munich, 85748 Garching, Germany}
\affil[3]{\footnotesize Instituci\'o Catalana de Recerca i Estudis Avan\c{c}ats (ICREA),\protect\\ Pg.~Lluis Companys, 23, 08001 Barcelona, Spain}
\affil[4]{\footnotesize Grup d'Informaci\'{o} Qu\`{a}ntica, Departament de F\'{\i}sica,\protect\\ Universitat Aut\`{o}noma de Barcelona, 08193 Bellaterra (Barcelona), Spain}

\date{\small 13 July 2021}

\maketitle

\begin{abstract}
A universal programmable quantum processor uses ``program'' quantum states to apply an arbitrary quantum channel to an input state. We generalize the concept of a finite-dimensional programmable quantum processor to infinite dimension assuming an energy constraint on the input and output of the target quantum channels. By proving reductions to and from finite-dimensional processors, we obtain upper and lower bounds on the program dimension required to approximately implement energy-limited quantum channels. 
In particular, we consider the implementation of Gaussian channels. Due to their practical relevance, we investigate the resource requirements for gauge-covariant Gaussian channels. Additionally, we give upper and lower bounds on the program dimension of a processor implementing all Gaussian unitary channels. These lower bounds rely on a direct information-theoretic argument, based on the generalization from finite to infinite dimension of a certain ``replication lemma'' for unitaries.
\end{abstract}

\section{Introduction}

A programmable quantum processor takes an input state and applies a quantum channel to it, controlled by a program state that contains all relevant information for the implementation. This concept, introduced by Nielsen and Chuang~\cite{Nielsen97}, is inspired by the von Neumann architecture of classical computers and universal Turing machines, which postulate a single device operating on data using a ``program'', which ultimately is just another kind of data. 
The main result of Ref.~\cite{Nielsen97} is the \emph{No-Programming Theorem}, stating that a universal programmable quantum processor, i.e., one capable of the implementation of any quantum channel exactly with finite program dimension, does not exist. If one relaxes the requirement to only approximate implementation, a trade-off between the accuracy of the implementation and the size of the program register occurs.

Exact and approximate quantum processors have been extensively studied, with respect to the required resources, which here are identified with the dimension of the program register. Many efforts have been made to obtain optimal scaling in different regimes. Recently, Yang \emph{et al.}~\cite{Renner20} closed the gap for the universal implementation of unitaries, providing essentially the optimal scaling of the program dimension and the accuracy for a given finite-dimensional Hilbert space on which the unitaries act.

Since infinite-dimensional (also known as continuous-variable) systems are fundamental in quantum theory, and are gaining more and more attention in quantum communication~\cite{Weedbrook12,GGCH}, quantum cryptography~\cite{Grosshans-et-al,Leverrier}, and quantum computing~\cite{KLM,BosonSampl,GKP}, here we investigate programmable quantum processors for continuous-variable systems.

\medskip
\emph{Results.} In the present paper, we define programmable quantum processors with infinite-dimensional input and output, assuming an energy constraint. This means that we seek to approximately implement arbitrary energy-limited unitary quantum channels (which are those that map energy-bounded states to energy-bounded states) with finite program dimension, the accuracy of the implementation judged using the energy-constrained diamond norm~\cite{Shirokov18,Winter17} (cf.~Refs.~\cite{Lupo17,Pirandola17} for a closely related definition) as distance measure. After Definition \ref{epsilonEPQP}, we argue the intrinsic necessity of both the use of the energy-constrained diamond norm and energy limitation of the channels. 

Our first group of main results concerns the resource requirements, determining upper and lower bounds on the program dimension. We achieve this by relating the performance of infinite-dimensional approximate programmable quantum processors to that of their finite-dimensional counterparts in Theorems~\ref{maintheorem} and~\ref{thm:constructionforlowerbounds}. In Theorem~\ref{maintheorem}, we construct an infinite-dimensional programmable quantum processor assuming an existing finite-dimensional one whereas Theorem~\ref{thm:constructionforlowerbounds} states that a finite-dimensional programmable quantum processor can be constructed assuming an existing infinite-dimensional one. We establish thus a fundamental link between discrete and continuous quantum systems in the context of programmable quantum processors. These results allow us to import known upper and lower bounds on the program register from finite-dimensional processors. The upper bounds we obtain are summarized in Table~\ref{tab:finiteupperbounds} whereas the lower bounds can be found in Table~\ref{tab:finitelowerbounds}.

The case of unitary channels, in the exact case covered by Nielsen and Chuang's No-Programming Theorem, was analyzed by Yang \emph{et al.}~\cite{Renner20} for the approximate implementation and using an information-theoretic approach. In Lemma \ref{lemma:E-Yang} we generalize their central tool, the recycling of the program register to implement the same unitary multiple times, to infinite-dimensional unitaries and energy-constrained diamond norm approximation. To do so, we also introduce a multiply energy-constrained diamond norm with several simultaneous constraints rather than a single one, in Eq.~\eqref{eq:multiply-E-diamond}. At the end of Section \ref{sec:recycling}, we outline what Yang \emph{et al.}'s information-theoretic lower bound strategy looks like in infinite dimension. The approach of lower bounding the program dimension by relating the Holevo information of program and (compressed) output ensembles is then used in the subsequent consideration of various classes of Gaussian Bosonic channels.

Looking beyond fully universal processors, we furthermore study the approximate implementation of energy-limited gauge-covariant Gaussian channels of a single Bosonic mode. Gauge-covariant channels play an important role physically, since they describe the consequences of attenuation of signals and addition of noise in communication schemes. Furthermore, they preserve the class of thermal Gaussian states~\cite{Weedbrook12}.
Using $\epsilon$-net constructions, we provide upper bounds on the program dimension in Theorem~\ref{thm:upperboundsgaugecovariant}, which diverge with the accuracy of implementation. Due to the energy limitation, our lower bounds for gauge-covariant channels rely only on lower bounds for the implementation of the attenuator channels and of the phase unitaries. The concrete bounds are stated in Theorems~\ref{thm:Gaussianlowerbounds} for the phase unitaries and Theorem \ref{thm:attenuator-lowerbounds} for attenuators, the former diverging with the implementation accuracy.

Another case study is the universal implementation of energy-limited Gaussian unitaries of any finite number of Bosonic modes. We show that there exists an infinite-dimensional programmable quantum processor whose program register can be upper bounded in terms of the approximation error $\epsilon$ in Theorem~\ref{thm:upperboundsGaussianunitary}. The upper bound we obtain, using an $\epsilon$-net construction, diverges as inverse polynomial in $\epsilon$.
Due to the phase unitaries treated earlier, we show that the program register of every inifinte-dimensional programmable quantum processor can be lower bounded by a term diverging with the inverse of $\epsilon$ in Theorem~\ref{thm:Gaussianlowerbounds-M-modes} with a different power, though.

\medskip
\emph{Context.} Nonuniversal programmable quantum processors, i.e., those implementing (either exactly or approximately) only a prescribed set of, rather than all, quantum channels between two given systems have occurred in various other contexts different from quantum computing, where channel simulation can help reduce the analysis of quantum channels to that of the corresponding program state(s). For example, the quantum and private capacities of a quantum channel, assisted by local operations and classical communication (LOCC), are the same as those of its Choi state, if the channel can be implemented by a processor built from LOCC elements with the Choi state as the program~\cite{Bennett96, Pirandola17}. Similarly, if a family of channels can be implemented by the same processor using their Choi states as programs, adaptive strategies to distinguish or estimate the channels cannot be better than nonadaptive ones, for any finite number of channel uses and asymptotically, and in fact boil down to the much better-understood discrimination of the Choi states~\cite{Lupo17,Takeoka16}, see also Refs.~\cite{Pirandola-et-al-2018,Berta18}. 

\medskip
\emph{Structure.}
In Section \ref{sec:prelim} we start by introducing basic notions for infinite-dimensional quantum information; 
in Section \ref{section2} we prove two reductions, first of an infinite-dimensional universal programmable quantum processor for energy-limited channels (unitaries) to a finite-dimensional one for arbitrary channels (unitaries), and second vice versa, and can thus import known program dimension bounds (upper and lower) from the finite-dimensional world; 
in Section \ref{sec:recycling} we prove
that a processor implementing unitaries approximately can implement them repeatedly by reusing the same program successively, and sketch how this leads to information-theoretic lower bounds on program registers (cf. Yang \emph{et al.}~\cite{Renner20});
in Section \ref{sec:gaussian} we turn our attention to quantum processors for gauge-covariant one-mode Gaussian channels, on the one hand, and multimode Gaussian unitaries on the other, proving upper and lower bounds on the program dimension in both settings; 
in Section \ref{sec:conclusion} we conclude, discussing future directions and open questions.

\section{Notation and preliminaries}
\label{sec:prelim}
Let $\cH$ be a separable Hilbert space and denote $\cT (\cH)$ the Banach space of all trace-class operators acting on $\cH$, equipped with the trace norm $\| A \|_1 = \tr \sqrt{A^\ast A}$, 
\begin{equation}
  \cT (\cH) = \{ A: \cH \to \cH \text{ such that } \|A \|_1 < \infty \}.
\end{equation}
The set of quantum states, which we identify with density operators (positive semidefinite operators in $\cT (\cH)$ with unit trace) is denoted $\cD(\cH)$, and the subset of pure quantum states by $\cD_P(\cH)$. The set of all quantum channels is denoted $\text{CPTP}(\cH_1,\cH_2)$.
A quantum channel is a completely positive (CP) and trace-preserving (TP) linear map (superoperator) $\Phi : \cT(\cH_1) \to \cT(\cH_2)$ for separable Hilbert spaces $\cH_1$ and $\cH_2$. 
It can equivalently be described by its adjoint map on the dual spaces of bounded operators, $\Phi^\ast:\cB(\cH_2)\to \cB(\cH_1)$, which is a completely positive and unit preserving (CPUP) map characterized uniquely by the duality relation
\[
  \tr \Phi(\alpha)B = \tr\alpha\Phi^\ast(B) \quad\forall\alpha\in \cT(\cH_1),\, B\in \cB(\cH_2).
\]

Let $H$ be a positive semidefinite Hamiltonian $H\geq 0$ with discrete spectrum. We denote its spectral decomposition as $H=\sum_{n=0}^\infty e_n P_n$, in the sense of the spectral theorem for unbounded self-adjoint operators, see Refs.~\cite{vonNeumann:book, Hall:QM-x-Mathematicians}: it amounts to  convergence of $\sum_n e_n \bra{\phi}P_n\ket{\phi'}$ to $\bra{\phi}H\ket{\phi'}$ for all $\ket{\phi},\ket{\phi'}$ in the domain of $H$. We additionally assume finite degeneracy of all eigenvalues, i.e., $\dim P_n<\infty$ for all $n$ (we call such Hamiltonians \emph{finitary}). 
In the following, we always assume that $0$ is the smallest eigenvalue of $H$ (such Hamiltonians are called \emph{grounded}). 

Throughout this paper, we consider energy constraints on the input and output, of the form $\tr \rho H_1 \leq E_1$ and $\tr \Phi(\rho) H_2 \leq E_2$, respectively, where $H_1\geq 0$ is a finitary and grounded Hamiltonian as above (i.e., with discrete spectrum, smallest eigenvalue $0$ and finite degeneracy of every eigenvalue) acting on $\cH_1$, and $H_2 \geq 0$ likewise on $\cH_2$. 
Thus, let us define energy-limited quantum channels, which map energy-bounded states to energy-bounded states.

\begin{definition}[{$(\alpha, \beta)$-energy-limited quantum channel~\cite{Winter17}}]
\mylabel{definitionenergylimitedchannel}
Given two positive semidefinite Hamiltonians $H_1\geq 0$ and $H_2\geq 0$ on $\cH_1$ and $\cH_2$, respectively, a quantum channel $\Phi:\cT(\cH_1)\to \cT(\cH_2)$ is called \emph{($\alpha, \beta $)-energy-limited} if for all
$\rho\in\cD(\cH_1)$ with $\tr\rho H_1 < \infty$, it holds $\tr\Phi(\rho)H_2 < \infty$, and in fact
\begin{equation}
  \tr \Phi(\rho) H_2 \leq \alpha \tr \rho H_1 + \beta.
\end{equation}
This can be expressed equivalently as
$\Phi^\ast(H_2) \leq \alpha H_1 + \beta\mathds{1}$, 
using the adjoint cptp map $\Phi^\ast:\cB(\cH_2)\to \cB(\cH_1)$, which however has to be interpreted suitably, given that the adjoint map is a priori only defined on bounded operators and the same for the semidefinite operator order -- see the following explanation.
\end{definition}

Since we have the spectral decomposition of $H_2 = \int_0^\infty e\,{\rm d}P(e)$, with the spectral measure ${\rm d}P$ on $[0;\infty)$, we can define a positive operator-valued measure (POVM) ${\rm d}M$ on $[0;\infty)$ by letting $M(I) = \Phi^*(P(I))$ for any measurable set $I \subset [0;\infty)$, and let $\Phi^*(H_2) = \int_0^\infty e\,{\rm d}M(e)$, assuming that there is a dense domain on which the convergence with respect to test vectors holds. Equivalently, we can consider the bounded operators $H_2(E) := \int_0^E e\,{\rm d}P(e) + E P\bigl((E;\infty)\bigr)$, which form an increasing family that converges to $H_2$ -- as usual, in the weak operator sense with respect to pairs of test vectors from the domain of $H_2$; then, $\Phi^*(H_2(E))$ is an increasing family of bounded positive semidefinite operators on $\cH_1$, whose limit, if it exists, is called $\Phi^*(H_2)$. Note that in either way, the definition makes sure that the result, if densely defined, is an essentially self-adjoint operator.

As for the operator order, $A\leq B$ for bounded operators means that $B-A$ is positive semidefinite. For unbounded $A$ and $B$, we only define it if both operators are self-adjoint, in particular with dense domains $D(A)$ and $D(B)$, respectively. Then, $A\leq B$ is defined as meaning $D(A) \supset D(B)$ and $\bra{\psi}A\ket{\psi} \leq \bra{\psi}B\ket{\psi}$ for all $\ket{\psi}\in D(B)$. Notice that we employ this notation for a positive semidefinite $B$, where $\ket{\psi}\not\in D(B)$ amounts to $\bra{\psi}B\ket{\psi}=+\infty$, thus automatically fulfilling the inequality also in that case.

\medskip
So as to allow trivial channels, such as the identity and the constant channel mapping every input state to the ground state of the output space, we always assume $\alpha\geq 1$ and $\beta\geq 0$. 

The diamond norm is a well-motivated metric to differentiate quantum channels in terms of their statistical distinguishability. Since we consider energy constraints on the channel input, we are motivated to use the energy-constrained diamond norm instead.

\begin{definition}[{Energy-constrained diamond norm~\cite{Winter17,Shirokov18}}]
\label{energydiamondnorm}
Let $H_1\geq 0$ be a grounded Hamiltonian on $\cH_1$, and $E>0$. For a Hermitian-preserving map $\Phi : \cT (\cH_1) \to \cT (\cH_2)$, define the energy-constrained diamond norm (more precisely: $E$-constrained diamond norm) as
\begin{equation}
\| \Phi \|_\diamond^{E} = \sup_{\cH_R} \sup_{\substack{\rho \in \cD (\cH_1\otimes \cH_R)\\  \tr \rho ( H_1 \otimes \mathds{1}_R) \leq E}} \| (\Phi \otimes \operatorname{id}_R)\rho\|_1.
\end{equation}
\end{definition}

From the definition, one may without loss of generality restrict the test states $\rho\in\cD(\cH_1\otimes\cH_R)$ to be pure states, and $\cH_R$ may be assumed isomorphic to $\cH_1$.
For the trivial Hamiltonian $H_1=0$ this definition reduces to the usual diamond norm. On the other hand, Shirokov~\cite{Shirokov18} showed for our class of finitary and grounded Hamiltonians, that the energy-constrained diamond norms $\|\cdot\|_\diamond^E$ all induce the strong topology. 
For more details and properties of the energy-constrained diamond norm we refer to Refs.~\cite{Winter17,Shirokov18}. 
Note that a slightly different notion of energy-constrained diamond norm was considered earlier in Refs.~\cite{Lupo17} and~\cite{Pirandola17}. 

In the following, we adapt the definition of an \emph{$\epsilon$-universal programmable quantum processor ($\epsilon$-UPQP)}, originally discussed for finite-dimensional systems, to infinite dimension. 

\begin{definition}[$\epsilon\text{-PQP}_{\cC}$, cf. Ref.~\cite{Kubicki19}]
\label{epsilon-PQP}
Let $\cH_1$ and $\cH_2$ be separable Hilbert spaces. Then, we call $\cP \in \text{CPTP}(\cH_1 \otimes \cH_P,\cH_2)$, with finite-dimensional $\cH_P$, an \emph{$\epsilon$-programmable quantum processor for a set $\cC \subset \text{CPTP}(\cH_1,\cH_2)$} of channels ($\epsilon\text{-PQP}_{\cC}$), if for every cptp map $\Phi \in \cC$ there exists a state $\pi_\Phi \in \cD(\cH_P)$ such that
\begin{equation}
\frac12 \| \cP(\cdot \otimes \pi_\Phi)
           - \Phi(\cdot) \|_\diamond \leq \epsilon.
\end{equation}
To address the Hilbert spaces $\cH_1$, $\cH_2$ and $\cH_P$, we refer to the former two as the \emph{input} and \emph{output registers}, to the latter as \emph{program register}. 
We say that the processor $\cP$ $\epsilon$-\emph{implements} the class $\cC$ of channels, leaving out the reference to $\epsilon$ when it is $0$.
\end{definition}

When $\cC = \text{CPTP}(\cH_1,\cH_2)$, we call the processor \emph{universal} and denote it as $\epsilon\text{-PQP}_\text{CPTP}$. 
Note that allowing mixed states in the program register is essential, since it allows for example the programming of all depolarizing channels using a qubit $\cH_P$. 
On the other hand, we can always replace a mixed program state $\pi_\Phi$ by a suitable purification on  $\cH_P\otimes\cH_P$ to obtain a pure program state at the expense of increasing (squaring) the program dimension, and accordingly modify the processor to one acting on $\cH_1\otimes\cH_P^{\otimes 2}$.
Another important special case, that has been considered before, is that $\cH_1=\cH_2=\cH$ is a finite-dimensional Hilbert space, and $\cC = \cU(\cH)$ the set of all unitaries, or rather the channels defined by conjugation with unitaries, which has been addressed as ``universal'' in the literature, but which -- in view of the restriction to unitaries -- we want to call \emph{unitary-universal} and denote $\epsilon\text{-PQP}_\cU$; 
in the literature this class is referred to as $\epsilon$-UPQP, a ``universal'' programmable quantum processor. We consider a $d_1$-dimensional input space, a $d_2$-dimensional output space and an $d_P^\infty$-dimensional program register. In the unitary case, the program states are customarily assumed to be pure, in accordance with the previous literature. 

\medskip
For the exact case, i.e., $\epsilon$=0, Nielsen and Chuang proved the No-Programming Theorem~\cite{Nielsen97}. They consider a processor which implements the set $\cC$ of the channels generated by conjugation with unitaries $U_1, \ldots, U_N$ perfectly.

\begin{theorem}[No-Programming~\cite{Nielsen97}] \label{NoPro}
Let $U_1, \ldots , U_N \in \cU(\mathbb{C}^d)$ be distinct unitary operators (up to a global phase) which are implemented by some programmable quantum processor. Then,
\begin{enumerate}
\item the corresponding programs $\ket{P_1}, \ldots, \ket{P_N}$, which are states of the program register $\cH_P$, are mutually orthogonal; 
\item the program register is at least $N$-dimensional, or in other words, it contains at least $\log_2 N$ qubits.
\end{enumerate}
\end{theorem}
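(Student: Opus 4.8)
The plan is to work with a Stinespring dilation of the processor and exploit the rigidity that \emph{exact} implementation of unitaries imposes on it. Let $V:\cH_1\otimes\cH_P\to\cH_2\otimes\cH_E$ be an isometry dilating $\cP$, so that $\cP(\rho)=\tr_E(V\rho V^\dagger)$ for an environment space $\cH_E$. The hypothesis that the processor implements $U_i$ exactly means that for every unit input vector $\ket{\psi}\in\mathbb{C}^d$,
\[
  \tr_E\bigl(V(\ket{\psi}\otimes\ket{P_i})(\bra{\psi}\otimes\bra{P_i})V^\dagger\bigr)=U_i\proj{\psi}U_i^\dagger.
\]
The right-hand side is pure, and a bipartite pure state whose partial trace is pure must factorize; hence $V(\ket{\psi}\otimes\ket{P_i})=(U_i\ket{\psi})\otimes\ket{\eta_i(\psi)}$ for some unit environment vector $\ket{\eta_i(\psi)}$.

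The first key step is to remove the dependence of $\ket{\eta_i(\psi)}$ on $\ket{\psi}$. Because $V$ is linear while the right-hand side is a product, I would expand $V$ on a superposition $a\ket{\psi_1}+b\ket{\psi_2}$ and match coefficients against the linearly independent vectors $U_i\ket{\psi_1},U_i\ket{\psi_2}$; this forces $\ket{\eta_i(\psi_1)}=\ket{\eta_i(\psi_2)}$. Thus there is a single environment vector $\ket{\eta_i}$, independent of the input, with $V(\ket{\psi}\otimes\ket{P_i})=(U_i\ket{\psi})\otimes\ket{\eta_i}$ for all $\ket{\psi}$.

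The second key step uses that $V$ preserves inner products, i.e.\ $V^\dagger V=\mathds{1}$. For two programs $\ket{P_i},\ket{P_j}$ and any unit vector $\ket{\psi}$,
\[
  \braket{P_i}{P_j}=(\bra{\psi}\otimes\bra{P_i})V^\dagger V(\ket{\psi}\otimes\ket{P_j})=\bra{\psi}U_i^\dagger U_j\ket{\psi}\,\braket{\eta_i}{\eta_j}.
\]
Here $\braket{P_i}{P_j}$ and $\braket{\eta_i}{\eta_j}$ are constants, so unless $\braket{\eta_i}{\eta_j}=0$ the quadratic form $\bra{\psi}U_i^\dagger U_j\ket{\psi}$ would be independent of $\ket{\psi}$. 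But $\bra{\psi}A\ket{\psi}$ is constant on unit vectors exactly when $A$ is a scalar multiple of the identity, which for $A=U_i^\dagger U_j$ would mean $U_i$ and $U_j$ coincide up to a global phase --- excluded by hypothesis. Hence $\braket{\eta_i}{\eta_j}=0$, and therefore $\braket{P_i}{P_j}=0$, proving part~(i). Part~(ii) is then immediate: $N$ mutually orthogonal vectors require $\dim\cH_P\geq N$, equivalently at least $\log_2 N$ qubits.

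I expect the main obstacle to be the input-independence of the environment vector in the first key step: a naive reading of the purity argument leaves $\ket{\eta_i}$ potentially $\psi$-dependent, and it is precisely the linearity of the isometry $V$ that rules this out. Once that is secured, the remainder is bookkeeping with the relation $V^\dagger V=\mathds{1}$ together with the elementary fact that a complex quadratic form is constant on the unit sphere only if its operator is a multiple of the identity.
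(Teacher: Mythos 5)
Your proof is correct. Note that the paper does not actually prove this theorem itself: it cites Nielsen and Chuang's original argument, which is formulated for a processor that is a fixed \emph{unitary} gate array $G$ acting on data and program, so that $G(\ket{\psi}\otimes\ket{P_i})=(U_i\ket{\psi})\otimes\ket{P_i'}$ with a residual pure program state; they first show $\ket{P_i'}$ is independent of the data (via an inner-product chain argument), then compare two programs exactly as you do in your second step. Your route differs in a useful way: you start from the paper's actual Definition of a processor as an arbitrary CPTP map $\cP\in\text{CPTP}(\cH_1\otimes\cH_P,\cH_2)$, pass to a Stinespring isometry $V$, derive the product structure of the output from purity of the partial trace, and establish input-independence of the environment vector by coefficient matching against linearly independent vectors $U_i\ket{\psi_1},U_i\ket{\psi_2}$ --- rather than assuming a unitary processor with the program register surviving as output. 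The closing step, constancy of $\bra{\psi}U_i^\dagger U_j\ket{\psi}$ on the unit sphere forcing $U_i^\dagger U_j\propto\mathds{1}$, is identical in both arguments. What your version buys is twofold: it matches the generality of the processor model actually used in this paper (the Stinespring environment playing the role that the residual program register plays for Nielsen and Chuang), and since Stinespring dilations, the linearity argument, and the polarization step all survive in infinite dimension, it substantiates in one stroke the paper's remark that the proof ``extends to separable Hilbert spaces.''
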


\noindent While the theorem was originally proved for finite-dimensional $\cH$, it is not difficult to see that its proof extends to separable Hilbert spaces. 

This result shows that no exact universal quantum processor with finite-dimensional program register exists. Indeed, every unitary operation which is implemented by the processor requires an extra dimension of the program Hilbert space. Since there are infinitely, in fact uncountably, many unitary operations, no universal quantum processor with finite-dimensional (or indeed separable Hilbert) space exists. Note that with a separable Hilbert space, one can however approximate every unitary channel arbitrarily well, by choosing $\cC$ as a countable, dense set of unitaries (see the following construction). On the other hand, using a $d$-dimensional program register, up to $d$ unitary operations, which are distinct up to a global phase, can be implemented by a series of controlled unitary operations.
Therefore, we are interested in $\epsilon>0$ and in particular how the program register depends on the accuracy $\epsilon$.

Note that if $\cC = \{\Phi_i \}_{i=1}^K$ is a finite set of channels, then we can construct the processor that implements those channels exactly with memory dimension equal to the cardinality $K$ of the set in the following way. We encode the specifying index $i$ from the set of channels into the program state $\pi_\Phi$. The following processor implements the channels  $\Phi_{i}$ from the set $\cC$ exactly: 
\begin{equation}
\cP (\rho \otimes \pi_\Phi)
  \assign \sum_{i} \Phi_{i} (\rho) \bra {i}\pi_\Phi \ket {i},
\end{equation}
which clearly satisfies
\begin{equation}
\cP (\rho \otimes \proj{i}) = \Phi_{i} (\rho).
\end{equation}
The program dimension $d_P^\infty$ is equal to the cardinality $K$ of the set $\cC$, thus meeting the lower bound from the Nielsen/Chuang No-Programming Theorem \ref{NoPro} (for unitary channels). Since we use this construction several times throughout the paper, we refer to it as the \emph{processor-encoding technique (PET)}.

In the present paper we are interested in programmable quantum processors that $\epsilon$-implement $(\alpha,\beta)$-energy-limited quantum channels between infinite-dimensional systems $\cH_1$ and $\cH_2$ with Hamiltonians $H_1$ and $H_2$, respectively. However, in this case we do not measure the error by the diamond norm, but by the energy-constrained diamond norm:

\begin{definition}[$\epsilon\text{-EPQP}_\cC$]
\label{epsilonEPQP}
Let $\cH_1$ and $\cH_2$ be separable Hilbert spaces and consider a class of quantum channels $\cC \subset \text{CPTP}(\cH_1,\cH_2)$. 
A quantum operation $\cP \in \text{CPTP} (\cH_1 \otimes \cH_P,\cH_2)$ is called an \emph{$\epsilon$-approximate energy-constrained programmable quantum processor for $\cC$ ($\epsilon\text{-EPQP}_\cC$)} if for all $\Phi \in \cC$ there exists a state $\pi_\Phi \in \cD(\cH_P)$ such that
\begin{equation}
\label{eq:epsilonEPQP}
\frac12 \| \cP(\cdot \otimes \pi_\Phi) - \Phi \|_{\diamond}^{E} \leq \epsilon.
\end{equation}
\end{definition}

We denote the dimension of the program register of an $\epsilon\text{-EPQP}_\cC$ as $d_P^\infty$. We consider only classes $\cC$ of cptp maps that are $(\alpha,\beta)$-energy-limited for some $\alpha\geq 1$, $\beta \geq 0$. Important classes treated next are the following: the set of all  $(\alpha,\beta)$-energy-limited quantum channels from $\cH_1$ to $\cH_2$ is denoted $\cL(\alpha, \beta)$. For $\cH_1=\cH_2$, the set of $(\alpha,\beta)$-energy-limited unitary channels is denoted $\cU(\alpha,\beta)$. In a Bosonic system of a single Bosonic mode, we look at the set $\cG\cC\cG(\alpha,\beta)$ of all $(\alpha, \beta)$-energy-limited gauge-covariant Gaussian channels; finally, for a general (multimode) Bosonic system, we denote the set of all $(\alpha, \beta)$-energy-limited Gaussian unitary channels as $\cG\cU(\alpha, \beta)$.

While the No-Programming Theorem \ref{NoPro} is not directly applicable to an $\epsilon\text{-EPQP}_{\cU(\alpha,\beta)}$, we can reduce the case of infinite dimension to the finite-dimensional case, to rule out the existence of perfect universal programmable quantum processors, by considering only unitaries of the form $U=U_0+\sum_{n=2}^\infty P_n$, where $U_0$ is an arbitrary unitary on the support $\cH_1 \subset \cH$ of $P_0+P_1$, i.e., $U_0 U_0^\ast = U_0^\ast U_0 = P_0+P_1$, which are $(\alpha,\beta)$-energy-limited for all $\alpha\geq 1$, $\beta \geq \epsilon_1-\epsilon_0$. Namely, a $0$-EPQP would imply a $0$-UPQP for input register $\cH_1$, and since the latter is a finite-dimensional space, Theorem \ref{NoPro} applies. 

Before launching into the rest of the paper, where we derive results about various $\epsilon\text{-EPQP}_\cC$, it is worth pausing to regard the definition, and to justify why we use the energy-constrained diamond norm and why we restrict our attention to $(\alpha,\beta)$-energy-limited channels. In fact, it turns out that dropping either restriction results in simple no-go theorems ruling out finite-dimensional program registers.

\begin{remark}[Unsuitability of the usual diamond norm]
\label{rem:no-diamond}
Fix a direct sum decomposition of a closed subspace $\cH_0$ of $\cH$ into orthogonal subspaces, $\cH_0 = \bigoplus_{n}^\infty \cH_n$, where $\dim \cH_n = n$, and consider $\cC = \bigcup_n \cU(\cH_n) \subset \cU(\cH)$, where we regard $\cU(\cH_n)$ as a subset of $\cU(\cH)$ by identifying each unitary $U\in\cU(\cH_n)$ with $U \oplus (\1-UU^*) \in \cU(\cH)$.
An $\epsilon\text{-PQP}_\cC$ would effectively serve as an $\epsilon\text{-PQP}_{\cU(\cH_n)}$ for the unitaries on each of the finite-dimensional spaces $\cH_n$. Previous results~\cite{Majenz, Kubicki19, Renner20}, then show lower bounds on $d_P$ that diverge as a function of $n$, showing that $d_P^\infty$ must be infinite. The energy-constrained diamond norm, besides inducing the more natural strong topology on the cptp maps, avoids this problem. 
Note that if the subspaces $\cH_n$ are contained in eigenspaces of the Hamiltonian $H\geq 0$, the class $\cC$ even consist of $(1,0)$-energy-limited unitary channels.

The same argument also shows why we have to use a \emph{finitary} Hamiltonian for the $E$-constrained diamond norm. Indeed, for a nonfinitary Hamiltonian $H\geq 0$ there exists an energy $E$ such that the subspace $\cH' \subset \cH$ spanned by eigenstates of energy $\leq E$ has infinite dimension. Then, the class of unitaries $\cC=\cU(\cH') \subset \cU(\cH)$ is $(1,E)$-energy-limited, and restricted to $\cH'$, the $E$-constrained diamond norm equals the unconstrained diamond norm. This includes all Hamiltonians with continuous or partially continuous spectrum.
\end{remark}

\begin{remark}[Triviality of energy-unlimited quantum channels]
\label{rem:no-unlimited}
The considerations in the preceding remark should have convinced us that we need to consider Definition \ref{epsilonEPQP} with a finitary Hamiltonian $H_1$ on $\cH_1$. Now let us consider how allowing energy-unlimited channels would similarly trivialize our quest for finite program registers. 
Namely, for every eigenstate $\ket{\phi_n}$ of $H_2$ with energy $e_n$ there exists a well-defined cptp map $\Phi_n$ that maps all of $\cD(\cH_1)$ to $\proj{\phi_n}$. A processor that can $\epsilon$-implement any of the $\Phi_n$, even with respect to $\|\cdot\|_\diamond^{E}$, can approximately prepare an arbitrary eigenstate $\proj{\phi_n} \in\cD(\cH_2)$, with respect to the trace norm, from the ground state of $\cH_1$, so it is intuitively clear that it requires an infinite-dimensional program register. This can be made precise using the information-theoretic lower bound method explained at the end of Section \ref{sec:recycling}.
\end{remark}

\section{Dimension bounds for energy-limited \\programmable quantum processors}
\mylabel{section2}

In the following, we focus on the resources the processor requires to approximately implement all $(\alpha, \beta)$-energy-limited unitary channels $\cU(\alpha, \beta)$. 
To obtain upper bounds on the dimension of the program register in Subsection~\ref{subsec:upperbounds}, we present a construction method based on an existing $\epsilon\text{-PQP}_\cU$. This can be seen as an extension of a finite-dimensional programmable quantum processor to infinite dimension. 

\medskip
A technical lemma we use in the proof is the gentle operator lemma, which shows that a measurement with a highly likely outcome can be performed with little disturbance to the measured quantum state.

\begin{lemma}[{Gentle operator~\cite[Lemma 9]{Winter99},~\cite[Lemma 5]{Ogawa02},~\cite[Lemma~9.4.2]{Wilde13}}] 
\mylabel{gentleoperator}
Let $\rho \in \cD(\cH)$ and $T$ be a measurement operator with $0 \leq T \leq \mathds{1}$.  Suppose that $T$ has a high probability of detecting $\rho$, i.e., $\tr \rho T \geq 1 - \kappa$, with $\kappa \in [0,1]$. Then,
\begin{equation}
  \left\| \rho - \sqrt{T} \rho \sqrt{T} \right\|_1 \leq 2 \sqrt{\kappa} .
\end{equation}
\end{lemma}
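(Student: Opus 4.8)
The plan is to avoid any reduction to pure states and instead work directly with the general density operator $\rho$, exploiting that $T$ is bounded while $\rho$ is trace class. First I would split the difference symmetrically by inserting and removing $\sqrt{T}\rho$:
\begin{equation}
\rho - \sqrt{T}\rho\sqrt{T} = (\mathds{1}-\sqrt{T})\rho + \sqrt{T}\rho(\mathds{1}-\sqrt{T}).
\end{equation}
Applying the triangle inequality for the trace norm, the submultiplicativity bound $\|\sqrt{T}\,A\|_1 \leq \|\sqrt{T}\|_\infty\,\|A\|_1 \leq \|A\|_1$ (since $\|\sqrt{T}\|_\infty \leq 1$ because $T\leq\mathds{1}$), and the invariance of $\|\cdot\|_1$ under taking adjoints, I would reduce the task to controlling a single quantity,
\begin{equation}
\left\| \rho - \sqrt{T}\rho\sqrt{T} \right\|_1 \leq 2 \left\| (\mathds{1}-\sqrt{T})\rho \right\|_1.
\end{equation}

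The second step is to estimate $\|(\mathds{1}-\sqrt{T})\rho\|_1$ by factoring $\rho = \sqrt{\rho}\sqrt{\rho}$ and invoking the Hölder inequality for Schatten norms, $\|AB\|_1 \leq \|A\|_2\,\|B\|_2$, with $A=(\mathds{1}-\sqrt{T})\sqrt{\rho}$ and $B=\sqrt{\rho}$. As $\|\sqrt{\rho}\|_2 = \sqrt{\tr\rho}=1$, this gives
\begin{equation}
\left\| (\mathds{1}-\sqrt{T})\rho \right\|_1 \leq \left\| (\mathds{1}-\sqrt{T})\sqrt{\rho} \right\|_2 = \sqrt{\tr\bigl[\rho\,(\mathds{1}-\sqrt{T})^2\bigr]}.
\end{equation}

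The crux, and the step I expect to be the main (albeit mild) obstacle, is the operator inequality $(\mathds{1}-\sqrt{T})^2 \leq \mathds{1}-T$, since this is the only place the hypothesis $0\leq T\leq\mathds{1}$ is genuinely used. I would obtain it from functional calculus applied to the scalar inequality $t\leq\sqrt{t}$ on $[0,1]$: expanding $(\mathds{1}-\sqrt{T})^2 = \mathds{1}-2\sqrt{T}+T$ and using $T\leq\sqrt{T}$ gives $\mathds{1}-2\sqrt{T}+T \leq \mathds{1}-2T+T = \mathds{1}-T$. Hence $\tr\bigl[\rho\,(\mathds{1}-\sqrt{T})^2\bigr] \leq \tr[\rho(\mathds{1}-T)] = 1-\tr\rho T \leq \kappa$, and chaining the three displays yields $\|\rho-\sqrt{T}\rho\sqrt{T}\|_1 \leq 2\sqrt{\kappa}$. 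Everything else is routine bookkeeping with Schatten-norm inequalities, which hold verbatim on the separable Hilbert space $\cH$ because $\sqrt{\rho}$ is Hilbert–Schmidt and $\sqrt{T}$ is bounded, so no finite-dimensionality is needed anywhere.
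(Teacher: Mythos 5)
Your proof is correct, and all the infinite-dimensional care (Hölder for Schatten norms, $\sqrt{\rho}$ Hilbert--Schmidt, $\sqrt{T}$ bounded, functional calculus for the inequality $(\mathds{1}-\sqrt{T})^2 \leq \mathds{1}-T$) is properly accounted for. Note that the paper itself gives no proof of this lemma—it imports it from the cited references—so there is nothing internal to compare against; your argument is essentially the standard one in those references (in particular Wilde's Lemma 9.4.2), the only cosmetic difference being that the key operator inequality is usually obtained by factoring $\mathds{1}-T=(\mathds{1}-\sqrt{T})(\mathds{1}+\sqrt{T})$ rather than via $T\leq\sqrt{T}$.
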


In the following theorem, we construct a processor that maps any input state $\rho \in \cD(\cH)$ with a certain energy 
\begin{equation}
\label{Energyassumptioninputstate}
\tr \rho H \leq E,
\end{equation}
approximately to $U \rho U^\ast$, if $U \in \cU(\cH)$ is $(\alpha,\beta)$-energy-limited, using a program register $\cH_P$ of dimension $\dim \cH_P = d_P^\infty < \infty$.  We express the approximation parameter $\gamma$ of our $\gamma\text{-EPQP}_{\cU(\alpha, \beta)}$ in terms of the approximation parameter $\epsilon$ of a given finite-dimensional $\epsilon\text{-PQP}_{\cU}$. This is illustrated in Figure~\ref{fig:expansion}.

\begin{figure}[ht]
\centering
\includegraphics[width=0.6\textwidth]{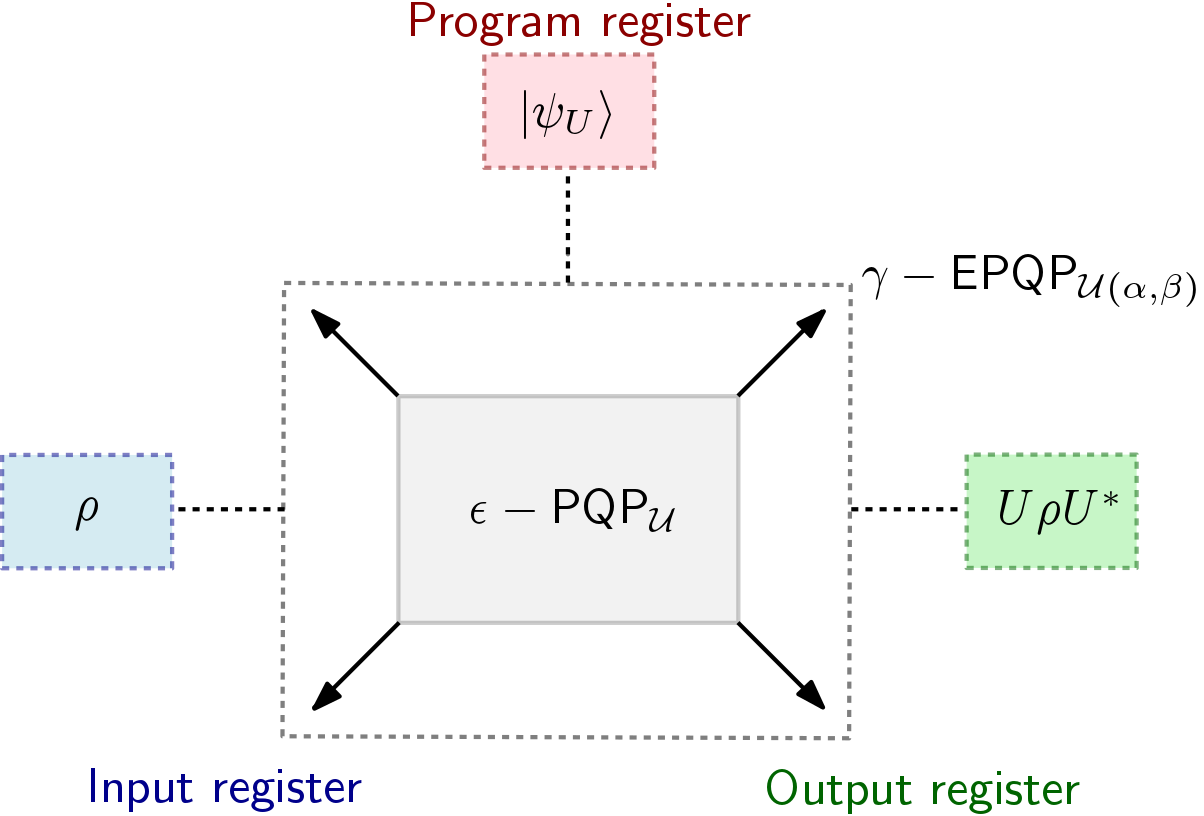}\caption{This is a schematic illustration of the construction method for an $\gamma\text{-EPQP}_{\cU(\alpha, \beta)}$ based on an $\epsilon\text{-PQP}_\cU$. The parts of the figure that we construct in the proof are represented by dashed lines, the parts we assume to exist by regular lines.}\label{fig:expansion}
\end{figure}

To obtain such a processor, we use the energy constraints to approximate the input and output system by a finite-dimensional subspace of $\cH$.

\begin{theorem}
\mylabel{maintheorem}
Let $H \geq 0$ be a finitary and grounded Hamiltonian on the separable Hilbert space $\cH$, and $E >0$. Furthermore, let $\epsilon>0$ and $d \assign \operatorname{rank}\{H\leq E/\epsilon^4\}$, the dimension of the subspace of $\cH$ spanned by eigenvectors of $H$ with eigenvalues $\leq\frac{E}{\epsilon^4}$. 
Assume that we have an $\epsilon\text{-PQP}_\cU$ $\cP_d$ with $d$-dimensional input register and program register $\cH_P$. Then, we can construct an infinite-dimensional $\gamma\text{-EPQP}_{\cU(\alpha, \beta)}$ $\cP \in CPTP (\cH \otimes \cH_P,\cH)$ such that for all $(\alpha,\beta)$-energy-limited unitaries $U \in \cU(\cH)$ there exists a unit vector $\ket{\psi_U} \in \cH_P$ such that
\begin{equation}
\mylabel{ThmEPQP}
\frac{1}{2}\| \cP (\cdot \otimes \proj{\psi_U}) - U (\cdot) U^\ast \|_\diamond^{E} \leq \gamma \assign 4.5 \epsilon\left(\alpha +\frac{\beta}{E}\right).
\end{equation}
\end{theorem}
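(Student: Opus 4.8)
The goal is to build an infinite-dimensional processor $\cP$ from a finite-dimensional $\epsilon\text{-PQP}_\cU$ $\cP_d$, exploiting the energy constraint to reduce the effectively infinite-dimensional problem to a $d$-dimensional one. The plan is as follows. Let $\Pi$ denote the projection onto the $d$-dimensional subspace $\cH_d$ spanned by eigenvectors of $H$ with eigenvalue $\leq E/\epsilon^4$. First I would observe that any input state $\rho$ with $\tr\rho H\leq E$ is almost entirely supported on $\cH_d$: by Markov's inequality, $\tr\rho(\1-\Pi)\leq \tr\rho H/(E/\epsilon^4)\leq \epsilon^4$. Thus by the gentle operator lemma (Lemma~\ref{gentleoperator}) with $T=\Pi$ and $\kappa=\epsilon^4$, projecting the input into $\cH_d$ disturbs it by at most $2\epsilon^2$ in trace norm. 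The processor $\cP$ I construct will therefore: (i) coherently project the input onto $\cH_d$ (the off-subspace part being routed, say, to the ground state), (ii) run the finite-dimensional processor $\cP_d$ on $\cH_d$ using the same program $\ket{\psi_U}$ that $\cP_d$ uses to implement the $d$-dimensional unitary $U_d:=\Pi U\Pi$ (suitably completed to a genuine unitary on $\cH_d$), and (iii) embed the output back into $\cH$.

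The core of the argument is then a triangle-inequality chain estimating $\tfrac12\|\cP(\cdot\otimes\proj{\psi_U})-U(\cdot)U^\ast\|_\diamond^E$. I would pass from $U(\cdot)U^\ast$ to its compression $U_d(\cdot)U_d^\ast$ acting on the truncated input, controlling this by the gentle-operator bound above together with the fact that $U$ maps the input energy $E$ to output energy at most $\alpha E+\beta$, so that the output is also concentrated on a low-energy subspace up to an error governed by $(\alpha E+\beta)/(E/\epsilon^4)=\epsilon^4(\alpha+\beta/E)$ — again via Markov and the gentle operator lemma, contributing a term of order $\epsilon^2\sqrt{\alpha+\beta/E}$ or, after taking the stated normalization into account, linear in $\epsilon(\alpha+\beta/E)$. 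The middle step replaces $U_d(\cdot)U_d^\ast$ by $\cP_d(\cdot\otimes\proj{\psi_U})$ on $\cH_d$, which costs exactly $2\epsilon$ by the defining property of the $\epsilon\text{-PQP}_\cU$ (note the $\tfrac12$ normalization in Definition~\ref{epsilon-PQP} versus Definition~\ref{epsilonEPQP}). Summing the contributions and collecting constants yields the claimed bound $\gamma=4.5\,\epsilon(\alpha+\beta/E)$.

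The step I expect to be the main obstacle is controlling the \emph{output}-side truncation error correctly within the \emph{energy-constrained} diamond norm, i.e.\ handling the reference system. Because the energy-constrained diamond norm supremizes over purifying reference systems $\cH_R$, each estimate must be made with an ancilla $\operatorname{id}_R$ attached, and the gentle operator lemma must be applied to the joint state $(\cdot\otimes\operatorname{id}_R)\rho$ on $\cH\otimes\cH_R$; here the energy constraint $\tr\rho(H\otimes\1_R)\leq E$ still forces concentration of the $\cH$-marginal on $\cH_d$, so the projection argument survives tensoring with $R$, but one has to verify that the disturbance bounds are uniform in $\cH_R$. A subtler point is that $U_d=\Pi U\Pi$ is only a contraction, not a unitary, so I must argue that completing it to a unitary on $\cH_d$ changes its action on the truncated input negligibly — again an $\bigoh(\epsilon^2)$ effect traceable to the fact that $U$ nearly preserves $\cH_d$ on the relevant low-energy states. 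Once these two truncation estimates are made rigorous with the reference system in place, the remaining combination of terms is routine bookkeeping of constants.
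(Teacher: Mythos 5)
Your overall architecture --- compress the energy-bounded input to the low-energy subspace, run the given finite-dimensional processor there, and control everything by a Markov/gentle-operator triangle-inequality chain --- is exactly the paper's, and you have correctly located the hard step. But precisely at that step your plan has a genuine gap: with a \emph{single} cutoff $E/\epsilon^4$, the statement you need about ``suitably completing'' $\Pi U \Pi$ to a unitary on $\cH_d$ is not available in the uniform (operator-norm) sense in which you would want to use it. An $(\alpha,\beta)$-energy-limited $U$ may map vectors near the top of $\cH_d$, of energy $\approx E/\epsilon^4$, to energy up to $\alpha E/\epsilon^4+\beta$, i.e.\ entirely outside $\cH_d$; the relevant Markov ratio is $(\alpha E/\epsilon^4+\beta)/(E/\epsilon^4)=\alpha+\beta\epsilon^4/E\geq 1$, which is vacuous. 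Concretely, take $H$ with spectrum $0,1,2,\dots$ and $U$ the permutation of the eigenbasis swapping $\ket{n}\leftrightarrow\ket{2n}$ for every odd $n$: this $U$ is $(2,0)$-energy-limited, yet $\Pi U\Pi$ annihilates a subspace of dimension $\sim d/4$ (the odd-labeled top half of $\cH_d$), so \emph{every} unitary $U_d$ on $\cH_d$ satisfies $\|U_d-\Pi U\Pi\|\geq 1$. Hence your intended verification that ``the disturbance bounds are uniform in $\cH_R$'' cannot proceed from a norm bound on $\cH_d$; closeness of $\Pi U\Pi$ to a unitary holds only in an energy-weighted sense on states of energy $O(E)$, and that is a genuinely different statement which your proposal asserts (``an $\mathrm{O}(\epsilon^2)$ effect'') but supplies no mechanism to prove.

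The paper's proof resolves exactly this point by using \emph{two} nested cutoffs rather than one: with $\delta=\epsilon^2$, the input is compressed onto $\cH_\delta$ (energies $\leq E/\delta$), while the finite-dimensional processor lives on the larger space $\cH_d=\cH_{\delta^2}$ (energies $\leq E/\delta^2=E/\epsilon^4$). Every unit vector of $\cH_\delta$ has energy at most $E/\delta$, so $U$ maps it into $\cH_{\delta^2}$ up to Markov error $\delta(\alpha+\beta/E)$; this gives the operator inequality $P_\delta U^*P_{\delta^2}UP_\delta\geq\bigl(1-\delta(\alpha+\beta/E)\bigr)P_\delta$, whence the polar decomposition $P_{\delta^2}UP_\delta=V_d\sqrt{P_\delta U^*P_{\delta^2}UP_\delta}$ produces an isometry $V_d:\cH_\delta\to\cH_{\delta^2}$ with the \emph{uniform} bound $\|V_d-P_{\delta^2}UP_\delta\|\leq\delta(\alpha+\beta/E)$; extending $V_d$ to a unitary $U_d$ on $\cH_d$ and programming $\cP_d$ for $U_d$ then handles arbitrary reference systems for free, since operator-norm bounds survive tensoring with $\id_R$. (Your one-scale plan could alternatively be rescued by proving, for the polar unitary extension $U_d$, the energy-weighted inequality $(U_d-\Pi U\Pi)^*(U_d-\Pi U\Pi)=(\1-|\Pi U\Pi|)^2\leq \Pi U^*(\1-\Pi)U\Pi\leq\frac{\epsilon^4}{E}\,\Pi(\alpha H+\beta\1)\Pi$ on $\cH_d$, which suffices for inputs of energy $\leq E$ with any reference attached; but this, or the two-scale trick, is the missing core of the argument rather than routine bookkeeping.)
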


\begin{proof}
The construction of the infinite-dimensional processor consists of two components: a compression map that projects down to states on a finite-dimensional subspace $\cH_d$ of $\cH$ spanned by the lowest-lying energy eigenstates, and the application of the given finite-dimensional $\epsilon\text{-PQP}_\cU$ $\cP_d$ to that subspace.

Define $P_\delta$, the projector onto the subspace $\cH_\delta$ spanned by all eigenstates with eigenvalue $\leq E/\delta$, where $\delta \leq 1$. Consider $\cH_{\delta^2}$, which has projector $P_{\delta^2}$ and define the compression map $\cK$ onto $\cH_\delta \subset \cH_{\delta^2}$ as $\cK(\rho) := P_\delta \rho P_\delta + \tr\rho(\1-P_\delta) \proj{0}$, where $\ket{0}$ is a ground state of the Hamiltonian $H$.
Now we can define our infinite-dimensional processor as $\cP = \cP_d\circ(\cK\otimes\id_P)$.

Next, we need to describe how to use the processor $\cP$ to implement an $(\alpha,\beta)$-energy-limited unitary $U\in\cU(\cH)$, namely what is the program state $\ket{\psi_U}$. To do this, consider the polar decomposition of $P_{\delta^2} U P_\delta$, which we can think of as an operator acting on $\cH_{\delta}$, mapping to $\cH_{\delta^2}$:
\begin{equation}
  \label{eq:U-U_d}
  P_{\delta^2} U P_\delta
    = V_d \sqrt{P_\delta U^\ast P_{\delta^2} U P_\delta},
\end{equation}
where $V_d:\cH_{\delta} \to \cH_{\delta^2}$ is consequently an isometry. We obtain $U_d$ as an extension of $V_d$ to a unitary on $\cH_d := \cH_{\delta^2}$. By assumption, $\cP_d$ can implement $U_d$ approximately with error $\leq \epsilon$ in diamond norm, using a certain program state $\ket{\phi_{U_d}}$, and we let $\ket{\psi_U} := \ket{\phi_{U_d}}$.

The rest of the proof is the demonstration that this construction satisfies the claimed approximation quality in $E$-constrained diamond norm. To start with, we show that 
\begin{equation}
  \label{eq:PUP-vs-U_d}
  \left\| V_d - P_{\delta^2} U P_\delta \right\| 
     \leq \delta\left(\alpha+\frac{\beta}{E}\right), 
\end{equation}
both operators in the difference being understood as operators on $\cH_d$, and $\|\cdot\|$ denoting the operator norm. For this, thanks to Eq.~\eqref{eq:U-U_d} it is enough to show 
\begin{equation}
  \label{eq:P_delta-bound}
  \left(1 - \delta\left(\alpha+\frac{\beta}{E}\right) \right)P_\delta \leq P_\delta U^* P_{\delta^2} U P_\delta \leq P_\delta,
\end{equation}
since the square root is operator monotonic, and thus implies
\begin{equation*}
  \left\| P_\delta - \sqrt{P_\delta U^* P_{\delta^2} U P_\delta} \right\| 
     \leq \delta\left(\alpha+\frac{\beta}{E}\right). 
\end{equation*}
The right-hand inequality in Eq.~\eqref{eq:P_delta-bound} follows trivially from $P_{\delta^2} \leq \1$ by conjugation with $U^*$ and $P_\delta$. The left-hand inequality amounts to showing $\bra{\psi} U^* P_{\delta^2} U \ket{\psi} \geq 1 - \delta\left(\alpha+\frac{\beta}{E}\right)$ for all state vectors $\ket{\psi}\in\cH_\delta$. Indeed, $\proj{\psi}$ is supported on the subspace $\cH_\delta$, all of whose state vectors have energy $\leq E/\delta$, in particular $\tr \big( \proj{\psi} H \big ) \leq E/\delta$. By our assumption that $U$ is $(\alpha,\beta)$-energy-limited, this implies $\tr \big( U\proj{\psi}U^* H \big) \leq \alpha E/\delta + \beta$, and thus by Markov's inequality
we get
\[
  \tr \big( U\proj{\psi}U^* (\1-P_{\delta^2}) \big)
      \leq \frac{\alpha E/\delta + \beta}{E/\delta^2}
      \leq \delta\left(\alpha+\frac{\beta}{E}\right),
\]
proving the claim.

To prove the bound from Eq.~\eqref{ThmEPQP}, we consider an arbitrary state $\rho\in\cD(\cH\otimes\mathbb{C}^k)$ with energy bounded by $E$, i.e., $\tr \rho (H \otimes \mathds{1}_{\Ck}) \leq E$. To start, by Markov's inequality this implies $\tr\rho(P_\delta\otimes\1) \geq 1-\delta$, thus by the Gentle Operator Lemma \ref{gentleoperator}, and the triangle inequality,
\begin{equation}
  \label{eq:rho-Krho}
  \bigl\| \rho-(\cK\otimes\id_{\mathbb{C}^k})(\rho) \bigr\|_1 \leq 2\sqrt{\delta}+\delta, 
\end{equation}
and furthermore $(\cK\otimes\id_{\mathbb{C}^k})(\rho)$ is a state on $\cH_d\otimes\mathbb{C}^k$ that has energy bounded by  $\tr(\cK\otimes\id_{\mathbb{C}^k})(\rho)(H \otimes\1) \leq \tr\rho(H \otimes\1) \leq E$.

Now, from the definition of $\cP$ and the processor property of $\cP_d$, we have
\begin{equation}
  \label{eq:Pd-epsilon}
  \bigl\| (\cP\otimes\id_{\mathbb{C}^k})(\rho\otimes\psi_U)
         - (U_d \otimes\1_{\mathbb{C}^k})
           (\cK\otimes\id_{\mathbb{C}^k})(\rho) 
           (U_d \otimes\1_{\mathbb{C}^k})^* \bigr\|_1 \leq 2\epsilon. 
\end{equation}
Noting that $(U_d \otimes\1_{\mathbb{C}^k}) (\cK\otimes\id_{\mathbb{C}^k})(\rho) (U_d \otimes\1_{\mathbb{C}^k})^* = (V_d \otimes\1_{\mathbb{C}^k}) (\cK\otimes\id_{\mathbb{C}^k})(\rho) (V_d \otimes\1_{\mathbb{C}^k})^*$, because $(\cK\otimes\id_{\mathbb{C}^k})(\rho)$ is supported on $\cH_\delta\otimes\mathbb{C}^k$, we furthermore have
\begin{equation}\begin{split}
  \label{eq:V-vs-PUP-with-K}
  &\bigl\| (V_d \otimes\1_{\mathbb{C}^k})
          (\cK\otimes\id_{\mathbb{C}^k})(\rho) 
          (V_d \otimes\1_{\mathbb{C}^k})^* \bigr. \\
  &\phantom{==}
   \bigl. - (P_{\delta^2}UP_\delta \otimes\1_{\mathbb{C}^k})
            (\cK\otimes\id_{\mathbb{C}^k})(\rho) 
            (P_\delta U^*P_{\delta^2} \otimes\1_{\mathbb{C}^k})
            \bigr\|_1 
  \leq 2\delta\left(\alpha+\frac{\beta}{E}\right),
\end{split}\end{equation}
where we invoke Eq.~\eqref{eq:PUP-vs-U_d} twice. 
Continuing, we observe that we can drop the projection $P_\delta$ in the second term inside the norm, because $(\cK\otimes\id_{\mathbb{C}^k})(\rho)$ is supported on $\cH_\delta\otimes\mathbb{C}^k$. Next, by Eq.~\eqref{eq:rho-Krho} we have
\begin{equation}
  \label{eq:rho-Krho-bis}
  \bigl\| (P_{\delta^2}U \otimes\1_{\mathbb{C}^k})
            (\cK\otimes\id_{\mathbb{C}^k})(\rho) 
            (U^*P_{\delta^2} \otimes\1_{\mathbb{C}^k})
          - (P_{\delta^2}U \otimes\1_{\mathbb{C}^k})
            \rho
            (U^*P_{\delta^2} \otimes\1_{\mathbb{C}^k})
            \bigr\|_1 
  \leq 2\sqrt{\delta}+\delta.
\end{equation}
Finally, since $\tr \big((U\otimes\1_{\mathbb{C}^k})\rho(U\otimes\1_{\mathbb{C}^k})^* H \big) \leq \alpha E+\beta$, another application of Markov's inequality and the Gentle Operator Lemma \ref{gentleoperator} yields
\begin{equation}
  \label{eq:drop-P_delta2}
  \bigl\| (P_{\delta^2}U \otimes\1_{\mathbb{C}^k})
          \rho
          (U^*P_{\delta^2} \otimes\1_{\mathbb{C}^k})
           - (U \otimes\1_{\mathbb{C}^k})
             \rho
             (U^*\otimes\1_{\mathbb{C}^k}) \bigr\|_1 
    \leq 2\delta\sqrt{\alpha+\frac{\beta}{E}}.
\end{equation}
It remains to put everything together: by the triangle inequality and the bounds from Eqs.~\eqref{eq:Pd-epsilon}, \eqref{eq:V-vs-PUP-with-K}, \eqref{eq:rho-Krho-bis} and \eqref{eq:drop-P_delta2}, we obtain
\[\begin{split}
  \bigl\| (\cP\otimes\operatorname{id}_{\mathbb{C}^k})(\rho\otimes\psi_U)
          &- (U \otimes\mathds{1}_{\mathbb{C}^k}) \rho
            (U^*\otimes\mathds{1}_{\mathbb{C}^k}) \bigr\|_1 \\
    &\leq 2\epsilon + 2\delta\left(\alpha+\frac{\beta}{E}\right) + 2\sqrt{\delta}+\delta + 2\delta\sqrt{\alpha+\frac{\beta}{E}} \\
   & \leq 2\epsilon+7\sqrt{\delta}\left(\alpha+\frac{\beta}{E}\right),
\end{split}\]
and choosing $\delta=\epsilon^2$ concludes the proof.
\end{proof}

To obtain lower bounds on the dimension of the program register in Subsection~\ref{subsec:lowerbounds}, in the following theorem we present a method to construct a finite-dimensional processor assuming an existing infinite-dimensional one, which is illustrated in Figure~\ref{fig:contraction}.

\begin{figure}[ht]
\centering
\includegraphics[width=0.6\textwidth]{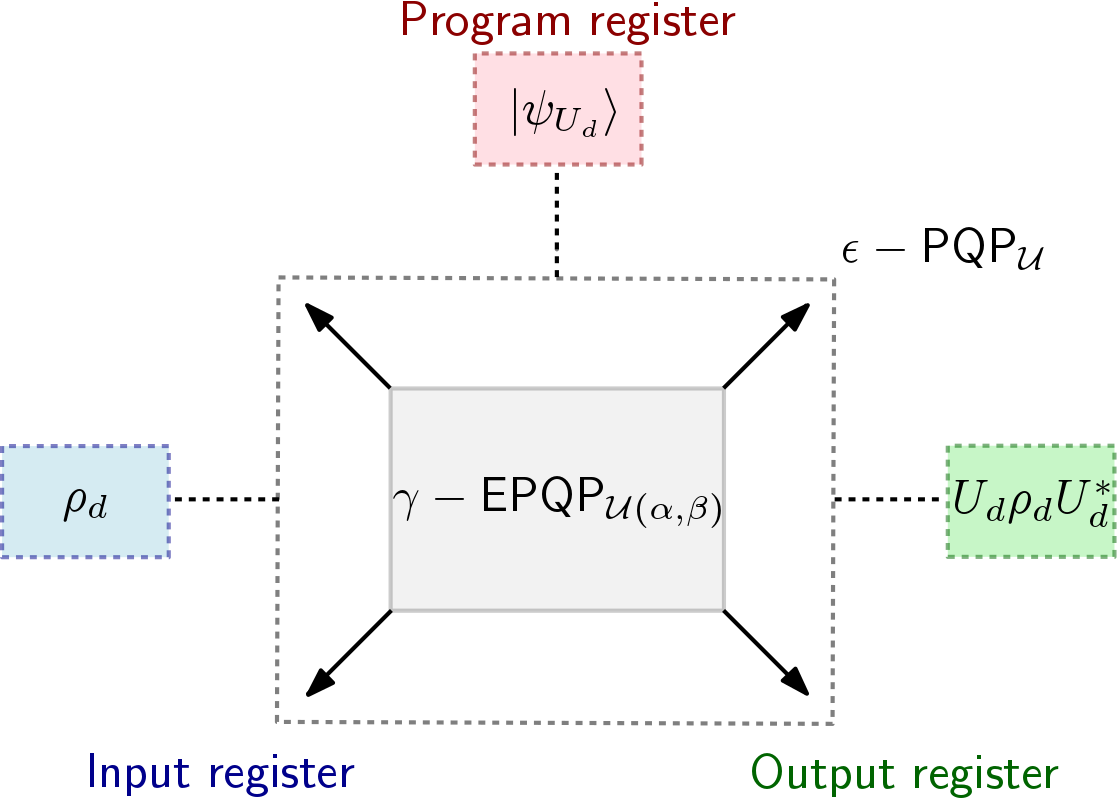}
\caption{Assuming a $\gamma\text{-EPQP}_{\cU(\alpha, \beta)}$, we can construct a finite-dimensional $\epsilon\text{-PQP}_{\cU}$ which is drawn in dashed lines including its input, output and program register.}\label{fig:contraction}
\end{figure}

\begin{theorem}
\label{thm:constructionforlowerbounds}
Let $H\geq 0 $ be the Hamiltonian with a discrete spectrum describing the system on the separable Hilbert space $\cH$, $E>0$, $\gamma>0$, and furthermore choose $d>0$. Assume that we have an infinite-dimensional $\gamma\text{-EPQP}_{\cU(\alpha, \beta)}$ $\cP \in CPTP(\cH \otimes \cH_P, \cH)$ for all sufficiently large $\alpha$ and $\beta$. Then, there exists an $\epsilon\text{-PQP}_\cU$ $\cP_d \in CPTP (\cH_d \otimes \cH_P, \cH_d)$ such that for all $U_d \in \cU(\cH_d)$ there is a unit vector $\ket{\psi_{U_d}} \in \cH_P$ with
\begin{equation}
    \frac{1}{2}\left\| \cP_d (\cdot \otimes \proj{\psi_{U_d}}) - U_d (\cdot) U_d^* \right\|_\diamond \leq \epsilon,
\end{equation}
where $\epsilon = \gamma\frac{1}{E}\max\{E(d),E\}$ and $E(d)$ is the smallest energy such that the space spanned by the eigenstates of energies between $0$ and $E(d)$ is of dimension $d$ or larger.
\end{theorem}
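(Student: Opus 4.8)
The plan is to run the reduction of Theorem~\ref{maintheorem} in reverse: from the infinite-dimensional processor $\cP$ I will carve out a finite-dimensional one $\cP_d$ acting on the low-energy subspace $\cH_d$ (the span of eigenstates of $H$ of energy $\le E(d)$, which has dimension $\ge d$), by sandwiching $\cP$ between the inclusion $\cH_d\hookrightarrow\cH$ and a compression back onto $\cH_d$. The first step is to encode an arbitrary target $U_d\in\cU(\cH_d)$ as a unitary on all of $\cH$, namely $U:=U_d\oplus\1_{\cH_d^\perp}$. Since $\cH_d$ is spanned by eigenstates of $H$, the Hamiltonian is block-diagonal, $H=HP_d\oplus H(\1-P_d)$, and conjugation by $U$ leaves the high-energy block untouched while mapping the low-energy block into itself; using $HP_d\le E(d)P_d$ one gets $U^\ast H U \le E(d)P_d+H(\1-P_d)\le \alpha H+\beta\1$ for every $\alpha\ge1$ and every $\beta\ge E(d)$. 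Hence $U\in\cU(\alpha,\beta)$ for all sufficiently large $\alpha,\beta$, exactly the regime in which $\cP$ is assumed to work, so I may take its (pure, as is customary for unitaries) program state and set $\ket{\psi_{U_d}}:=\ket{\psi_U}$, reusing the same program register $\cH_P$.

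Next I would define the finite-dimensional processor as $\cP_d:=\cK_d\circ\bigl(\cP\circ(\iota\otimes\id_P)\bigr)$, where $\iota:\cT(\cH_d)\hookrightarrow\cT(\cH)$ is the inclusion and $\cK_d(\tau):=P_d\tau P_d+\tr\!\bigl(\tau(\1-P_d)\bigr)\proj{0}$ is the (linear, CPTP) compression onto $\cH_d$ already used in Theorem~\ref{maintheorem}. The algebraic point is that the difference map factors: writing $\Delta_\infty:=\cP(\cdot\otimes\psi_U)-U(\cdot)U^\ast$ and $\Delta_d:=\cP_d(\cdot\otimes\psi_{U_d})-U_d(\cdot)U_d^\ast$, and using that $U_d(\cdot)U_d^\ast$ already lands in $\cT(\cH_d)$ so that $\cK_d$ acts as the identity on it, one obtains $\Delta_d=\cK_d\circ\Delta_\infty\circ\iota$. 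Because $\cK_d$ is trace-norm contractive even after tensoring with a reference, this yields
\[
  \tfrac12\|\Delta_d\|_\diamond \le \tfrac12\|\Delta_\infty\|_\diamond^{E(d)},
\]
the constraint $E(d)$ appearing precisely because every state on $\cH_d\otimes\cH_R$ carries energy at most $E(d)$.

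The last, and main, step is to convert the hypothesis $\tfrac12\|\Delta_\infty\|_\diamond^{E}\le\gamma$ into a bound at the possibly larger energy $E(d)$. For this I would show that $E\mapsto\|\Delta_\infty\|_\diamond^E$ is concave: given near-optimal test states $\rho_1,\rho_2$ at energies $E_1,E_2$ on a common reference, the flagged state $\theta\,\rho_1\otimes\proj{0}_F+(1-\theta)\,\rho_2\otimes\proj{1}_F$ is feasible at energy $\theta E_1+(1-\theta)E_2$, and orthogonality of the flags makes the trace norm additive, so its value equals $\theta\|\Delta_\infty\|_\diamond^{E_1}+(1-\theta)\|\Delta_\infty\|_\diamond^{E_2}$. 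Concavity together with $\|\Delta_\infty\|_\diamond^0\ge0$ then gives the scaling $\|\Delta_\infty\|_\diamond^{E'}\le\frac{E'}{E}\|\Delta_\infty\|_\diamond^{E}$ for $E'\ge E$. Applying this with $E'=E(d)$ (the bound being trivial when $E(d)\le E$) yields $\tfrac12\|\Delta_d\|_\diamond\le\gamma\,\frac{\max\{E(d),E\}}{E}=\epsilon$, as claimed. I expect this scaling lemma to be the crux: the naive route of diluting the test state with the ground state loses a spurious factor of two, and it is the orthogonal flag register that restores the clean constant $E(d)/E$.
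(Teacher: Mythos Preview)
Your proposal is correct and follows essentially the same construction as the paper: embed $\cH_d$ into the low-energy subspace, extend $U_d$ to $U=U_d\oplus\1_{\cH_d^\perp}$ (which is $(1,E(d))$-energy-limited), sandwich $\cP$ between the inclusion and the compression $\cK_d$, and then pass from the $E(d)$-constrained to the $E$-constrained diamond norm at the cost of the factor $\max\{E(d),E\}/E$. The only difference is that you supply a self-contained concavity argument for the scaling $\|\cdot\|_\diamond^{E'}\le\frac{E'}{E}\|\cdot\|_\diamond^{E}$, whereas the paper simply invokes this as a known property of the energy-constrained diamond norm.
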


\begin{proof}
We assume that an infinite-dimensional  $\gamma\text{-EPQP}_{\cU(\alpha, \beta)}$ processor $\cP \in CPTP (\cH \otimes \cH_P, \cH)$ exists as described in the theorem, and construct a finite-dimensional one with the same program register, i.e., we aim to bound
\begin{equation}
\frac{1}{2}\|\cP_d (\cdot \otimes \ket {\psi_{U_d}} \bra {\psi_{U_d}}) - U_d (\cdot) U_d^\ast \|_\diamond \leq \epsilon := \epsilon (E, d, \gamma).
\end{equation} 
We start with fixing an isometric embedding $V$ of the $d$-dimensional Hilbert space $\cH_d$ into $\cH$. Namely, with respect to the ordered spectral decomposition $H=\sum_{n=0}^\infty e_n P_n$ of the Hamiltonian, let $n(d)$ be the smallest integer such that $d \leq \sum_{n=0}^{n(d)} \operatorname{rank} P_n$ and $E(d) := e_{n(d)}$ the largest occurring energy. 
Let 
\begin{equation}
    V:\cH_d \hookrightarrow \left(\sum_{n_0}^{n(d)} P_n \right)\cH =: \cH' \subset \cH, 
\end{equation}
where the first embedding is an arbitrary isometry. This defines an isometric channel 
\begin{equation}
\begin{split}
  \cV: \cD(\cH_d) &\to \cD(\cH) \\
             \rho &\mapsto V\rho V^*.
\end{split}
\end{equation}
Thanks to $V$ we view $\cH_d$ as a subspace of $\cH$, and 
denote its orthogonal complement by $\cH_d^\perp$. 

An arbitrary $U_d \in\cU(\cH_d)$ is extended to a unitary
$U \equiv U_d \oplus \1_{\cH_d^\perp} \in \cU(\cH)$. 
We would like to implement this unitary using the processor $\cP$, followed by a cptp compression onto the subspace $\cH_d$, using its projection operator $\Pi_d$:
\[
  \cK(\rho) \assign \Pi_d \rho \Pi_d + \kappa_0 \tr\rho(\1-\Pi_d),
\]
where $\kappa_0$ is an arbitrary state with energy zero in $\cH_d$. Since we assume that there exists a $\gamma\text{-EPQP}_{\cU(\alpha, \beta)}$ for all sufficiently large $\alpha >0 $ and $\beta>0$, this unitary can be $\gamma$-implemented. (In fact, we could choose $\alpha=1$ and $\beta = E(d)$, the largest occurring energy gap in $\cH'$.)
The processor is a concatenation of the isometric channel $\cV$, the infinite-dimensional $\gamma\text{-EPQP}_{\cU(\alpha, \beta)}$ $\cP$ and the compression map $\cK$, 
namely $\cP' \assign \cK\circ\cP\circ(\cV\otimes\id_{\cH_P})$, which leads us to
\begin{equation}\begin{split}
    \frac{1}{2}\| \cK \circ \cP \circ (\cV (\cdot) \otimes \psi_{U_d}) - U_d (\cdot)U_d^\ast \|_\diamond 
    &\leq \frac{1}{2}\| \cP\circ (\cV (\cdot)\otimes \psi_{U_d}) - U (\cdot)U^\ast \|_{\diamond}^{E(d)} \\
    &\leq \frac{1}{E}\max\{E(d),E\} \frac{1}{2}\| \cP\circ (\cV (\cdot)\otimes \psi_{U_d}) - U (\cdot)U^\ast \|_{\diamond}^E  \\
    &\leq \gamma \frac{1}{E}\max\{E(d),E\},  
\end{split}\end{equation}
where we use the contractivity of the diamond norm under postprocessing and that our subspace goes up to energy $E(d)$, so restricted to it the $E(d)$-constrained diamond norm equals the unconstrained diamond norm; then, that going to the $E$-constrained diamond norm blows up the error by a factor of at most $\frac{1}{E}\max\{E(d),E\}$; finally, the infinite-dimensional processor makes an error of at most $\gamma$. 
Hence, this is the $\epsilon = \gamma\frac{1}{E}\max\{E(d),E\}$ we get for the resulting finite-dimensional processor. 
\end{proof}

\subsection{Upper bounds}
\label{subsec:upperbounds}
Upper bounds on the program dimension of a finite-dimensional $\epsilon\text{-PQP}_\cU$ were derived in various previous works, most recently in Refs.~\cite{Kubicki19} and~\cite{Renner20}. 
To specify an upper bound for the dimension of the program of the infinite-dimensional $\gamma\text{-EPQP}_{\cU(\alpha, \beta)}$ (Definition~\ref{epsilonEPQP}), we thus import the existing bounds via Theorem~\ref{maintheorem}.

\begin{table}[ht]
\centering
\renewcommand{\arraystretch}{1.5}
\begin{tabular}[c]{| l | l | l |}
 \hline
 $d_P\leq$ & References & $d_P^\infty \leq$ \\
 \hline\hline
 \phantom{$d_P\leq$}$\displaystyle{\left( \frac{d}{\epsilon^2} \right)^{4d^2}}$\vspace{1mm} & \parbox{5cm}{Ishizaka \&{} Hiroshima~\cite{Port-based08}, \\ Beigi \&{} K\"onig~\cite{Beigi11}, \\ Christandl \emph{et al.}~\cite{Christiandl18}} &
 \phantom{$d_P^\infty\leq$}$\displaystyle{\left( \frac{20.25 d (\alpha + \frac{\beta}{E})^2}{\gamma^2} \right)^{4d^2}}$ \\ \hline
  \phantom{$d_P\leq$()}$\displaystyle{d^{\frac{2 d^2}{\epsilon}}}$\vspace{1mm} & Pirandola \emph{et al.}~\cite{Pirandola19}\mylabel{PirandolaUpper} & \phantom{$d_P^\infty\leq$()}$\displaystyle{d^{\frac{9d^2  (\alpha + \beta / E )}{\gamma}}} $ \\ \hline
  \phantom{$d_P\leq$}$\displaystyle{\left( \frac{\tilde C}{\epsilon} \right) ^{d^2}}$\vspace{1mm} & Kubicki \emph{et al.}~\cite{Kubicki19}\mylabel{KubickiUpper} & \phantom{$d_P^\infty\leq$}$\displaystyle{\left( \frac{4.5 \tilde C (\alpha + \frac{\beta}{E})}{\gamma} \right) ^{d^2}}$ \\ \hline
 \phantom{$d_P\leq$}$\displaystyle{\left( \frac{\tilde C d^2}{\epsilon}\right)^{\frac{d^2-1}{2}}}$\vspace{1mm} & Yang \emph{et al.}~\cite{Renner20} & \phantom{$d_P^\infty\leq$} $\displaystyle{\left( \frac{4.5 \tilde C d^2 (\alpha + \frac{\beta}{E})}{\gamma}\right)^{\frac{d^2-1}{2}}}$ \\
 \hline
\end{tabular}
\caption{Upper bounds on the program dimension for infinite-dimensional processors from bounds for finite-dimensional processors. Note that $d$ in the third column is the dimension of the subspace of $\cH$ spanned by eigenvectors of $H$ with eigenvalues $\leq\frac{E}{\epsilon^4}$. }
\label{tab:finiteupperbounds} 
\end{table}

Let $\cP$ be an infinite-dimensional $\gamma\text{-EPQP}_{\cU(\alpha, \beta)}$ as in Theorem~\ref{maintheorem} and $\cP_d$ a finite-dimensional $\epsilon\text{-PQP}_\cU$ as required. Since our construction of an infinite-dimensional processor relies on a finite-dimensional one, we reformulate the bound from Eq.~\eqref{ThmEPQP} as
\begin{equation}
\epsilon \assign \frac{\gamma}{4.5}\left(\alpha+\frac{\beta}{E}\right)^{-1}.
\end{equation}

Several upper bounds on the program dimension for finite-dimensional unitary processors can be found in the literature. Note that the bound in the second row of Table~\ref{tab:finiteupperbounds} was derived from Ref.~\cite[Lemma 1, Section II.C.]{Pirandola19} which uses port-based teleportation working with copies of Choi-states. 
We get upper bounds for our infinite-dimensional $\gamma\text{-EPQP}_{\cU(\alpha \beta)}$ $\cP$ if we insert $\epsilon$ into the existing bounds in Table~\ref{tab:finiteupperbounds}.

\subsection{Lower bounds}
\label{subsec:lowerbounds}

Given an $\epsilon\text{-EPQP}_{\cU(\alpha,\beta)}$ for infinite-dimensional $(\alpha, \beta)$-energy-limited unitaries, we can build a finite-dimensional $\epsilon\text{-PQP}_\cU$ via Theorem~\ref{thm:constructionforlowerbounds}, whose program dimension is lower bounded through results from the literature. 
The following lower bounds for $d$-dimensional $\epsilon\text{-PQP}_\cU$ are known, as shown in Table \ref{tab:finitelowerbounds}.

\begin{table}[ht]
\centering
\renewcommand{\arraystretch}{1.5}
\begin{tabular}[c]{| l | l | l |}
 \hline
 $d_P\geq$ & References & $d_P^\infty \geq$ \\
 \hline\hline
 \phantom{$d_P:$} $\displaystyle{K \left(\frac{1}{d}\right)^{\!\frac{d+1}{2}} \!\!\left(\frac{1}{\epsilon}\right)^{\!\frac{d-1}{2}}}$\vspace{1mm} & P\'erez-Garc\'ia~\cite{Perez-Garcia06} & \phantom{$d_P^\infty:$} $\displaystyle{K \left(\frac{1}{d}\right)^{\frac{d+1}{2}} \!\!\left(\frac{E}{\gamma\max\{E(d),E\}}\right)^{\frac{d-1}{2}}}$\\ \hline
 \phantom{$d_P:$} $\displaystyle{\left(\frac{d}{\epsilon} \right)^2}$\vspace{1mm} & Majenz~\cite{Majenz} & \phantom{$d_P^\infty:$} $\displaystyle{\left(\frac{d E}{\gamma\max\{E(d),E\}} \right)^2}$\\ \hline
 \phantom{$d_P:$} $\displaystyle{2^{\frac{1-\epsilon}{3C}d - \frac{2}{3} \log d}}$\vspace{1mm} & Kubicki \emph{et al.}~\cite{Kubicki19} &  \phantom{$d_P^\infty:$} $\displaystyle{2^{\frac{E-\gamma\max\{E(d),E\}}{3CE}d - \frac{2}{3} \log d}}$ \\ \hline
 \phantom{$d_P:$} $\displaystyle{\left(1+\frac{\Theta(d^{-2})}{\sqrt\epsilon} \right)^{2\alpha}}$ \vspace{1mm} & \parbox{3.5cm}{Yang \emph{et al.}~\cite{Renner20}, with\\ a slight arithmetic\\ improvement} & \phantom{$d_P^\infty:$} $\displaystyle{\left(1+\frac{\Theta(Ed^{-2})}{\sqrt{\gamma}\max\{E(d),E\}} \right)^{2\alpha}}$ \\
 \hline
\end{tabular}
\caption{Lower bounds on the program dimension for infinite-dimensional processors from bounds for finite-dimensional processors. The last row holds for any $\alpha < \frac{d^2-1}{2}$. The dimension $d$ in the third column is the chosen dimension of the finite-dimensional processor.}
\label{tab:finitelowerbounds}
\end{table}

\section{Recycling the program for implementing unitaries}
\label{sec:recycling}

In this section, we prove a general lemma which was first shown by Yang \emph{et al.}~\cite{Renner20} for finite-dimensional unitaries and using the diamond norm. We generalize their statement to infinite-dimensional systems. This lemma is applicable to any unitary that a given processor implements approximately, and it says that in such a case the processor can be modified to one that reuses the program state several times to approximately implement the same unitary several times sequentially or in parallel. This statement is crucial in the information-theoretic lower bounds obtained in Ref.~\cite{Renner20} for the program dimension of a universal PQP in finite dimension. We use our infinite-dimensional, energy-constrained version in the next section to give lower bounds on the program dimension of approximate EPQP's for Gaussian unitaries. 

Before we state and prove the lemma, we recall some definitions from Shirokov~\cite{Shirokov16}. 
The \emph{completely bounded energy-constrained channel fidelity} between two quantum channels is defined as
\begin{equation}
F_{cb}^E (\cA, \cB) \assign \inf_{\psi} F \bigl( ( \cA \otimes \id_R )(\psi), (\cB \otimes \id_R )(\psi) \bigr),
\end{equation}
such that $\ket \psi$ varies over states on $\cH_1 \otimes\cH_R$ with energy constraint $\tr \rho H_1 \leq E$. We take, without loss of generality, the infimum over all pure states $\ket \psi \in \cH_1 \otimes \cH_R$ with $\cH_R \simeq \cH_1$ being a reference system, and $F(\rho, \sigma) \assign \tr \sqrt{\rho^{\frac{1}{2}} \sigma \rho^{\frac{1}{2}}}$ for $\rho, \sigma \in \cD(\cH_1)$ denoting the usual fidelity.

From Ref.~\cite{Shirokov16} we have
\begin{equation}
  \label{eq:fidelitydiamond}
  F_{cb}^E(\cA,\cB) \geq 1-\frac12 \|\cA - \cB\|_\diamond^E 
\end{equation} and 
\begin{equation}
  \label{eq:diamondfidelity}
  \|\cA -\cB \|_\diamond^E \leq 2 \sqrt{1-F_{cb}^E(\cA,\cB)^2}.
\end{equation}

Finally, for an $\ell$-partite system with Hilbert space $\cH^{(\ell)} = \cH_1\otimes\cH_2\otimes\cdots\otimes\cH_\ell$, 
where each $\cH_j$ carries its own Hamiltonian $H_j \geq 0$, and for numbers $E_j>0$, we define the \emph{multiply energy-constrained diamond norm}, or more precisely the $(E_1,\ldots,E_\ell)$-constrained diamond norm, of a Hermitian-preserving superoperator $\Phi:\cT(\cH^{(\ell)})\rightarrow \cT(\cH')$ as
\begin{equation}
  \label{eq:multiply-E-diamond}
  \|\Phi\|_\diamond^{(E_1,\ldots,E_\ell)} 
    \assign \sup_{\rho\in\cD(\cH^{(\ell)}\otimes\cH_R)} \|(\Phi\otimes\id_R)(\rho)\|_1
    \quad\text{such that } \forall j=1,\ldots,\ell\ \tr\rho_j H_j \leq E_j .
\end{equation}
This is evidently a norm, being in fact equivalent to the energy-constrained diamond norm: concretely, with $E_{\min}=\min_j E_j$ and $E_{\text{sum}}=\sum_j E_j$,
\begin{equation}
  \|\Phi\|_\diamond^{E_{\min}} 
    \leq \|\Phi\|_\diamond^{(E_1,\ldots,E_\ell)} 
      \leq \|\Phi\|_\diamond^{E_{\text{sum}}},
\end{equation}
where we use $H = \sum_j H_j$ as the Hamiltonian on $\cH^{(\ell)}$.
As with the energy-constrained diamond norm, the induced topology is not the issue, but the fact that the multiple constraints in Eq.~\eqref{eq:multiply-E-diamond} allow us to encode more refined metric information.

\begin{lemma}[{Replication of infinite-dimensional unitaries}]
\label{lemma:E-Yang}
Consider a processor, i.e., a quantum channel $\cP \in \text{CPTP}(\cH \otimes \cH_P, \cH)$, coming with a Hamiltonian $H\geq 0$ and a number $E>0$, and an integer $\ell\geq 1$. Then, there exists another processor $\widehat{\mathcal{P}} \in \text{CPTP}(\mathcal{H}^{\otimes \ell} \otimes \mathcal{H}_P, \mathcal{H}^{\otimes \ell})$ with the following property: 
for every unitary channel $\mathcal{U}(\cdot)= U\cdot U^*$ whose inverse $\mathcal{U}^*$ is $(\alpha',\beta')$-energy-limited and such that there exists a pure state $|{\psi_U}\rangle$ on $\mathcal{H}_P$ with
\begin{equation}
  \label{eq:single-U}
  \frac{1}{2} \| \mathcal{P}(\cdot \otimes \psi_U) - \mathcal{U} \|_\diamond^E \leq \epsilon,
\end{equation}
it holds that 
\begin{equation}
    \label{eq:muliply-U}
    \frac{1}{2} \left\| \widehat{\mathcal{P}}(\cdot \otimes \psi_U) - \mathcal{U}^{\otimes \ell} \right\|_\diamond^{(E,\ldots,E)} \leq 2\ell\epsilon',
\end{equation}
where $\epsilon' = \left(1+\frac{\beta'}{E}\right)\sqrt{2\epsilon}$.
\end{lemma}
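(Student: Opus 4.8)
The plan is to exploit the (approximate) reversibility of a unitary channel: because $\mathcal U$ is invertible, a processor that implements it well on energy-bounded states must leave the program essentially intact and, crucially, recoverable, so that one can run it, recover $\ket{\psi_U}$, and run it again. I would first convert the diamond-norm hypothesis into a single-shot vector statement. Fix a Stinespring dilation $V:\cH\otimes\cH_P\to\cH\otimes\cH_E$ of $\mathcal P$, so that $\mathcal P(\sigma)=\tr_E V\sigma V^*$. Combining Eq.~\eqref{eq:fidelitydiamond} with the hypothesis~\eqref{eq:single-U} gives $F_{cb}^E(\mathcal P(\cdot\otimes\psi_U),\mathcal U)\ge 1-\epsilon$, and Uhlmann's theorem (valid on separable Hilbert spaces) then yields, for every pure $\ket\phi\in\cH\otimes\cH_R$ with $\tr\proj{\phi}(H\otimes\1)\le E$, a unit vector $\ket{\theta_\phi}\in\cH_E$ with
\[
  \bigl\| (V\otimes\id_R)\ket\phi\ket{\psi_U} - (U\otimes\1_R)\ket\phi\,\ket{\theta_\phi} \bigr\| \le \sqrt{2\epsilon}.
\]

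The key structural step, and the main obstacle, is to show that $\ket{\theta_\phi}$ is essentially independent of $\ket\phi$ (the \emph{decoupling} of the environment). Taking inner products of the displayed relation for two energy-bounded inputs $\phi,\phi'$ and using $V^*V=\1$ forces $\braket{\phi}{\phi'}\bigl(1-\braket{\theta_\phi}{\theta_{\phi'}}\bigr)=O(\sqrt\epsilon)$, so $\ket{\theta_\phi}$ must agree with a fixed $\ket\theta$ up to $O(\sqrt\epsilon)$ on the energy-$\le E$ domain; making this covering argument rigorous over the infinite-dimensional set of such states is the delicate point. Applying the contraction $V^*$ to the relation above and writing $\ket\eta=(U\otimes\1_R)\ket\phi$ then produces the reversed estimate
\[
  \bigl\| (V^*\otimes\id_R)\ket\eta\ket\theta - (U^*\otimes\1_R)\ket\eta\,\ket{\psi_U}\bigr\| \le \sqrt{2\epsilon},
\]
valid whenever $U^*\eta$ has energy $\le E$. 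For the ground state $\eta=\ket{0}$ this is where the hypothesis enters: $\mathcal U^*$ being $(\alpha',\beta')$-energy-limited gives $\tr U^*\proj{0}U\,H\le\beta'$, and the energy rescaling of the constrained diamond norm (as in the proof of Theorem~\ref{thm:constructionforlowerbounds}) then costs a factor $\max\{1,\beta'/E\}\le 1+\beta'/E$.

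With these two estimates, $V$ implements $U$ on a system while turning the ``charged'' program $\ket{\psi_U}$ into the fixed ``discharged'' state $\ket\theta$, whereas $V^*$ recharges $\ket\theta$ back to $\ket{\psi_U}$ at the cost of applying $U^*$ to whatever register it acts on. I would therefore define a one-step recycling channel $\mathcal R$ that (i) applies $V$ to the input system together with the program, and (ii) applies $V^*$ to a fresh ancilla prepared in $\ket{0}$ together with the environment, and discards the ancilla; by the two estimates it maps $\proj{\phi}\otimes\psi_U\mapsto (U\proj{\phi}U^*)\otimes\psi_U$ up to error $\le 2\epsilon'$ with $\epsilon'=\left(1+\frac{\beta'}{E}\right)\sqrt{2\epsilon}$, the ancilla harmlessly absorbing the spurious $U^*$ precisely because its energy is controlled by $\beta'$. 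Finally I set $\widehat{\mathcal P}$ to be the $\ell$-fold composition of $\mathcal R$ applied to the input systems in turn, relaying the single program register through. A telescoping argument using the triangle inequality and the contractivity of cptp maps under composition---and the fact that each not-yet-processed system still carries energy $\le E$, so the single-shot bound applies at every step---then gives
\[
  \frac12\bigl\|\widehat{\mathcal P}(\cdot\otimes\psi_U) - \mathcal U^{\otimes\ell}\bigr\|_\diamond^{(E,\ldots,E)} \le \sum_{j=1}^\ell 2\epsilon' = 2\ell\epsilon',
\]
as claimed. The principal difficulties are thus the infinite-dimensional decoupling of $\ket{\theta_\phi}$ and the energy accounting showing that the inverse channel's $(\alpha',\beta')$-bound is exactly what licenses the reversal step.
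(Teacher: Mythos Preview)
Your overall architecture---dilate $\cP$, show the dilation approximately factorizes as $U\otimes\ket\theta$ on energy-bounded inputs, reverse via $V^*$ on a fresh ground-state ancilla to recharge the program, then telescope---is exactly the paper's strategy. The difference lies in how you obtain the fixed environment state $\ket\theta$, and there the step you yourself flag as ``delicate'' is a genuine gap.

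Your overlap computation yields $|\braket{\phi}{\phi'}|\cdot|1-\braket{\theta_\phi}{\theta_{\phi'}}| = O(\sqrt\epsilon)$, which says nothing when $\phi\perp\phi'$, and over an infinite-dimensional energy ball there is no finite chain of overlapping states through which to propagate closeness. Even in finite dimension this decoupling is nontrivial: it is precisely the continuity of Stinespring dilations due to Kretschmann--Schlingemann--Werner. The paper sidesteps the issue by invoking the \emph{channel-level} Uhlmann theorem in its energy-constrained form~\cite[Prop.~1]{Shirokov16}, which directly supplies a single state $\ket{\phi_Q}$ with
\[
  F_{cb}^E\bigl(\cV\circ(\id_\cH\otimes\psi_U),\ \cU\otimes\phi_Q\bigr)\ \ge\ 1-\epsilon,
\]
hence $\frac12\|\cV\circ(\id_\cH\otimes\psi_U)-\cU\otimes\phi_Q\|_\diamond^E\le\sqrt{2\epsilon}$ uniformly. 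With this in hand, the reversal and energy-rescaling steps go through essentially as you describe, modulo one technicality you glossed over: $V^*(\cdot)V$ is only cp and trace-nonincreasing, so the paper explicitly completes it to a cptp map $\cW=\cV^*+\cR$ before defining the memory recovery $\cM(\rho)=\tr_\cH\cW(\proj{0}\otimes\rho)$ and the recycled processor $\cP'=(\id_\cH\otimes\cM)\circ\cV$, which satisfies $\frac12\|\cP'\circ(\id_\cH\otimes\psi_U)-\cU\otimes\psi_U\|_\diamond^E\le 2\epsilon'$. The telescoping then proceeds as you wrote.
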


In words, whenever $\cP$, using a pure program state $\psi_U$, $\epsilon$-implements an $(\alpha,\beta)$-energy-limited unitary channel $\cU(\cdot)= U \cdot U^*$ with $(\alpha',\beta')$-energy limited inverse $\cU^*$ (with respect to the $E$-constrained diamond norm),
then $\widehat{\cP}$, using the same pure program state $\psi_U$, $2\ell\epsilon'$-implements $\cU^{\otimes\ell}$ (with respect to the $(E,\ldots,E)$-constrained diamond norm). In the finite-dimensional setting and without energy constraint, the above statement reduces to that of Ref.~\cite{Renner20}.

\begin{proof}
Applying Eq.~\eqref{eq:fidelitydiamond}, we obtain for the energy-constrained completely bounded fidelity between the output of the processor and the target unitary, 
\begin{equation}
    F_{cb}^E\bigl(\cP(\cdot\otimes\psi_P), \cU\bigr) \geq 1- \epsilon. 
\end{equation}
Throughout the proof, we observe the convention to mark states for clarity with the index of the subsystem on which they act; similarly for channels, whenever it is not clear from the setting. 
Let $\cV: \cB(\cH) \otimes \cB(\cH_P) \to \cB(\cH) \otimes \cB(\cH_Q)$ be a Stinespring dilation 
of $\cP$, with $\cH_Q$ being a suitable environment space. 
Then, by Uhlmann's theorem for the completely bounded energy-constrained fidelity of quantum channels, stating that any two channels have isometric dilations with the same completely bounded energy-constrained fidelity~\cite[Prop.~1]{Shirokov16} (generalizing the case without energy constraint~\cite{KSW:dilation2}), there exists a state
$\ket{\phi_Q}\in \cH_Q$ such that
\begin{equation}
    F_{cb}^E\bigl(\cV \circ (\id_{\cH} \otimes \psi_P), \cU \otimes \phi_Q\bigr) 
       \geq 1 - \epsilon.
\end{equation}
Note that here and in the following, we regard a state $\rho\in\cD(\cH_P)$ 
as a channel from a trivial system, denoted $1$ (with one-dimensional 
Hilbert space $\mathbb{C}$), to $\cH_P$. 
Using Eq.~\eqref{eq:diamondfidelity}, we get
\begin{equation}
    \label{eq:purified-PQP}
    \frac12 \left\|\cV \circ (\id_{\cH} \otimes \psi_P) - \cU \otimes \phi_Q \right\|_\diamond^E 
          \leq \sqrt{1 - (1-\epsilon)^2} \leq \sqrt{2 \epsilon}.
\end{equation}

We want to apply the processor several times, and for this we have to recover 
the program state. So, the first step of the proof is to show that we can modify the processor to a new map $\cP'\in\text{CPTP}(\cH\otimes\cH_P,\cH\otimes\cH_P)$ in such a way that apart from implementing $\cU$, it also preserves the program state $\psi_P$ (all with the appropriate approximations).
For this purpose, choose a pseudoinverse of $\cV$,
\begin{equation}
    \cW \assign \cV^\ast + \cR = V^*(\cdot)V + \cR,
\end{equation}
where $\cR$ is a cp map designed to make $\cW$ cptp. Note that this 
is always possible because $VV^*$ is a projection, and so $\cV^*$ is
completely positive and trace nonincreasing. In fact, denoting 
$\Pi \assign VV^*$ the projection operator onto the image of $V$ 
in $\cH\otimes\cH_Q$, one choice is $\cR(\xi) = \rho_0 \tr\xi(\1-\Pi)$,
with a fixed state $\rho_0 \in \cD(\cH\otimes\cH_P)$. 
We observe that indeed $\cW\circ\cV = \id_{\cH}\otimes \id_{\cH_P}$,
and hence $(\cV\circ\cW)\circ\cV = \cV$. 
Thus we get from Eq.~\eqref{eq:purified-PQP} that 
\begin{equation}\begin{split}
    \label{eq:recovery-1}
    \sqrt{2\epsilon} 
       &\geq \frac12 \left\|\cV \circ (\id_{\cH} \otimes \psi_P) - \cV\circ\cW\circ(\cU \otimes \phi_Q) \right\|_\diamond^E \\
       &=     \frac12 \left\| \id_{\cH} \otimes \psi_P - \cW\circ(\cU \otimes \phi_Q) \right\|_\diamond^E \\
       &\geq \frac12 \frac{E}{E+\beta'}
             \left\| \id_{\cH} \otimes \psi_P - \cW\circ(\cU \otimes \phi_Q) \right\|_\diamond^{E+\beta'},
\end{split}\end{equation}
where we obtain the second line by the invariance of the (energy-constrained) diamond 
norm under multiplication from the left by an isometric channel ($\cV$),
and the last line by the equivalence of the energy-constrained diamond 
norms for different energy levels~\cite{Winter17}. 

Since $\cU^*$ is $(\alpha',\beta')$-energy-limited, we can lower bound 
the last expression in turn by
\begin{equation}
    \frac12 \frac{E}{E+\beta'}
             \left\| \cU^* \otimes \psi_P - \cW\circ(\id_{\cH} \otimes \phi_Q) \right\|_\diamond^{E/\alpha'}. 
\end{equation}
Defining the memory recovery map $\cM:\cT(\cH_Q)\rightarrow\cT(\cH_P)$ by 
\begin{equation}
    \cM(\rho) \assign \tr_{\cH} \cW(\proj{0}_{\cH}\otimes \rho),
\end{equation}
where $\proj{0}$ is a ground state (i.e., of zero energy) of the 
Hamiltonian $H$, we can now conclude that 
\begin{equation}
    \frac12 \| \cM(\phi_Q) - \psi_P \|_1 
       \leq \left(1+\frac{\beta'}{E}\right)\sqrt{2\epsilon} =: \epsilon'.
\end{equation}
Note that this map depends only on the chosen Stinespring dilation $\cV$
of $\cP$, and thus we can define
\begin{equation}
    \cP' \assign (\id_{\cH}\otimes\cM)\circ\cV,
\end{equation}
which by the above reasoning has the desired property that
\begin{equation}
    \label{eq:processor-modified}
    \frac12 \| \cP'\circ(\id_{\cH}\otimes\psi_P) - \cU\otimes\psi_P\|_\diamond^E
       \leq 2\epsilon',
\end{equation}
via a simple application of the triangle inequality and the contractivity of the (energy-constrained) diamond norm under multiplication from the left by cptp maps. 

At this point we are almost there, and the remaining argument only 
requires careful notation. We have isomorphic copies $\cH_j$ of $\cH$,
$j=1,\ldots,\ell$ as well as the program register $\cH_P$ in the big 
tensor-product space $\cH^{(\ell)}\otimes\cH_P$, and we use 
index $j$ or P to indicate on which tensor factor a superoperator acts
(implicitly extending the action to the whole space by tensoring 
with the identity $\id$ on the other factors). This allows us to 
rewrite Eq.~\eqref{eq:processor-modified} as
\begin{equation}
    \label{eq:processor-modified-2}
    \frac12 \| \cP'_{jP}\circ(\id_j\otimes\psi_P) - \cU_j\otimes\psi_P\|_\diamond^E
       \leq 2\epsilon',
\end{equation}
for all $j=1,\ldots,\ell$. 
By tensoring this with identities $\cU_k$ (for $k<j$) and with $\id_k$ ($k>j$),
and observing the definition of the multiply energy-constrained diamond norm, 
this results in 
\begin{equation}\begin{split}
    \label{eq:processor-multiply}
    \frac12 &\bigl\| \cP'_{jP}\circ(\cU_1\otimes\cdots\otimes\cU_{j-1}\otimes\id_j\otimes\id_{j+1}\otimes\cdots\otimes\id_\ell\otimes\psi_P) \bigr. \\
            &\phantom{====}\bigl.
               - \cU_1\otimes\cdots\otimes\cU_{j-1}\otimes\cU_j\otimes\id_{j+1}\otimes\cdots\otimes\id_\ell\otimes\psi_P \bigr\|_\diamond^{(E,\ldots,E)}
       \leq 2\epsilon',
\end{split}\end{equation}
for $j=1,\ldots,\ell$. 
(It is perhaps worth noting that we impose only energy constraints on the 
nontrivial systems $\cH_j$, both in the simply and multiply constrained 
norm expressions.) 

Adding all the $\ell$ bounds from Eq.~\eqref{eq:processor-multiply}, 
and using the triangle inequality results in 
\begin{equation}\begin{split}
    \frac12 &\bigl\| \cP'_{\ell P}\circ\cdots\circ\cP'_{2P}\circ\cP'_{1P}
                 \circ(\id_1\otimes\cdots\otimes\id_\ell\otimes\psi_P) \bigr. \\
            &\phantom{=============:}\bigl.
               - \cU_1\otimes\cdots\otimes\cU_\ell\otimes\psi_P \bigr\|_\diamond^{(E,\ldots,E)} \leq 2\ell\epsilon'. 
\end{split}\end{equation}
This means that we can define our desired EPQP via
$\widehat{\cP} 
 \assign \tr_P\circ\cP'_{\ell P}\circ\cdots\circ\cP'_{2P}\circ\cP'_{1P}$,
concluding the proof.
\end{proof}

\medskip
We do not make use of it, but from the proof (and much more explicitly from that of Ref. [2]), something stronger can be obtained. Namely, the processor can be modified in such a way that, rather than merely implementing $\ell$ instances of $\mathcal{U}$ sequentially, it does so by alternating them with $\ell$ instances of $\mathcal{U}^\ast$. This is quite curious and hints at an interesting property of programmable quantum processors for unitary channels, namely that with every unitary it approximately implements, it essentially (i.e., after suitable modification) also implements the inverse of the unitary. 

\medskip
The methodology for lower bounding the program register, expounded in Ref.~\cite{Renner20}, is information theoretic, and in itself does not rely on the unitarity of the target channels. Namely, choose a fiducial state $\rho_0\in \cD(\cH)$ of energy $\leq E$ to be used in the channels $\Phi\in\cC$ and the processor, as well as a probability distribution $\mu({\rm d}\Phi)$ on the class $\cC$, so that by Definition \ref{epsilonEPQP} [Eq.~\eqref{eq:epsilonEPQP}] we have for all $\Phi\in\cC$,
\[
  \frac12 \| \cP(\rho_0\otimes\pi_\Phi) - \Phi(\rho_0)\|_1 \leq \epsilon.
\]
Now we can consider three ensembles of states, all sharing the same probability distribution $\mu$:
\begin{equation}
  \{\pi_\Phi, \mu({\rm d}\Phi)\}
  \xrightarrow{\cP(\rho_0\otimes \cdot)}
  \{\omega_\Phi, \mu({\rm d}\Phi)\}
  \stackrel{\epsilon}{\approx} 
  \{\Phi(\rho_0), \mu({\rm d}\Phi)\}, 
\end{equation}
the first an ensemble on the program register $\cH_P$, the second the output of the processor, $\omega_\Phi = \cP(\rho_0\otimes\pi_\Phi)$, and the third the ideal ensemble from the implemented channels, if the processor was perfect. 

The Holevo information of the left-hand ensemble is upper bounded by $\log d_P^\infty$, and by data processing it is lower bounded by the middle one. We would like to apply a continuity bound for the von Neumann entropy to lower bound the Holevo information of the middle ensemble in turn in terms of the Holevo information of the right-hand ensemble. This is straightforward in finite dimension using the Fannes inequality, but a bit more subtle in infinite dimension, where however analogous bounds exist when additionally the states obey an energy bound~\cite{Winter16}. Assuming that $\cC \subset \cL(\alpha,\beta)$, this is indeed given for the states of the ideal ensemble: $\tr \Phi(\rho_0)H \leq \alpha E+\beta$. But we have a priori no such bound for the actual output states $\omega_\Phi$. 

\begin{lemma}
\label{lemming}
Consider two states $\rho,\sigma\in\cD(\cH)$, where $\cH$ carries a grounded Hamiltonian $H\geq 0$, and a number $E>0$. 
If $\frac12\|\rho-\sigma\|_1\leq\eta$ and $\tr\rho H \leq E$, then there exists a state $\sigma'$ with $\tr\sigma' H\leq \frac{E}{\eta}$ and
\begin{equation}
  \frac12\|\sigma-\sigma'\|_1 \leq 3\sqrt{\eta},
  \quad
  \frac12\|\rho-\sigma'\|_1 \leq 4\sqrt{\eta}.
\end{equation}
\end{lemma}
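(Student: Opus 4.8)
The plan is to modify the state $\sigma$ only in its ``high-energy'' part, so as to enforce the energy bound while staying close in trace norm. First I would introduce the spectral projector of the Hamiltonian onto low-lying energies: let $P := \{H \leq E/\eta\}$ be the projector onto eigenstates of $H$ with eigenvalue $\leq E/\eta$, and $Q := \1 - P$ its complement. The candidate state is built by pinching $\sigma$ with $P$ and dumping the leftover weight into the ground state: define
\begin{equation}
  \sigma' \assign P\sigma P + \tr\bigl(\sigma Q\bigr)\,\proj{0},
\end{equation}
where $\ket{0}$ is a ground state of $H$ (of energy $0$). This $\sigma'$ is manifestly a valid density operator, and its energy is controlled: the ground-state contribution costs nothing, and $\tr(P\sigma P\, H) \leq \tr(\sigma H)\cdot(\text{threshold bookkeeping})$, so I expect $\tr\sigma' H \leq \tr(\sigma H) \leq E/\eta$ once the high-energy tail is removed. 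In fact the cleaner bound is simply $\tr(P\sigma P\,H)\le \tr(\sigma H)$, but the point of the pinching is that $P\sigma P$ is supported below the threshold, so each eigenvalue contributing is $\le E/\eta$; I would just verify $\tr\sigma' H \le E/\eta$ directly from the support of $\sigma'$.

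The next step is the trace-distance estimate, and this is where the hypotheses combine. The key quantitative input is that $\sigma$ has small high-energy weight \emph{because} it is close to $\rho$, which is energy-bounded. Concretely, by Markov's inequality applied to $\rho$, $\tr(\rho Q) \leq \frac{\tr \rho H}{E/\eta} \leq \eta$, and since $\frac12\|\rho-\sigma\|_1 \le \eta$ the two states assign nearly the same weight to $Q$, giving $\tr(\sigma Q) \leq \eta + 2\eta = 3\eta$ (or a similar small constant times $\eta$). I would then apply the Gentle Operator Lemma \ref{gentleoperator} with measurement operator $T = P$ and $\kappa = \tr(\sigma Q)$: since $\tr(\sigma P) \geq 1-\kappa$, the lemma yields $\|\sigma - \sqrt{P}\sigma\sqrt{P}\|_1 = \|\sigma - P\sigma P\|_1 \leq 2\sqrt{\kappa}$. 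Adding the discarded-weight term $\tr(\sigma Q)\proj{0}$, whose trace norm is $\tr(\sigma Q) = \kappa \le \sqrt{\kappa}$, and using the triangle inequality gives $\|\sigma-\sigma'\|_1 \leq 2\sqrt{\kappa} + \kappa$, hence $\frac12\|\sigma-\sigma'\|_1 \le \sqrt{\kappa}+\tfrac12\kappa \le 3\sqrt\eta$ after substituting $\kappa \le 3\eta$ and bounding the arithmetic loosely.

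Finally, the bound $\frac12\|\rho-\sigma'\|_1 \le 4\sqrt\eta$ follows immediately from the triangle inequality, combining the hypothesis $\frac12\|\rho-\sigma\|_1 \le \eta$ with the bound $\frac12\|\sigma-\sigma'\|_1 \le 3\sqrt\eta$ just proved (and absorbing $\eta \le \sqrt\eta$ for $\eta \le 1$). The main obstacle I anticipate is purely bookkeeping: chasing the exact constants so that they land on $3$ and $4$ rather than something larger. In particular one must be careful that the Markov bound is applied to $\rho$ (the energy-bounded state) and then transferred to $\sigma$ via the trace-distance hypothesis, since $\sigma$ itself carries no a priori energy bound — indeed, the whole point of the lemma is that $\sigma$ may have unbounded energy and we are manufacturing a nearby state that does not. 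I would also double-check that the threshold $E/\eta$ is the right choice: it is dictated by wanting $\tr(\rho Q)\le\eta$ via Markov, which is exactly what makes $\sqrt{\kappa}=O(\sqrt\eta)$ and delivers the claimed rates.
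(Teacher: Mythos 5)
Your proposal is correct and is essentially the paper's own proof: the same compression map (pinching with the spectral projector onto energies $\leq E/\eta$ and dumping the residual weight into the ground state), the same Markov bound on $\rho$ transferred to $\sigma$ via the trace-distance hypothesis, and the same use of the Gentle Operator Lemma plus triangle inequalities. Your constants differ trivially (the paper gets $\tr\sigma(\1-P_\eta)\leq 2\eta$ rather than $3\eta$), but the arithmetic closes in the same way.
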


\begin{proof}
Take the subspace projector $P_{\eta}$ onto the energy subspace of all eigenvalues $\leq \frac{E}{\eta}$, and construct the compression map 
\[
  \cK(\xi) = P_{\eta}\xi P_{\eta} + \kappa_0 \tr\xi(\1-P_{\eta}),
\]
where $\kappa_0$ is an arbitrary state with support $P_{\eta}$, e.g. a ground state of $H$. Then, let $\sigma' \assign \cK(\sigma)$. This does it, as can be seen as follows:
$\tr\rho P_\eta \geq 1-\eta$, hence by the trace norm assumption, $\tr\sigma P_\eta \geq 1-2\eta$, and now we can apply the gentle operator lemma \ref{gentleoperator} and get $\frac12 \|\sigma-P_\eta\sigma P_\eta\|_1 \leq \sqrt{2\eta}$, hence by the triangle inequality $\frac12\|\sigma-\sigma'\|_1 \leq 2\eta+\sqrt{2\eta} \leq 3\sqrt{\eta}$. 

Using triangle inequality once more, we get the distance from $\rho$ bounded by $4\sqrt{\eta}$.
\end{proof}

\medskip
Using Lemma \ref{lemming}, we can process the ensemble further, letting $\omega_\Phi'=\cK(\omega_\Phi)$, using the compression map from the lemma: 
\begin{equation}
  \{\pi_\Phi, \mu({\rm d}\Phi)\}
  \xrightarrow{\cP(\rho_0\otimes\cdot)}
  \{\omega_\Phi, \mu({\rm d}\Phi)\}
  \xrightarrow{\,\cK\,}
  \{\omega_\Phi', \mu({\rm d}\Phi)\}
  \stackrel{4\sqrt{\epsilon}}{\approx} 
  \{\Phi(\rho_0), \mu({\rm d}\Phi)\},
\end{equation}
and now both $\Phi(\rho_0)$ and $\omega_\Phi'$ have their energy bounded by $\frac{\alpha E+\beta}{\epsilon} =: \widehat{E}$.
Assuming not only a grounded, but also finitary Hamiltonian $H_2$ with finite Gibbs entropy at all temperatures on the output space $\cH_2$, we then get the following chain of inequalities, lower bounding the program dimension:
\begin{equation}\begin{split}
  \label{useful-equation}
  \log d_P^\infty &\geq \chi\bigl(\{\pi_\Phi, \mu({\rm d}\Phi)\}\bigr) \\
        &\geq \chi\bigl(\{\omega_\Phi, \mu({\rm d}\Phi)\}\bigr) \\
        &\geq \chi\bigl(\{\omega_\Phi', \mu({\rm d}\Phi)\}\bigr) \\
        &\geq \chi\bigl(\{\Phi(\rho_0), \mu({\rm d}\Phi)\}\bigr)
                - 16\sqrt{\epsilon} S\left(\gamma\left(\frac{\widehat{E}}{4\sqrt{\epsilon}}\right)\right) - 2h(4\sqrt{\epsilon}) \\
        &\geq \chi\bigl(\{\Phi(\rho_0), \mu({\rm d}\Phi)\}\bigr)
                - 16\sqrt{\epsilon} S\left(\gamma\left(\frac{\alpha E+\beta}{4\sqrt{\epsilon}^{3}}\right)\right) - 2,
\end{split}\end{equation}
where the first three inequalities are by definition of the program ensemble and data processing (twice), and the fourth follows from Ref.~\cite[{Lemma~15}]{Winter16}. 
Here, $h(t)=-t\log t-(1-t)\log(1-t)$ is the binary entropy and we assume that $4\sqrt{\epsilon} \leq 1$.
(Alternatively, one could use the Meta-Lemma 16 from Ref.~\cite{Winter16}, which results in a similar bound with slightly better constants, but they are not that important for us, so we prefer the use of the simpler continuity bound.)

In the case of a class $\cC$ of unitary channels $\cU$, we first use Lemma \ref{lemma:E-Yang} to get a processor for the channels $\cU^{\otimes\ell}$, and then choose a fiducial state $\rho_0\in\cD(\cH^{\otimes\ell})$ such that $\tr\rho_0 H_j\leq E$ for all $j=1,\ldots,\ell$. The rest of the reasoning is then the same, except that to define the compression map in the application of Lemma \ref{lemming}, we have to consider total energy $\ell E$ at the input and $\ell(\alpha E+\beta)$ at the output.

\section{Programmable quantum processor for Gaussian \protect\\ channels}
\label{sec:gaussian}

We proved upper and lower bounds for a processor that implements all energy-limited channels with finite program register for an infinite-dimensional input state up to a certain energy $E>0$, i.e., $\tr(\rho H) \leq E$. In the following, the Hamiltonian is the photon-number operator $N\assign a^\ast a$. We now consider a special class of Gaussian channels: gauge-covariant Gaussian channels that are relevant in quantum optics, for instance. As explained in the previous sections, we also assume an energy constraint $\tr(\rho H) \leq E$ on the input. Thus, we already know from Section~\ref{section2} that there is a processor implementing an approximate version of all energy-limited Gaussian channels with finite-dimensional program register. These bounds also apply here. Since gauge-covariant Gaussian channels obey special properties compared to general channels, we aim to use those to develop better upper and lower bounds on the dimension of the program register.
We start with presenting the basic ingredients and fix the corresponding notation.

\subsection{Preliminaries and notation}
\label{subsec:Gaussian-stuff}
There are many review articles on Gaussian states and channels such as Refs.~\cite{Braunstein05, navarretebenlloch15, Weedbrook12} and textbooks such as Ref.~\cite{KokLovett}. Hence we review only briefly the most relevant notions and notations.

Density operators have an equivalent representation in terms of a Wigner function defined over the phase space, which is a real symplectic space $\mathbb{R}^{2M}$ equipped with the symplectic form.
The most fundamental object is the Weyl displacement operator defined as
\begin{equation}\label{displacementoperator}
\tilde D(\xi) \assign \exp (i X^T \Omega \xi),
\end{equation}
where $X=(x_1, p_1, \ldots, x_M, p_M)^T$ with $(x_j, p_j)$ the canonical quadrature operators for each mode $j \in \{1, \ldots, M\}$, 
\begin{equation}
    \Omega \assign \bigoplus_{j=1}^M \omega \text{ with } \omega=\begin{pmatrix} 0 & 1 \\ -1 & 0 \end{pmatrix}
\end{equation}
and $\xi \in \mathbb{R}^{2M}$. They establish a connection between operators and complex functions on phase space. \\
Then, an $M$-mode quantum state $\rho$ can be represented by its \emph{Wigner characteristic function} 
\begin{equation}
\tilde \chi_{\rho} (\xi) \assign \tr \rho \tilde D (\xi)
\end{equation}
for $\xi \in \mathbb{R}^{2M}$ and $\rho \in \cD(\cH^{\otimes M})$~\cite[Eq.~(12)]{Weedbrook12}.

\begin{definition}[Gaussian states]
\mylabel{Gaussianstates}
A Gaussian state is an $M$-mode quantum state with Gaussian characteristic function.
\end{definition}

An important example are \emph{coherent states} $\rho_G(\xi, \mathds{1}_2)$ of a single mode, $\xi=(\xi_1,\xi_2)$. They are pure and can be generated by displacing the vacuum state $\ket 0$, i.e., 
\begin{equation}
\ket \xi \assign D (\xi) \ket 0.
\end{equation}
The coherent states are the eigenstates of the annihilation operator,
\begin{equation}
a \ket \xi = \frac{\xi_1+i \xi_2}{\sqrt{2}} \ket \xi.
\end{equation}
They form an overcomplete basis, which means that any coherent state can be expanded in terms of all other coherent states because they are not orthogonal. Certain representations of states are based on a coherent-state expansion. Another class of Gaussian states are \emph{thermal states} $\rho_G ( 0 , (2N+1) \mathds{1}_2)$ with $N = \tr \rho a^\ast a$ being the mean photon number.

Let us now define Gaussian channels. 
\begin{definition}
A Gaussian channel is a quantum channel, i.e., a cptp map $\Phi : \cB (\cH^{\otimes n}) \to \cB ( \cH^{\otimes m})$ that maps every Gaussian state $\rho_G$ to a Gaussian state, i.e., $\Phi (\rho_G)$ is Gaussian as well. 
\end{definition}
Note that we can also input a non-Gaussian state into a Gaussian channel since Gaussianity is a property of the channel, not the state. The action of a Gaussian channel on a Gaussian state is described by the action on its first and second moments.
Gaussian channels acting on $M$ modes are characterized by a vector $\eta \in \mathbb{R}^{2M}$ and two real $2M \times 2M$ matrices $\cK$ and $\cN$. Those transform the displacement vector $d$ and the covariance matrix $\Gamma$ of the input state as follows
\begin{equation}
\label{Gaussianchanneltransformation}
d \to \cK d + \eta, \qquad \Gamma \to \cK \Gamma \cK^T + \cN.
\end{equation}
The matrices $\cK$ and $\cN$ obey the following relations: 
\begin{equation}
  \cN + i \Omega - i \cK \Omega \cK^T \geq 0,
\end{equation}
in order to map positive operators to positive operators and $\cN$ is symmetric. We can interpret $\cK$ as being responsible for the linear transformation of the canonical phase variables, while $\cN$ introduces quantum or classical noise. Recall that any channel can be conceived as a reduction of a unitary transformation acting on the system $\rho$ plus some environment $\rho_E$
\begin{equation}\mylabel{Gaussianunitary}
\Phi (\rho) = \tr_E U_G (\rho \otimes \rho_E) U_G^\ast.
\end{equation}
For a Gaussian channel $\Phi$, it is always possible to find a unitary dilation of this form with a  Gaussian unitary $U_S$ and a pure Gaussian environment state $\rho_E$. Special Gaussian unitaries are displacement operators [see Eq.~\eqref{displacementoperator}] and squeezing operators described by
\begin{equation}
    S(s)= \exp \left(\frac{s}{2} a^2 - \frac{s}{2} {a^\ast}^2 \right)
\end{equation}
with $s \in [0, \infty)$. It has zero displacement and its covariance matrix is
\begin{equation}
\begin{pmatrix} 
  e^{-s} & 0 \\ 0 & e^s 
\end{pmatrix}.
\end{equation}

As a special class of channels, we introduce \emph{gauge-covariant channels} which commute with the phase rotations. Since these channels are invariant under the rotation in phase space, they are also called \emph{phase-insensitive}.
\begin{definition}
A channel $\Phi$ that maps a single Bosonic mode to a single Bosonic mode is called gauge-covariant if it satisfies
\begin{equation}
\Phi ( e^{i\phi N_1} \rho e^{-i\phi N_1} ) = e^{i \phi N_2} \Phi (\rho) e^{-i \phi N_2}
\end{equation}
where $\phi$ is a real number and $N_{j} = a_j^\ast a_j$ is the photon-number operator in system $j=1,2$.
\end{definition}
Gauge-covariant channels are a concatenation of an attenuation and an amplification part. If one is interested in additivity questions of the Holevo information, for instance, then the parameter specifying the channel is considered to be real because the Holevo information is invariant under phase rotations which are passive transformations. However, viewed from the perspective of a processor, phases yield different outputs which we want to distinguish. Hence, we consider a complex parameter that specifies gauge-covariant channels.

The following proposition states this concatenation of an attenuator channel and an amplifier channel to obtain one-mode gauge-covariant quantum channels. Since we use complex parameters, we combine the phases to an additional phase-rotation part, which yields real parameters for the attenuation and amplification that can be found in Ref.~\cite{Holevobook}, for example. A quantum optical explanation including the complex parameter is given in Ref.~\cite[p.~31]{KokLovett}.

\begin{proposition}
\mylabel{propconcatenation} 
Any one-mode Bosonic gauge-covariant Gaussian channel $\Phi$ can be understood as a concatenation of a quantum-limited attenuator channel $\cT_\lambda$, a rotation channel $\cR_\varphi$ and a (diagonalizable) quantum-limited amplifier channel $\cA_\mu$, i.e., $\Phi = \cA_\mu \circ \cR_\varphi \circ \cT_\lambda$.
\end{proposition}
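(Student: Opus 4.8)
The plan is to work entirely at the level of first and second moments, using the characterization of one-mode Gaussian channels by the pair of real $2\times 2$ matrices $(\cK,\cN)$ from Eq.~\eqref{Gaussianchanneltransformation}, and to match the moment action of the claimed concatenation against that of an arbitrary gauge-covariant $\Phi$. First I would extract the structural consequences of gauge-covariance. Since the phase rotation $e^{i\phi N}$ acts on phase space as the $SO(2)$ rotation $R_\phi$, the covariance identity $\Phi(e^{i\phi N}\rho e^{-i\phi N}) = e^{i\phi N}\Phi(\rho)e^{-i\phi N}$ translates into the requirement that the moment map commute with all rotations, i.e.\ $R_\phi\cK = \cK R_\phi$ and $R_\phi\cN R_\phi^T = \cN$ for every $\phi$. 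The $2\times 2$ real matrices commuting with all of $SO(2)$ are exactly the conformal ones $\cK = k\,R_\varphi$ (a nonnegative scalar $k$ times a rotation), while the rotation-invariant symmetric matrices are the scalar multiples $\cN = \nu\,\mathds{1}_2$. This reduces the data of $\Phi$ to a transmissivity parameter $\tau := k^2$, a phase $\varphi$, and an added-noise parameter $\nu$; the same identity also forces the displacement $\eta$ to vanish.

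Next I would record the moment action of the three building blocks in the same convention (vacuum covariance $\mathds{1}_2$, so that a coherent state is $\rho_G(\xi,\mathds{1}_2)$): the quantum-limited attenuator $\cT_\lambda$ sends $\Gamma\mapsto\lambda\Gamma+(1-\lambda)\mathds{1}_2$ with $\lambda\in[0,1]$; the rotation $\cR_\varphi$ sends $\Gamma\mapsto R_\varphi\Gamma R_\varphi^T$; and the quantum-limited amplifier $\cA_\mu$ sends $\Gamma\mapsto\mu\Gamma+(\mu-1)\mathds{1}_2$ with $\mu\geq 1$. Composing them in the order $\cA_\mu\circ\cR_\varphi\circ\cT_\lambda$ gives $\cK_{\mathrm{comp}}=\sqrt{\mu\lambda}\,R_\varphi$ and $\cN_{\mathrm{comp}}=(2\mu-\mu\lambda-1)\,\mathds{1}_2$. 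Matching against $(k R_\varphi,\nu\mathds{1}_2)$ then amounts to solving the two scalar equations $\mu\lambda=\tau$ and $2\mu-\mu\lambda-1=\nu$, whose unique solution is $\mu=\tfrac12(\nu+\tau+1)$ and $\lambda=\tau/\mu$, with the phase $\varphi$ carried over directly.

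It remains to check that this solution respects the physical ranges $0\le\lambda\le 1$ and $\mu\ge 1$, and this is where the complete-positivity constraint does the essential work. For $\cK=kR_\varphi$ one has $\cK\Omega\cK^T=\tau\Omega$ (rotations preserve the symplectic form), so the condition $\cN+i\Omega-i\cK\Omega\cK^T\ge 0$ collapses to the $2\times2$ inequality $\nu\mathds{1}_2+i(1-\tau)\Omega\ge 0$, whose eigenvalues are $\nu\pm|1-\tau|$; hence complete positivity is equivalent to $\nu\ge|1-\tau|$. From this single inequality, $\mu\ge 1\iff\nu\ge 1-\tau$ and $\lambda\le 1\iff\nu\ge\tau-1$ both follow immediately, while $\lambda\ge 0$ is automatic from $\tau\ge 0$ and $\mu>0$.

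The main obstacle I expect is the first step --- turning the operator identity defining gauge-covariance into the clean matrix constraints on $(\cK,\cN)$ --- since it requires care with the conventions relating phase-space rotations to $e^{i\phi N}$ and with the (over)parametrization by a complex label; once the form $\cK=kR_\varphi$, $\cN=\nu\mathds{1}_2$ is in hand, the rest is a short and transparent parameter count. I would therefore keep the proof itself brief, citing the standard moment formulas for $\cT_\lambda$, $\cR_\varphi$, $\cA_\mu$ (e.g.\ from Ref.~\cite{Holevobook}) and presenting only the matching computation and the positivity check explicitly.
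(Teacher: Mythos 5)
Your proposal is correct, but note that the paper does not actually prove Proposition \ref{propconcatenation}: it imports the decomposition from the literature (Holevo's book for the real-parameter version, and Kok--Lovett for the quantum-optical account including the phase), and afterwards merely records the $(\cK,\cN)$ data of the three factors. What you supply is therefore a self-contained derivation of a statement the paper treats as known. Your route is sound at every step: gauge-covariance applied to the moment action forces $\eta=0$, forces $\cK$ to lie in the commutant of $SO(2)$ in $M_2(\mathbb{R})$ (the conformal matrices $kR_\varphi$), and forces $\cN=\nu\mathds{1}_2$; composing the paper's blocks $\cT_\lambda$ (with $\cK=\sqrt{\lambda}\mathds{1}_2$, $\cN=(1-\lambda)\mathds{1}_2$), $\cR_\varphi$, and $\cA_\mu$ (with $\cK=\sqrt{\mu}\mathds{1}_2$, $\cN=(\mu-1)\mathds{1}_2$) indeed gives $\cK_{\mathrm{comp}}=\sqrt{\mu\lambda}\,R_\varphi$ and $\cN_{\mathrm{comp}}=(2\mu-\mu\lambda-1)\mathds{1}_2$; and since $R_\varphi$ is symplectic, the complete-positivity condition collapses to $\nu\geq|1-\tau|$, which is precisely what guarantees $\mu=\tfrac12(\nu+\tau+1)\geq 1$ and $\lambda=\tau/\mu\in[0,1]$. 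Two small points you should make explicit in a write-up: (i) the matching argument tacitly uses the standard fact that a Gaussian channel is uniquely determined by its $(\cK,\cN,\eta)$ action on characteristic functions, so equality of moment data really implies equality of channels; (ii) the degenerate cases deserve a sentence --- for $\tau=0$ the phase $\varphi$ is arbitrary (the channel is a constant map onto a thermal state), and when $\nu=1-\tau$ one gets $\mu=1$, so the amplifier degenerates to the identity, sitting at the boundary of the paper's stated range $\mu\in(1,\infty)$. Neither affects correctness; your argument buys a transparent, elementary proof where the paper offers only a citation.
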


We briefly specify the three parts involved in the decomposition. First, there is an attenuator channel $\cT_\lambda$ with parameter $0 \leq \lambda \leq 1$, $\lambda \in \mathbb{R}$, which is called the \emph{attenuation factor}.
The two matrices describing the attenuator channel with the turn out to be $\cK = \sqrt{\lambda} \mathds{1}_2$ and $\cN = (1-\lambda) \mathds{1}_2$. The rotation in phase space is described by the unitary operator 
\begin{equation}
R(\varphi)=\exp(- i \varphi a^\ast a)
\end{equation}
which transforms the covariance matrix according to
\begin{equation}
\hat R =\begin{pmatrix} \phantom{-}
            \cos{\varphi} & \sin{\varphi} \\
           -\sin{\varphi} & \cos{\varphi}
        \end{pmatrix}.
\end{equation} 
We denote the corresponding rotation channel as $\cR_\varphi (\cdot) = e^{-i \varphi N} (\cdot) e^{i \varphi N} $.
The third part is an amplifier channel $\cA_\mu$ with parameter $\mu>1$. We consider quantum-limited amplifiers which are the least noisy deterministic amplifiers allowed by quantum mechanics. We obtain $\cK = \sqrt{\mu} \mathds{1}_2$ and $\cN = (\mu - 1) \mathds{1}_2$ for $\mu \in (1, \infty)$~\cite{navarretebenlloch15}.

\subsection{Gauge-covariant Gaussian channels}

In the first part, we considered an infinite-dimensional input state with a certain maximal energy $E$ and showed that there is a programmable quantum processor able to implement approximations of all $(\alpha, \beta)$-energy-limited unitary channels with finite-dimensional program register. 

In this section, we study the class of $(\alpha, \beta )$-energy-limited gauge-covariant Gaussian channels $\cG \cC \cG (\alpha, \beta)$. Recall that we denote the processor implementing these channels as $\epsilon\text{-EPQP}_{\cG \cC \cG (\alpha, \beta)}$. We give upper and lower bounds on the dimension of the program register of an approximate programmable quantum processor that implements all $(\alpha, \beta)$-energy-limited gauge-covariant Gaussian channels with an input and output state of a certain maximal energy.

\subsubsection{Upper bounds for gauge-covariant Gaussian channels}

To obtain upper bounds on the program dimension $d_P^\infty$, we establish an $\epsilon$-net on  $\cG \cC \cG (\alpha, \beta)$ to get a discrete approximation of the output. Afterwards, we use the PET to construct a processor that implements the channels of the $\epsilon$-net with program dimension equal to the cardinality of the $\epsilon$-net. With this construction, we obtain upper bounds on the program dimension of a processor implementing all $(\alpha, \beta)$-energy-limited gauge-covariant Gaussian channels.

\begin{theorem}[Upper bounds]\label{thm:upperboundsgaugecovariant}
Let $\epsilon > 0$ and $E > 0$. Then, there exists an infinite-dimensional $\epsilon\text{-EPQP}_{\cG\cC\cG(\alpha, \beta)}$ $\cP \in CPTP ( \cH_1 \otimes \cH_P, \cH_2)$ whose program register is upper bounded as follows:
\begin{equation}
d_P^\infty \leq \frac{C E^2 (2E+2) (\beta+1) }{\epsilon^6}
\end{equation}
for a constant $C$.
\end{theorem}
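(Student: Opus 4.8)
The plan is to construct an explicit $\epsilon$-net over the parameter space of one-mode gauge-covariant Gaussian channels and then invoke the processor-encoding technique (PET) to build a processor whose program dimension equals the net cardinality. By Proposition~\ref{propconcatenation}, every such channel decomposes as $\Phi = \cA_\mu\circ\cR_\varphi\circ\cT_\lambda$, so the channel is specified by three real parameters: the attenuation factor $\lambda\in[0,1]$, the phase $\varphi\in[0,2\pi)$, and the amplifier gain $\mu\in(1,\infty)$ (with $\mu$ effectively bounded above by the energy constraint, since an $(\alpha,\beta)$-energy-limited channel cannot amplify too strongly). The strategy is to discretize each parameter separately and control how discretization errors in $\lambda,\varphi,\mu$ propagate to an error in the $E$-constrained diamond norm on the output.

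First I would bound the range of the amplifier parameter $\mu$ using the energy limitation: since the output energy is controlled by $\alpha E+\beta$ and amplification scales energy roughly like $\mu$, the relevant values of $\mu$ lie in an interval whose length is $\bigoh(\beta/E)$ or similar, giving a grid of size $\bigoh\bigl((\beta+1)/\epsilon_\mu\bigr)$ for spacing $\epsilon_\mu$. Next, for each of the three parameters I would establish a \emph{Lipschitz-type continuity estimate}: if two channels $\Phi,\Phi'$ differ only slightly in one parameter, then $\frac12\|\Phi-\Phi'\|_\diamond^E$ is bounded by the parameter difference times a factor depending on $E$. Concretely, displacing/rotating/attenuating by a small amount moves coherent-state test inputs (and their purifications) only slightly in trace norm, and the energy constraint $\tr\rho H\le E$ bounds how far apart outputs can be. I expect the rotation parameter to contribute an energy-dependent factor (a phase rotation $e^{-i\varphi N}$ acting on states of energy up to $E$ moves them at a rate proportional to $\sqrt{E}$), which is the source of the $(2E+2)$ and the overall $E^2$ in the stated bound.

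Having fixed per-parameter spacings $\epsilon_\lambda,\epsilon_\varphi,\epsilon_\mu$ so that each induces error at most $\epsilon/3$ in the $E$-constrained diamond norm, the triangle inequality gives a net of channels that $\epsilon$-approximates every $\Phi\in\cG\cC\cG(\alpha,\beta)$. The net cardinality is the product of the three grid sizes,
\[
  d_P^\infty \;\lesssim\; \frac{1}{\epsilon_\lambda}\cdot\frac{1}{\epsilon_\varphi}\cdot\frac{\beta+1}{\epsilon_\mu},
\]
and substituting the continuity-dictated spacings (each of order $\epsilon$ divided by the appropriate power of $E$) should produce the claimed scaling $\frac{CE^2(2E+2)(\beta+1)}{\epsilon^6}$. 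Finally I would apply the PET construction recalled earlier to turn this finite net $\cC=\{\Phi_i\}$ into an explicit processor $\cP$ with program dimension equal to the net cardinality, automatically satisfying the required $\epsilon$-approximation in the $E$-constrained diamond norm.

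The main obstacle I anticipate is the continuity estimate for the \emph{amplifier} parameter. Unlike attenuation and rotation, amplification increases energy, so controlling $\frac12\|\cA_\mu-\cA_{\mu'}\|_\diamond^E$ requires care: one must track how the output energy $\alpha E+\beta$ enters the bound and ensure the net stays finite despite the noncompact range $\mu\in(1,\infty)$. Closing this gap cleanly — showing that beyond some $\mu_{\max}$ no gauge-covariant channel in $\cG\cC\cG(\alpha,\beta)$ can occur, and that within $[1,\mu_{\max}]$ a grid of spacing $\epsilon_\mu$ suffices — is where the energy limitation does the essential work and where the combinatorial factor $(\beta+1)$ originates. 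Handling the purified (reference-extended) test states in the $E$-constrained diamond norm, rather than just single-system inputs, adds a further layer that must be managed by the continuity bounds, but this is routine once the single-system estimates are in hand.
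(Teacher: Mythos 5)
Your overall route --- decompose via Proposition~\ref{propconcatenation}, cut off the amplifier gain using the energy limitation, build per-parameter nets, lift them to channel nets by continuity in the $E$-constrained diamond norm, and finish with the PET --- is exactly the paper's proof. The genuine gap is in the continuity estimates you assume. You posit Lipschitz-type bounds, i.e., $\frac12\|\Phi-\Phi'\|_\diamond^E$ controlled by the \emph{parameter difference} times an $E$-dependent factor, and accordingly take spacings ``of order $\epsilon$''. Such bounds are false: in the energy-constrained diamond norm the dependence on the parameter difference is only H\"older of exponent $1/2$, and this is essentially optimal. For phase rotations, for instance, the test state $\ket{\psi}=\sqrt{1-p}\,\ket{0}+\sqrt{p}\,\ket{n}$ with $pn\leq E$ and $n\approx 1/|\varphi-\varphi'|$ gives $\frac12\|\cR_\varphi-\cR_{\varphi'}\|_\diamond^E \gtrsim \sqrt{E\,|\varphi-\varphi'|}$, so no Lipschitz estimate can hold. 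The matching upper bounds the paper invokes (from Becker--Datta) are $\|\cT_\lambda-\cT_{\lambda_i}\|_\diamond^E \leq 4\sqrt{2}\sqrt{E\epsilon_\lambda}$, $\|\cR_\varphi-\cR_{\varphi_j}\|_\diamond^E \leq 4\sqrt{E\epsilon_\varphi}$ and $\|\cA_\mu-\cA_{\mu_k}\|_\diamond^E \leq 4\sqrt{2}\sqrt{(2E+2)\epsilon_\mu}$.

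This is not a cosmetic point: the square-root continuity is the entire source of the exponent $6$ in the theorem. It forces spacings $\epsilon_\lambda\sim\epsilon^2/E$, $\epsilon_\varphi\sim\epsilon^2/E$, $\epsilon_\mu\sim\epsilon^2/(2E+2)$, whence grid sizes of order $E/\epsilon^2$, $E/\epsilon^2$ and $(\beta+1)(2E+2)/\epsilon^2$, whose product is the claimed $C E^2(2E+2)(\beta+1)/\epsilon^6$. Your sketch, with spacings of order $\epsilon$, would give a net of size $\bigoh(1/\epsilon^3)$, which is internally inconsistent with the $1/\epsilon^6$ scaling you say it should reproduce; the stronger $1/\epsilon^3$ bound cannot be established because the Lipschitz estimates underlying it fail. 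A smaller correction: the amplifier cutoff comes from feeding the vacuum through the channel --- the attenuator and rotation fix it, and the quantum-limited amplifier turns it into a thermal state of mean photon number $\mu-1$ --- so the $(\alpha,\beta)$-limitation forces $\mu\leq\beta+1$, independently of $E$ (not an interval of length $\bigoh(\beta/E)$); your resulting grid-size count $\bigoh\bigl((\beta+1)/\epsilon_\mu\bigr)$ is nonetheless the right one.
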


\begin{proof}
Since we consider gauge-covariant Gaussian channels, we use Proposition~\ref{propconcatenation} which states that those channels can be described as concatenation of an attenuator channel $\cT_\lambda$, a rotation channel $\cR_\varphi$ and a quantum-limited amplifier channel $\cA_{\mu}$ and thus, we construct one $\epsilon$-net on the set of attenuator channels, one on rotations and one on the amplifier channels. They are specified by one parameter each. 

Let us consider the attenuator channels first. Recall that the parameter $0 \leq \lambda< 1$ is the attenuation parameter. Thus, we construct an $\epsilon_\lambda$-net for $\lambda$ with $\{ \lambda_i \}_{i=1}^{|\cI_\lambda|} \subset [0, 1)$
such that for every $\lambda$ there is an index $i \in \cI_\lambda$ satisfying
\begin{equation}
|\lambda - \lambda_i | \leq \epsilon_\lambda.
\end{equation}
The range of $\lambda$ forms a compact interval. The cardinality of such a net is
\begin{equation}
|\cI_\lambda| \leq \bigg( \frac{1}{\epsilon_\lambda} + 1\bigg).
\end{equation}
Analogously, we construct an $\epsilon_\varphi$-net for the parameter $\varphi \in [0, 2\pi]$ such that for every $\varphi$, there exists an index $j \in \cI_\varphi$ with 
\begin{equation}
    |\varphi - \varphi_j| \leq \epsilon_\varphi
\end{equation}
with cardinality
\begin{equation}
|\cI_\varphi| \leq \bigg( \frac{2 \pi}{\epsilon_\varphi} +1 \bigg).
\end{equation}
We continue with the parameter describing the amplifier channel. Note that amplifier channels enlarge the energy. The larger the amplification factor, the higher the energy of the output. The $(\alpha, \beta)$-energy limitation of the considered channels yields a maximal amplification factor $\mu_{\text{max}}$. Let us specify this parameter.

To obtain a necessary condition for the parameter $\mu$, we consider the vacuum state $\rho_G (0,\mathds{1}_2)$ as input state with zero energy. The attenuator channel with $\cK=\sqrt{\lambda}\mathds{1}_2$ and $\cN=(1-\lambda)\mathds{1}_2$ and $\eta=0$ (see Subsection~\ref{subsec:Gaussian-stuff}) maps the vacuum state to the vacuum state. The amplifier channel with $\cK=\sqrt{\mu}\mathds{1}_2$ and $\cN=(\mu-1)\mathds{1}_2$, $\eta=0$ maps it to $\rho_G(0,(2\mu-1) \mathds{1}_2)$ with mean photon number
\begin{equation}
    \tr( \rho_G (0, (2\mu-1) \mathds{1}_2)) = \mu-1
\end{equation}
where we use $\tr(\rho_G a^\ast a)=\frac{1}{2} \tr(\Gamma) + \frac{1}{4}d^2 -\frac{1}{2}$ for a general $\rho_G(d, \Gamma)$~\cite[Eq.~(6.60)]{navarretebenlloch15}. This yields the necessary condition
\begin{equation}
    \mu \leq \beta +1.
\end{equation}
Hence, we choose 
\begin{equation}\label{eq:mumax}
    \mu_{max} =\beta +1.
\end{equation}

Due to the energy constraint and $\mu_{\text{max}}$, the values $\mu \in (1, \mu_{\text{max}}]$ form a compact set and we construct an $\epsilon_\mu$-net $\{ \mu_k\}_{k=1}^{|\cI_\mu|} \subset (1, \mu_{\text{max}}]$ such that for every $\mu$ there is an index $k \in \cI_{\mu}$ such that
\begin{equation}
|\mu - \mu_k | \leq \epsilon_\mu.
\end{equation}
The cardinality of this net reveals as
\begin{equation}
|\cI_\mu| \leq \bigg(  \frac{\mu_{\max}-1}{\epsilon_\mu}  + 1 \bigg).
 \end{equation}
The overall cardinality for the parameter appears as follows 
\begin{equation}
|\cI_{\Phi}| = |\cI_\lambda| |\cI_{\varphi}| |\cI_\mu| \leq \bigg( \frac{1}{\epsilon_\lambda} + 1\bigg) \bigg( \frac{2 \pi}{\epsilon_\varphi} +1 \bigg) \bigg(  \frac{\mu_{\text{max}}-1}{\epsilon_\mu}  + 1 \bigg) \leq \frac{16 \mu_{\text{max}}}{\epsilon_\lambda \epsilon_\varphi \epsilon_\mu} \leq  \frac{16 (\beta +1)}{\epsilon_\lambda \epsilon_\varphi \epsilon_\mu},
\end{equation}
where we use Eq.~\eqref{eq:mumax} in the last inequality.
Since we are interested in the cardinality of $\epsilon$-nets in $\cG \cC \cG (\alpha, \beta)$, we lift the parameter nets to nets on the set of channels. We use the $E$-diamond norm distance (see Definition~\ref{energydiamondnorm}). 

First, for the attenuator channel, we know from Ref.~\cite[Example 5]{Becker19} that
\begin{equation}
\|\cT_{\lambda} - \cT_{\lambda_i} \|_\diamond^E \leq 4 \sqrt{2} \sqrt{E \epsilon_\lambda}.
\end{equation}
Secondly, concerning the rotation channel, we use the result by Becker and Datta~\cite[Proposition 3.2]{Becker19} for the one-parameter unitary semigroup of rotations
\begin{equation}
\| \cR_\varphi - \cR_{\varphi_j}\|_\diamond^E \leq 4 \sqrt{E} \sqrt{|\varphi - \varphi_j|} = 4 \sqrt{E \epsilon_\varphi}.
\end{equation}
Thirdly, the norm of the distance of the amplifier channels can be bounded as~\cite[Example 5]{Becker19}
\begin{equation}
\|\cA_{\mu} - \cA_{\mu_k} \|_\diamond^E \leq 4 \sqrt{2} \sqrt{(2E+2) \epsilon_\mu}.
\end{equation}
For the $(\alpha, \beta)$-energy-limited gauge-covariant Gaussian channels we overall obtain
\begin{equation}
    \begin{split}
\| \Phi - \Phi_i \|_\diamond^E &\leq \|\cA_\mu \circ \cR_\varphi \circ \cT_\lambda - \cA_{\mu_k} \circ \cR_{\varphi_j} \circ \cT_{\lambda_i} \|_\diamond^E \\
& \leq  \| \cT_\lambda - \cT_{\lambda_i}  \|_\diamond^E + \|\cR_\varphi - \cR_{\varphi_j} \|_\diamond^E + \| \cA_\mu - \cA_{\mu_k} \|_\diamond^E  \\
&\leq 4 \sqrt{2} \sqrt{E} \sqrt{\epsilon_\lambda} + 4 \sqrt{E} \sqrt{\epsilon_\varphi} + 4 \sqrt{2} \sqrt{2E+2} \sqrt{\epsilon_\mu} \eqqcolon \epsilon.
    \end{split}
\end{equation}
We express $\epsilon_\lambda$, $\epsilon_\varphi$ and $\epsilon_\mu$ in terms of the $\epsilon$-parameter that specifies the accuracy of the processor:
\begin{equation}
\epsilon_\lambda = \frac{\epsilon^2}{C_\lambda E}, \qquad
    \epsilon_\varphi =\frac{\epsilon^2}{C_\varphi E}, \qquad \epsilon_\mu = \frac{\epsilon^2}{C_\mu ( 2E+2)}.
\end{equation}
Inserting these expressions into $|\cI_\Phi|$ we get
\begin{equation}
\begin{split}
    |\cI_\Phi|
    & \leq \frac{16 \  (\beta +1) \  C_\lambda \  C_\varphi \ C_\mu\  E (2E+2)}{\epsilon^6} = \frac{C E^2 (2E+2) \ (\beta +1)}{\epsilon^6}.
    \end{split}
\end{equation}
We use the PET to construct an $\epsilon\text{-EPQP}_{\cG\cC\cG(\alpha, \beta)}$ with program dimension 
\begin{equation}
  d_P^\infty=|\cI_\Phi| \leq \frac{C E^2 (2E+2) (\beta+1) }{\epsilon^6}, 
\end{equation}
concluding the proof.
\end{proof}

\subsubsection{Lower bounds for gauge-covariant Gaussian channels}
\label{subsec:lowerboundsgaugecovariant}

In Proposition~\ref{propconcatenation}, we obtained three different building blocks for the $\epsilon$-net for the upper bounds: attenuation, amplification, and phase rotation. It turns out that for lower bounds, the third part is particularly relevant because it yields $\epsilon$-divergence.

\paragraph{Phase rotation.}
To lower-bound the program dimension $d_P^\infty$ of an $\epsilon\text{-EPQP}_{\cG \cC \cG (\alpha, \beta)}$, we proceed in two steps. First, we apply Lemma~\ref{lemma:E-Yang} to the phase-rotation channels $\cR_\varphi(\cdot) = e^{-i\varphi N} (\cdot) e^{i\varphi N}$, which are $(1,0)$-limited. This results in a modified processor that implements the rotation $\ell$ times in parallel. 
Motivated by the fact that all information for the implementation of $\cR_\varphi$ is contained in the program state, which has to contain almost the same information as the $\ell$-tensor power phase rotation, we design an ensemble on the output space to obtain lower bounds on the program dimension by bounding the Holevo information in the second step. The resulting lower bounds are stated in the following theorem.

\begin{theorem}[Lower bounds phase rotations]
\label{thm:Gaussianlowerbounds}
Let $\epsilon > 0$ and $E > 0$. Then, for every infinite-dimensional $\epsilon\text{-EPQP}_{\cG\cC\cG(\alpha, \beta)}$ $\cP \in CPTP ( \cH \otimes \cH_P, \cH)$, with $\alpha\geq 1$ and $\beta\geq 0$, its program register can be lower bounded as follows: 
\begin{equation}
  d_P^\infty \geq \frac{1}{8192e}\frac{\delta^2 E}{(\sqrt{2}E+1)^\delta}\left(\frac{1}{\sqrt{2\epsilon}}\right)^{1-\delta} \geq C \delta^2 \left( \frac{E}{\sqrt{\epsilon}} \right)^{1-\delta},
\end{equation}
for any $0<\delta<1$, and the latter for $E\geq 1$ and an absolute constant $C$.
\end{theorem}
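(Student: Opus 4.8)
The plan is to run the information-theoretic lower-bound scheme sketched at the end of Section~\ref{sec:recycling} on the phase-rotation channels $\cR_\varphi(\cdot)=e^{-i\varphi N}(\cdot)e^{i\varphi N}$, which are $(1,0)$-energy-limited with $(1,0)$-energy-limited inverse $\cR_{-\varphi}$ and which lie in $\cG\cC\cG(\alpha,\beta)$ for every $\alpha\ge1$, $\beta\ge0$. First I would feed the given $\epsilon\text{-EPQP}_{\cG\cC\cG(\alpha,\beta)}$ $\cP$, together with a pure program $\psi_\varphi$ for $\cR_\varphi$, into Lemma~\ref{lemma:E-Yang}. Since $\beta'=0$ the single-step error is $\epsilon'=\sqrt{2\epsilon}$, so the replicated processor $\widehat\cP$ implements $\cR_\varphi^{\otimes\ell}$ with error $\eta=2\ell\sqrt{2\epsilon}$ in the $(E,\ldots,E)$-constrained diamond norm, \emph{reusing the same program} $\psi_\varphi$. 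The decisive structural point is that $\cR_\varphi^{\otimes\ell}=e^{-i\varphi N^{(\ell)}}(\cdot)e^{i\varphi N^{(\ell)}}$ with $N^{(\ell)}=\sum_{j=1}^\ell N_j$, so replication converts a single phase into a rotation generated by the \emph{total} photon number.

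Next I would pick the fiducial input and averaging measure so as to maximise the encoded phase information. Take $\mu$ uniform on $[0,2\pi)$ and let $\ket{\Psi}=\frac{1}{\sqrt{K+1}}\sum_{m=0}^{K}\ket{\phi_m}$ on $\cH^{\otimes\ell}$, where each $\ket{\phi_m}$ is a Fock product state of total photon number $m$ with its photons spread as evenly as possible over the $\ell$ modes, and $K=\lfloor 2\ell E\rfloor$. The $\ket{\phi_m}$ are orthonormal eigenstates of $N^{(\ell)}$, and by the even spreading each marginal carries mean energy $\approx\frac{K}{2\ell}\le E$, so $\rho_0=\proj{\Psi}$ meets all $\ell$ energy constraints. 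Because the outputs $\cR_\varphi^{\otimes\ell}(\rho_0)=\proj{\Psi_\varphi}$ with $\ket{\Psi_\varphi}=\frac{1}{\sqrt{K+1}}\sum_m e^{-i\varphi m}\ket{\phi_m}$ are pure, and uniform averaging dephases in the $N^{(\ell)}$-eigenbasis, the averaged state is $\frac{1}{K+1}\sum_m\proj{\phi_m}$, so the ideal ensemble has Holevo information \emph{exactly} $\chi\bigl(\{\cR_\varphi^{\otimes\ell}(\rho_0)\}\bigr)=\log(K+1)$. (A mere tensor power of a fixed single-mode probe would only give a concentrated total-photon-number distribution and a gain of $\tfrac12\log\ell$; the entangled, flat-in-total-number probe is what produces the full $\log(\ell E)$.)

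Then I would run the data-processing chain of Eq.~\eqref{useful-equation}, $\log d_P^\infty\ge\chi(\{\psi_\varphi\})\ge\chi\bigl(\{\widehat\cP(\rho_0\otimes\psi_\varphi)\}\bigr)$, and relate the last quantity to $\log(K+1)$ by continuity. The key trick is that every ideal output lives in the $(K+1)$-dimensional space $\cS=\operatorname{span}\{\ket{\phi_m}\}$, on which $N^{(\ell)}$ restricts to a single-mode-like truncated number operator with spectrum $\{0,\ldots,K\}$. Since the actual output is $\eta$-close to the ideal, it has weight $\ge1-2\eta$ on $\cS$, so the Gentle Operator Lemma~\ref{gentleoperator} compresses it into $\cS$ at a cost of $O(\sqrt\eta)$ in trace norm, after which the output energy is automatically at most $K$. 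Applying the continuity bound of Eq.~\eqref{useful-equation} \emph{within} $\cS$ therefore costs only $O(\sqrt\eta)\log(K+1)+O(1)$, giving $\log d_P^\infty\ge\bigl(1-O(\sqrt\eta)\bigr)\log(K+1)-O(1)$.

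The main obstacle is exactly this control of the continuity correction: bounding the output entropy by the Gibbs entropy of $N^{(\ell)}$ fails, because that entropy grows linearly in $\ell$ and would swamp the logarithmic gain $\log(K+1)$; passing to the total-photon-number subspace $\cS$, where the effective Hamiltonian is single-mode-like, is what reduces the correction to a multiplicative $\bigl(1-O(\sqrt\eta)\bigr)$ factor. It then remains only to choose the replication count $\ell$. Setting $\ell$ so that $\eta=2\ell\sqrt{2\epsilon}$ is of order $\delta^2$, i.e.\ $\ell\approx\delta^2/(2\sqrt{2\epsilon})$ and hence $K+1\approx\delta^2E/\sqrt{2\epsilon}$, makes the continuity correction a $\delta$-fraction of $\log(K+1)$, so that $\log d_P^\infty\ge(1-\delta)\log(K+1)$. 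Exponentiating gives $d_P^\infty\gtrsim(K+1)^{1-\delta}$, which is of order $\delta^2\bigl(E/\sqrt\epsilon\bigr)^{1-\delta}$; carrying the explicit constants from the Gentle Operator and continuity estimates through this last step yields the displayed bound $d_P^\infty\ge\frac{\delta^2 E}{8192e\,(\sqrt2 E+1)^\delta}\bigl(1/\sqrt{2\epsilon}\bigr)^{1-\delta}$, together with its simplification for $E\ge1$.
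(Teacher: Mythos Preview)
Your proposal is correct and follows essentially the same route as the paper: replicate the phase rotations via Lemma~\ref{lemma:E-Yang}, build a fiducial state supported on a ``virtual Fock space'' indexed by total photon number, evaluate the ideal Holevo information, and control the continuity correction by compressing to that subspace before choosing $\ell$ to balance gain against loss. The one technical difference is that you take a \emph{uniform} superposition over $\{0,\ldots,K\}$ with $K=\lfloor 2\ell E\rfloor$, so your virtual space $\cS$ is finite-dimensional from the outset and the continuity step reduces to a plain Fannes-type bound with $\log(K+1)$ in place of a Gibbs entropy; the paper instead takes thermal amplitudes on the full (infinite) virtual Fock space, obtaining $g(\ell E)$ for the ideal Holevo information, and then invokes Lemma~\ref{lemming} for the energy truncation. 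Your variant is a mild simplification that avoids Lemma~\ref{lemming} at the cost of a slightly more delicate check that each single-mode marginal really satisfies the energy constraint $E$ (use the symmetrised states $\ket{\text{``$n$''}}$ rather than bare product states, or shave $K$ by a constant, to make this clean).
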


\begin{proof}
The gauge-covariant Gaussian channels contain the one-parameter group of unitaries $\cR_\varphi$ of phase rotations. 
Applying Lemma~\ref{lemma:E-Yang} to these unitaries, we obtain
\begin{equation}
    \label{eq:ell-phaserotations}
    \frac12 \left\|\widehat{\cP}(\cdot \otimes \psi_R) - \cR_\varphi^{\otimes \ell} \right\|_\diamond^{(E, \ldots, E)}  \leq 2 \ell \sqrt{2 \epsilon} .
\end{equation}

Next, we create a state of high total energy on which we act with $\cR_\varphi^{\otimes \ell}$ to generate an ensemble, where $\varphi$ is uniformly distributed on $[0, 2 \pi]$. 
The ideal output is the phase rotation $\cR_\varphi^{\otimes \ell}$ on $\ell$ modes. Since the photon number is the sum of those on the subsystems, the unitary is generated by the total photon number, i.e., $R_\varphi^{\otimes \ell} = e^{i \varphi (N_1 + \ldots + N_\ell)}$. Since this yields a phase multiplication $e^{i\varphi n}$ on each degenerate subspace of total photon number $n$, the $\ell$-fold tensor-product unitary is diagonal in the photon-number basis and we use only one state from each eigenspace. 

For each total number $n$ of photons, we define a unique way of distributing them across the $\ell$ modes in an as equilibrated way as possible. For instance, choose a partition of $n$ into non-negative integers, $n=n_1+\ldots+n_\ell$ and define $\ket{\text{``$n$''}}$ as the unit norm symmetrization of  $\ket{n_1}\cdots\ket{n_\ell}$,
\begin{equation}
    \ket{\text{``$n$''}} :\propto \sum_{\pi\in S_\ell} \ket{n_{\pi(1)}}\cdots\ket{n_{\pi(\ell)}}. 
\end{equation}
This state evidently has total photon number $n$, and the expected photon number in each mode is $\frac{n}{\ell}$.

As input state to the processor we choose
\begin{equation}
    \ket \nu =\sum_{n=0}^\infty c_n \ket{\text{``$n$''}},
\end{equation}
with amplitudes such that $\bra{\nu} N \ket{\nu} = \sum_n n|c_n|^2 \leq \ell E$.
The following calculations of the Holevo information take place on the subspace spanned by $\ket{\text{``$n$''}}$. Note that on that subspace, the total number operator $N = N_1+\ldots+N_\ell$ is isomorphic to a number operator, hence the time evolution of $\cR_\varphi^{\otimes\ell}$ leaves this ``virtual Fock space'' $\cH_V \subset \cH^{\otimes\ell}$ invariant. 

Since all information about the output of the processor is contained in the program state, 
\begin{equation}
    \log d_P^\infty \geq \chi\left(\left\{\widehat{\cP}(\proj{\nu}\otimes\psi_\varphi), \frac{{\rm d}\varphi}{2\pi}\right\}\right),
\end{equation}
following the approach explained at the end of Section \ref{sec:recycling}. We compare these ensemble states $\omega_\varphi=\widehat{\cP}(\proj{\nu}\otimes\psi_\varphi)$ with the ideal ones $\cR_\varphi^{\otimes\ell}(\proj{\nu})$, which are supported on $\cH_V$ with projector $P_V$. Thus, defining the compression map onto that subspace, 
\[
  \cK_V(\xi) = P_V\xi P_V + \kappa_0 \tr\xi(\1-P_V),
\]
and letting $\omega_\varphi'=\cK_V(\omega_\varphi)$, we have, by Eq.~\eqref{eq:ell-phaserotations} and the contractivity of the trace norm, that
\begin{equation}
  \frac12 \left\| \cR_\varphi^{\otimes\ell}(\proj{\nu}) - \omega_\varphi' \right\|_1 
  \leq \frac12 \left\| \cR_\varphi^{\otimes\ell}(\proj{\nu}) - \omega_\varphi \right\|_1 
  \leq 2\ell\sqrt{2\epsilon},
\end{equation}
and on the other hand by data processing for the Holevo information, 
\begin{equation}
    \log d_P^\infty \geq \chi\left(\left\{\omega_\varphi', \frac{{\rm d}\varphi}{2\pi}\right\}\right).
\end{equation}

As mentioned before, the Hamiltonian restricted to the virtual Fock space $\cH_V$ is isomorphic to a normal number operator $H_V=\sum_{n=0}^\infty n\proj{\text{``$n$''}}$, and the ideal ensemble states have energy $\tr \cR_\varphi^{\otimes\ell}(\proj{\nu}) H_V = \tr \proj{\nu} H_V \leq \ell E$. Now we invoke Lemma \ref{lemming} with $\eta = 2\ell\sqrt{2\epsilon}$, yielding the compression map $\cK$ onto the subspace with energy $\leq \frac{\ell E}{2\ell\sqrt{2\epsilon}} = \frac{E}{2\sqrt{2\epsilon}}$, which gives rise to states $\omega_\varphi'' = \cK(\omega_\varphi)$ with 
\begin{equation}
  \frac12 \left\| \cR_\varphi^{\otimes\ell}(\proj{\nu}) - \omega_\varphi'' \right\|_1 
  \leq 4 \sqrt{2\ell\sqrt{2\epsilon}},
\end{equation}
and so finally
\begin{equation}\begin{split}
  \log d_P^\infty
    &\geq \chi\left(\left\{\omega_\varphi'', \frac{{\rm d}\varphi}{2\pi}\right\}\right) \\
    &\geq \chi\left(\left\{\cR_\varphi^{\otimes\ell}(\proj{\nu}), \frac{{\rm d}\varphi}{2\pi}\right\}\right) 
          - 16 \sqrt{2} \sqrt{\ell\sqrt{2\epsilon}} g\left(\frac{E}{16\sqrt{\epsilon}\sqrt{\ell \sqrt{2\epsilon}}}\right) - 2 ,
\end{split}\end{equation}
using Eq.~\eqref{useful-equation} at the end of Section \ref{sec:recycling}, where $g(N)= (N+1)\log(N+1)-N\log N$ is the well-known formula for the von Neumann entropy of the thermal (Gibbs) state of mean photon number $N$. On the other hand, it is easily seen that 
\begin{equation}
  \chi\left(\left\{\cR_\varphi^{\otimes\ell}(\proj{\nu}), \frac{{\rm d}\varphi}{2\pi}\right\}\right) 
  = H(\{|c_n|^2\}), 
\end{equation}
which itself is maximized for the thermal distribution with mean photon number $\sum_n n|c_n|^2 = \ell E$, 
and the maximum is $g(\ell E)$. Using the elementary upper and lower bounds
\[
  \log N \leq g(N) \leq \log(N+1)+\log e, 
\]
and letting $\sqrt{\ell} = \frac{\delta}{16\sqrt{2\sqrt{2\epsilon}}}$, with $0<\delta<1$, we thus get 
\begin{equation}\begin{split}
  \log d_P^\infty
    &\geq \log(\ell E) - \delta\log\left(\frac{\sqrt{2} E}{\delta \sqrt{\epsilon}}+1\right) - \delta\log e - 2 \\
    &\geq (1-\delta)\log\frac{1}{\sqrt{2\epsilon}} + \log\frac{\delta^2 E}{512} - \delta\log(\sqrt{2}E+1)-\log(16e).
\end{split}\end{equation}
The final form of the bound is hence
\[
  d_P^\infty \geq \frac{1}{8192e}\frac{\delta^2 E}{(\sqrt{2}E+1)^\delta}\left(\frac{1}{\sqrt{2\epsilon}}\right)^{1-\delta} \geq C \delta^2 \left( \frac{E}{\sqrt{\epsilon}} \right)^{1-\delta},
\]
as claimed.
\end{proof}

Note that in this way, considering only the phase-rotation part of the decomposition of gauge-covariant Gaussian channels, we obtain lower bounds on the program dimension that diverge with $\epsilon$, confirming that the divergence of the upper bounds is not an artifact of the net construction.
Qualitatively, one can see that this is a method to obtain similar lower bounds on the program dimension to implement more general subgroups of unitaries.

\paragraph{Attenuation.} 
The lower bound we obtained in Theorem \ref{thm:Gaussianlowerbounds} relies only on the phase-rotation unitaries, since they allowed us to invoke the Replication Lemma \ref{lemma:E-Yang}. Here, we want to explore what kinds of lower bounds we can obtain from looking at attenuation only, i.e., on  $\epsilon\text{-EPQP}_\cT$, which denotes processors implementing the attenuator channels $\cT=\{\cR_\varphi\circ\cT_\lambda : \varphi\in\{0,\pi\},\, \lambda\in[0;1]\}$. Note that we allow a single phase rotation of angle $\pi$, which in itself cannot give an unbounded lower bound. We could give the subsequent argument without it, but including it makes the following discussion a little nicer.

\begin{theorem}
\label{thm:attenuator-lowerbounds}
Let $0<\epsilon<\frac{1}{1024}$ and $E \geq 2^e-1$. Then, for every infinite-dimensional $\epsilon\text{-EPQP}_\cT$ $\cP\in\text{CPTP}(\cH\otimes\cH_P,\cH)$, its program register is lower bounded as follows:
\begin{equation}
  d_P^\infty \geq 2^{-16} \frac{1}{\sqrt{\ln\log(E+1)}} (E+1)^{\frac12-16\sqrt{\epsilon}}. 
\end{equation}
\end{theorem}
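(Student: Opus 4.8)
The plan is to run the information-theoretic lower-bound machinery sketched at the end of Section~\ref{sec:recycling}, but---unlike in the phase-rotation case of Theorem~\ref{thm:Gaussianlowerbounds}---\emph{without} invoking the Replication Lemma~\ref{lemma:E-Yang}, since the attenuators are genuinely non-unitary and that lemma applies only to unitary channels. This is also why I expect no divergence as $\epsilon\to 0$ (the exponent merely degrades): for fixed $E$ the bound is governed by the number of well-distinguishable outputs a single attenuator can produce from a fixed-energy input.

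The crucial observation is that the quantum-limited attenuator maps coherent states to coherent states, $\cT_\lambda(\proj{\alpha})=\proj{\sqrt\lambda\,\alpha}$, keeping them pure. First I would fix the fiducial input to be a coherent state $\ket{\alpha}$ with $|\alpha|^2=E$, so that $\tr\proj{\alpha}N=E$ meets the energy budget. Letting $\lambda$ range over $[0,1]$ and the phase $\varphi$ over $\{0,\pi\}$, the ideal outputs $\cR_\varphi\circ\cT_\lambda(\proj{\alpha})=\proj{\pm\sqrt\lambda\,\alpha}$ sweep out the whole segment of coherent states $\{\proj{\beta}:\beta\in[-\sqrt E,\sqrt E]\}$, the single allowed rotation by $\pi$ being what lets us use the full segment rather than half of it. I would then pick a uniform ensemble of $M$ equally spaced points $\beta_j$ on this segment with spacing $\Delta$, so $M\approx 2\sqrt E/\Delta$; each carries a program state $\pi_{\Phi_j}\in\cD(\cH_P)$, and $\chi(\{\pi_{\Phi_j}\})\leq\log d_P^\infty$.

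Next I would lower-bound the Holevo information of the \emph{ideal} ensemble $\{\proj{\beta_j},\tfrac1M\}$, which equals $S(\bar\rho)$ since the states are pure. Because coherent states are non-orthogonal, I would lower-bound this by the classical mutual information of a homodyne (position-quadrature) measurement together with a Fano estimate: adjacent states are confused with probability $p_e\approx e^{-\Delta^2/2}$, so the accessible information---hence $\chi$---is at least $(1-p_e)\log M-h(p_e)$. To make $p_e\log M$ negligible one must take $\Delta\sim\sqrt{\ln\log E}$, which forces $M\sim \sqrt E/\sqrt{\ln\log E}$ and hence $\chi(\text{ideal})\gtrsim \tfrac12\log E-\tfrac12\log\ln\log E$. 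This trade-off is precisely the origin of the $\frac{1}{\sqrt{\ln\log(E+1)}}$ factor, and I expect this step---controlling the non-orthogonality so as to pin down both the leading $\tfrac12\log E$ and the sub-leading correction with the stated constants---to be the main technical obstacle.

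Finally I would close the chain as in Eq.~\eqref{useful-equation}: data processing gives $\log d_P^\infty\geq\chi(\{\pi_{\Phi_j}\})\geq\chi(\{\omega_{\Phi_j}\})$ for the actual outputs $\omega_{\Phi_j}=\cP(\proj{\alpha}\otimes\pi_{\Phi_j})$, which obey $\tfrac12\|\omega_{\Phi_j}-\Phi_j(\proj{\alpha})\|_1\leq\epsilon$ by restricting the $E$-constrained diamond bound to the fixed input. Since $\cT\subset\cL(1,0)$, the ideal outputs have energy $\leq E$, so Lemma~\ref{lemming} with $\eta=\epsilon$ compresses each $\omega_{\Phi_j}$ to energy $\leq E/\epsilon$ at trace-distance cost $4\sqrt\epsilon$; the continuity bound of Ref.~\cite{Winter16} then yields $\log d_P^\infty\geq\chi(\text{ideal})-16\sqrt\epsilon\,g\!\left(\frac{E}{4\epsilon^{3/2}}\right)-2$. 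Using $g(N)\leq\log(N+1)+\log e$, the correction contributes $-16\sqrt\epsilon\log E$ to leading order, dropping the exponent of $E$ from $\tfrac12$ to $\tfrac12-16\sqrt\epsilon$. Collecting the leading $\tfrac12\log E$, the spacing-induced $-\tfrac12\log\ln\log E$, and the $\epsilon$-dependent and absolute constants (simplified using $0<\epsilon<\tfrac{1}{1024}$ and $E\geq 2^e-1$) then gives the claimed bound.
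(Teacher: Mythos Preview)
Your proposal is correct and matches the paper's overall architecture: fix a maximal-energy coherent state as fiducial input, observe that the ideal outputs $\cR_\varphi\circ\cT_\lambda(\proj{\alpha})$ sweep out the real segment of coherent states (the single $\pi$-rotation giving the full segment rather than half), lower-bound $\log d_P^\infty$ by the Holevo information of that ideal output ensemble, and then pay the $\epsilon$-cost exactly as in Eq.~\eqref{useful-equation} via Lemma~\ref{lemming} and the continuity bound of~\cite{Winter16}, producing the $-16\sqrt{\epsilon}\log(E+1)$ correction and hence the exponent $\tfrac12-16\sqrt{\epsilon}$.

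The genuine difference is in how the Holevo information of the ideal coherent-state ensemble is estimated. You use a \emph{discrete} equispaced ensemble, bound $\chi$ from below by the accessible information of a homodyne measurement, and invoke Fano; the spacing $\Delta\sim\sqrt{\ln\log E}$ is forced by the requirement that $p_e\log M$ be $O(1)$. The paper instead takes a \emph{continuous} truncated-Gaussian ensemble $p_E(\xi)\propto e^{-\xi^2/2\sigma^2}$ on $[-\sqrt{2E},\sqrt{2E}]$, passes to the untruncated Gaussian at trace-norm cost $\eta\leq e^{-E/\sigma^2}$, and then exploits that the resulting average $\omega=\int p(\xi)\proj{\xi}\,d\xi$ is a \emph{Gaussian state} with symplectic eigenvalue $\sqrt{1+2\sigma^2}$, so $S(\omega)=g\bigl((\sqrt{1+2\sigma^2}-1)/2\bigr)$ is available in closed form. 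Setting $\sigma^2=E/t$ with $t=\ln\log(E+1)$ optimizes the truncation error against the entropy, and the $\tfrac12\log(E+1)-\tfrac12\log\ln\log(E+1)$ drops out of the $g$-function bounds directly. Your route is more elementary and operational; the paper's is slicker for constant-tracking, since the entropy is computed exactly rather than via a measurement bound, which makes it easier to land on the specific $2^{-16}$.
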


\begin{proof}
As in the previous theorem, and explained at the end of Section \ref{sec:recycling}, to obtain the lower bound, we test the processor on a concrete input state $\rho_\chi$ which we choose to be a coherent state $\rho_\chi = \proj{\zeta}$ with the highest allowed energy (photon number) $E$, i.e., $\zeta = \sqrt{2E}$. With this fixed input state, the processor generates the output states $\{\Phi(\rho_\chi)\}$, $\Phi\in \cT$. 
These output states ideally are precisely the coherent states $\proj{\xi}$ with $-\zeta \leq \xi \leq \zeta$.

Rather than describing the ensemble of program states $\rho_\lambda$, which through the processor lead to unique output states $\rho_\xi = \cP(\proj{\zeta}\otimes\rho_{\lambda})$ (that approximate the coherent state $\proj{\xi}$), we give instead directly a distribution over the $\xi$. 
We choose the truncated Gaussian distribution with variance $\sigma^2$
\begin{equation}
  p_E(\xi) \assign 
  \begin{cases}
    \frac{1}{1-\eta}\frac{1}{\sqrt{2\pi}\sigma} e^{-\frac{\xi^2}{2\sigma^2}} & \text{ if } |\xi|\leq \sqrt{2E}, \\
    0 & \text{ otherwise}, 
  \end{cases}
\end{equation}
where 
\begin{equation}
  \label{eq:eta-bound}
  \eta = 1 - \int_{-\sqrt{2E}}^{+\sqrt{2E}} d\xi \frac{1}{\sqrt{2\pi}\sigma} e^{-\frac{\xi^2}{2\sigma^2}}
       =    \erfc(\sqrt{E}/\sigma)
       \leq e^{-E/\sigma^2},
\end{equation}
with the complementary error function $\erfc(x)$ and 
its well-known upper bound $\erfc(x) \leq e^{-x^2}$, see Ref.~\cite{Chiani:erf}.
Note furthermore that, denoting the density of the centered normal distribution with variance $\sigma^2$ by $p(\xi)=\frac{1}{\sqrt{2\pi}\sigma} e^{-\frac{\xi^2}{2\sigma^2}}$, we have 
$\frac12\|p-p_E\|_{L^1} = \eta$.

Following the method described at the end of Section \ref{sec:recycling}, using data processing, we start off from the inequality
\begin{equation}
\label{Infoprogramoutput}
  \log d_P^\infty \geq \chi (\{\rho_\xi, p_E(\xi) d\xi\})
           =    S\left(\int_{-\sqrt{2E}}^{\sqrt{2E}} d\xi\, p_E(\xi) \rho_\xi \right) - \int_{-\sqrt{2E}}^{\sqrt{2E}} d\xi\, p_E(\xi) S\left( \rho_\xi \right),
\end{equation}
where we recall the definition of the Holevo information.
The remaining calculation is about controlling the Holevo information on the right-hand side, which we do by first modifying the states from $\rho_\xi$ to the compressed state $\rho_\xi'=\cK(\rho_\xi)$, and finally to $\proj{\xi}$, incurring a certain error. According to Eq.~\eqref{useful-equation} we get 
from Eq.~\eqref{Infoprogramoutput}
\begin{equation}
\label{Infoidealoutput}
  \log d_P^\infty \geq S\left(\int_{-\sqrt{2E}}^{\sqrt{2E}} d\xi\, p_E(\xi) \proj{\xi} \right)
                - 16\sqrt{\epsilon} S\left(\gamma\left(\frac{E}{4\sqrt{\epsilon}^{3}}\right)\right) - 2,
\end{equation}
keeping in mind that our channels are $(1,0)$-energy-limited and that the attenuator output states $\proj{\xi}$ are pure.

It remains to calculate the entropy, which however is challenging. To lower bound it in turn, we modify the distribution from the truncated Gaussian $p_E$ to the full Gaussian $p$, incurring another certain Fannes error, but having the benefit of leaving us with a Gaussian state. We abbreviate the mixtures
\begin{align*}
  \omega   &\assign \int_{-\infty}^{\infty} d\xi\, \frac{1}{\sqrt{2\pi}\sigma} e^{-\frac{\xi^2}{2\sigma^2}} \proj{\xi}, \\
  \omega_E &\assign \int_{-\sqrt{2E}}^{\sqrt{2E}} d\xi\, p_E(\xi) \proj{\xi},
\end{align*}
to which we can apply~\cite[Lemma~15]{Winter16}, noting that both states have energy bounded by $\sigma^2/2$ and $E$, respectively. We choose $\sigma^2 = \frac{1}{t}E$ with $t \geq 1$, making the energy bound $E$, and $\eta \leq e^{-t}$, thus
\begin{equation}
|S(\omega) - S(\omega_E) | 
  \leq 2 \eta g \left(\frac{E}{\eta} \right) + h(\eta) 
  \leq 2 \eta \left( \log \left( \frac{E}{\eta}+1\right) + \log e \right) + 1.
\end{equation}
On the other hand, $\omega$ is a Gaussian state having a diagonal covariance matrix with eigenvalues $1$ and $1+2\sigma^2$. From this we can obtain its symplectic eigenvalue, which is $\sqrt{1+2\sigma^2}$, as one can see by considering a Gaussian squeezing unitary that transforms the state to a thermal Gaussian state. Hence,
\begin{equation}
  S(\omega) =    g\left( \frac{\sqrt{1+2\sigma^2}-1}{2} \right)
            \geq \log\left( \frac{\sqrt{1+2\sigma^2}+1}{2} \right),
\end{equation}
where here and in the previous display equation we used the bounds $\log(x+1) \leq g(x) \leq \log(x+1) + \log e$.

Putting it all together, using Eq.~\eqref{Infoidealoutput} and the above bounds, we obtain
\begin{equation}
\begin{split}
\log d_P^\infty
  &\geq g\left( \frac{\sqrt{1+2\sigma^2}-1}{2} \right) 
        - 2 \eta g \left(\frac{E}{\eta} \right) 
        - 16\sqrt{\epsilon} g\left(\frac{E}{4\sqrt{\epsilon}^{3}}\right) - 3 \\
  &\geq \log \left( \frac{\sqrt{1+2\sigma^2}+1}{2} \right) - 2 e^{-t} \log\left(E e^{t}+1\right) - 2 e^{-t} \log e \\
  &\phantom{=============}
        - 16 \sqrt{\epsilon} \log \left( \frac{E}{4\sqrt{\epsilon}^3}+1 \right) - 16 \sqrt{\epsilon} \log e -3 \\
  &\geq \frac12 \log\left( \frac{E+1}{2t} \right)
   - 2e^{-t}\log\bigl((E+1)e^t\bigr)
   - 16 \sqrt{\epsilon} \log \left( \frac{E+1}{4 \sqrt{\epsilon}^3} \right) \\
  &\phantom{==}
   - 3 - 2e^{-t}\log e - 16\sqrt{\epsilon}\log e \\
  &= \left(\frac12-16\sqrt{\epsilon}-2e^{-t} \right)\log(E+1) 
    - \frac12 \log t \\
  &\phantom{==}    
   - \frac12 - 3 - 2e^{-t}\log e - 2e^{-t}\log e^t + 16\sqrt{\epsilon}\log\left(\frac{4}{e}\sqrt{\epsilon}^3\right).
\end{split}
\end{equation}
Now we look at the terms in the last line, showing that their sum can be lower bounded by $-14$. Namely, note that the function $-x\log x$ is monotonically increasing on the interval $[0;1/e]$, and so $e^{-t}\log e^t \leq \frac{\log e}{e}$ as well as $-\sqrt{\epsilon}\log\sqrt{\epsilon} \leq \frac{5}{32}$, where we use that $\epsilon\leq\frac{1}{1024}$.

Thus, 
\begin{equation}
\log d_P^\infty
  \geq \left(\frac12-16\sqrt{\epsilon}\right)\log(E+1) 
    - \frac12 \log t - 14 - 2e^{-t}\log(E+1), 
\end{equation}
and letting $t=\ln\log(E+1) \geq 1$, recalling the assumption on $E$, concludes the proof.
\end{proof}

\paragraph{Amplification and attenuation.}
In the case of $\alpha > 1$, amplifier channels are also available. Thus, one could try to use attenuators as well as amplifiers to construct an ensemble. Heuristically, it makes sense to input a coherent state with the highest energy. Applying the attenuator channel yields coherent states with lower energy, which serve as input for the amplifier channel. The amplifier channel maps these coherent states to displaced thermal states. However, we did not find an appropriate ensemble where the mixture of ensemble states is still Gaussian (this is a heuristic to be able to calculate entropies in closed form) and the Holevo information is improved compared to the coherent states. 

Such an ensemble plausibly does not exist, because the amplifier channel introduces noise, which means that to obtain an advantage, the ensemble must use different amplification strengths, otherwise data processing shows directly that the Holevo information is only smaller. On the other hand, using a distribution over amplifications would likely result in an ensemble mixture that is a convex combination of different thermal states, and no nontrivial convex combination of them can lead to a thermal state again. 

Hence, to obtain lower bounds for gauge-covariant Gaussian channels, we consider the subset of attenuator channels. In other words, we obtain the same lower bounds for a processor that is merely able to implement attenuator channels and one that implements all Gaussian channels. 
Note that the lower bound on the processor dimension, such as Theorem \ref{thm:attenuator-lowerbounds}, in any case do not diverge as a function of $\epsilon$, as the net-based upper bounds do. Rather, for $\epsilon \to 0$ the lower bound converges to an expression that depends only on the energy, which is natural as it is based on an information bound for a single output of the channel that the processor implements, and we do not have access to the Replication Lemma \ref{lemma:E-Yang}.

\subsection{Gaussian unitary channels}

In analogy to existing programmable quantum processors that implement unitary channels, we aim for $(\alpha, \beta)$-energy-limited Gaussian unitary channels. Recall that we denote this processor as $\epsilon\text{-EPQP}_{\cG\cU(\alpha, \beta)}$. We study resource requirements in terms of the dimension of the program register.

\subsubsection{Upper bounds for Gaussian unitary channels}

We again aim to determine upper bounds on the program dimension $d_P^\infty$ of an $\epsilon\text{-EPQP}_{\cG\cU(\alpha, \beta)}$ in three steps. Firstly, we construct an $\epsilon$-net on the parameter set, secondly we relate it to a set of channels and thirdly, we construct a processor with program register equal to the cardinality of the net.
We denote the set of all $(\alpha, \beta)$-energy-limited Gaussian unitary channels as $\cG \cU(\alpha, \beta)$ with elements 
\begin{equation}
\cU_G (\cdot) = U_G \cdot U_G^\ast,
\end{equation}
where $S\in Sp_{2M}(\mathbb{R})$.

\begin{theorem}[Upper bound]\label{thm:upperboundsGaussianunitary}
Let $\epsilon > 0$ and $E > 0$. Then, for a system of $M$ Bosonic modes, there exists an infinite-dimensional $\epsilon\text{-EPQP}_{\cG\cU(\alpha, \beta)}$ $\cP \in CPTP (\cH\otimes \cH_P, \cH)$ whose program register is upper bounded as follows:
\begin{equation}
    |\cI_{S,d}| \leq  \bigg(\frac{2352 (M \alpha)^{3/2} (\sqrt{\alpha}+1)(E+1)}{\epsilon^2} \bigg)^{4M^2} \bigg( \frac{2\sqrt{2}(\sqrt{2\beta}+1) \sqrt{\alpha E + \beta + 1}}{ \epsilon } \bigg)^{2 M}
\end{equation}

for an absolute constant $c_S$.
\end{theorem}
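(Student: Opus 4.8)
The plan is to reprise the three-step scheme of Theorem~\ref{thm:upperboundsgaugecovariant}: parameterize $\cG\cU(\alpha,\beta)$, build an $\epsilon$-net on the parameters, lift it to a net of channels measured in the $E$-constrained diamond norm, and close with the PET, so that $d_P^\infty=|\cI_{S,d}|$. A Gaussian unitary channel $\cU_G(\cdot)=U_G(\cdot)U_G^\ast$ is specified by a symplectic matrix $S\in Sp_{2M}(\mathbb{R})$, acting on the quadratures by $X\mapsto SX$, together with a displacement $\eta\in\mathbb{R}^{2M}$, so that $U_G=D(\eta)U_S$. The first task is to use the $(\alpha,\beta)$-energy limitation to confine both parameters to compact sets. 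Feeding in the vacuum (energy $0$) and using the relation $\langle N\rangle=\frac14\tr\Gamma+\frac14\|d\|^2-\frac M2$ between the total photon number and the covariance data, the transformed data $\Gamma\mapsto S\Gamma S^T$, $d\mapsto Sd+\eta$ give $\frac14\tr(SS^T)+\frac14\|\eta\|^2-\frac M2\leq\beta$, hence $\|\eta\|\lesssim\sqrt{2\beta}$; testing instead on squeezed states of energy $E$, whose energy is amplified by roughly $\|S\|_{\mathrm{op}}^2$, the amplification clause $\leq\alpha E+\beta$ caps the operator norm at $\|S\|_{\mathrm{op}}\lesssim\sqrt\alpha$.

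With the two ranges fixed, I would net them separately. Regarding the symplectic matrices as points of the $4M^2$-dimensional space of real $2M\times 2M$ matrices and covering the norm-bounded region to accuracy $\epsilon_S$ gives a net of size $\lesssim(\mathrm{range}/\epsilon_S)^{4M^2}$, while covering the ball $\{\|\eta\|\lesssim\sqrt{2\beta}\}\subset\mathbb{R}^{2M}$ to accuracy $\epsilon_d$ gives a net of size $\lesssim(\mathrm{range}/\epsilon_d)^{2M}$. The product of the two cardinalities already has the shape $(\,\cdot\,)^{4M^2}(\,\cdot\,)^{2M}$ claimed in the theorem, so all the content sits in the two conversion rates $\epsilon_S(\epsilon)$ and $\epsilon_d(\epsilon)$, i.e.\ in the continuity moduli of $\cU_G$ in the $E$-constrained diamond norm.

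These continuity estimates are the heart of the matter, and I would separate them by the triangle inequality $\|\cU_{(S,\eta)}-\cU_{(S',\eta')}\|_\diamond^E\leq\|\cU_{(S,\eta)}-\cU_{(S',\eta)}\|_\diamond^E+\|\cU_{(S',\eta)}-\cU_{(S',\eta')}\|_\diamond^E$, treating the symplectic and the displacement parts one at a time. For the displacement, $D(\eta)$ and $D(\eta')$ are applied to the state $U_{S'}\rho U_{S'}^\ast$, whose energy is $\lesssim\alpha E+\beta$; the elementary bound $\|(D(\eta-\eta')-\1)\ket\psi\|\lesssim\|\eta-\eta'\|\sqrt{\langle N\rangle+1}$ then gives a \emph{linear} modulus, so $\epsilon_d\sim\epsilon/\sqrt{\alpha E+\beta+1}$, which is exactly why the displacement factor carries $1/\epsilon$ and the energy scale $\sqrt{\alpha E+\beta+1}$. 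For the symplectic part I would interpolate $S_t$ from $S'$ to $S$ and write $U_{S_t}=e^{-i\widehat G_t}$ with $\widehat G_t=\frac12 X^T G_t X$ a quadratic generator, then apply the Becker--Datta-type inequality $\frac12\|\cU_t-\cU_0\|_\diamond^E\leq\sqrt{2\,t\,\langle|\widehat G|\rangle_E}$~\cite{Becker19} that underlies the rotation, attenuator and amplifier estimates imported in Theorem~\ref{thm:upperboundsgaugecovariant}. This produces a \emph{square-root} modulus, so $\epsilon_S\sim\epsilon^2/(\text{energy and range factors})$, hence the $1/\epsilon^2$ in the symplectic factor.

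The hard part will be precisely this last estimate. Unlike the photon-number generator $N\geq0$ of the phase rotations, a squeezing generator $\widehat G=\frac12 X^TGX$ is unbounded below, so one cannot simply use $\langle|\widehat G|\rangle\leq E$. The key technical step is an operator inequality of the form $|\widehat G|\lesssim\|G\|\,(N+M)$ on the energy-bounded domain (equivalently, control of $\bra\psi\widehat G^2\ket\psi$ by $\|G\|^2$ and the first two energy moments of $\ket\psi$). Writing $S=e^{A}S'$ with $A$ a generator in the symplectic Lie algebra and $\|A\|\lesssim\|S'^{-1}\|_{\mathrm{op}}\|S-S'\|\lesssim\sqrt\alpha\,\|S-S'\|$ (using that the singular values of a symplectic matrix come in reciprocal pairs, so $\|S'^{-1}\|_{\mathrm{op}}=\|S'\|_{\mathrm{op}}\lesssim\sqrt\alpha$) converts $\|S-S'\|$ into $\langle|\widehat G|\rangle_E\lesssim\sqrt\alpha\,\|S-S'\|\,(E+M)$; together with the covering number of the operator-norm ball of symplectic matrices of radius $\sim\sqrt\alpha$ in $4M^2$ real dimensions, these are the sources of the polynomial $(M\alpha)^{3/2}$, $(\sqrt\alpha+1)$ and $(E+1)$ factors. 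Choosing $\epsilon_S$ and $\epsilon_d$ so that each diamond-norm contribution is $\leq\epsilon/2$, multiplying the two net cardinalities, and feeding the resulting finite channel set into the PET (exactly as in the processor-encoding step of the earlier theorems) then yields the stated bound; the explicit constant $2352$ and the precise exponents are pure bookkeeping of these factors, which I would not grind out here.
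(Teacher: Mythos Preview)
Your proposal is correct and follows essentially the same architecture as the paper's proof: confine $S$ and $\eta$ to compact sets via the $(\alpha,\beta)$-energy limitation, net each separately, lift to channels via continuity in $\|\cdot\|_\diamond^E$ (square-root modulus for the symplectic part, linear for the displacement at the shifted energy $\alpha E+\beta$), and finish with the PET. The only substantive difference is that where you sketch a derivation of the symplectic continuity bound via interpolation $S_t=e^{tA}S'$ and the operator inequality $|\widehat G|\lesssim\|G\|(N+M)$, the paper simply imports the ready-made estimates from Becker--Datta--Jabbour~\cite{Datta20}: the net cardinality on $Sp_{2M}^{\sqrt\alpha+1}(\mathbb R)$ from their Lemma~S16, and the bounds $\frac12\|\cU_S-\cU_{S'}\|_\diamond^E\leq\sqrt{c\,M(E+1)}\,g(\|S^{-1}S'\|_\infty)\sqrt{\|S^{-1}S'-\mathds 1\|_2}$ and $\frac12\|\cD_z-\cD_w\|_\diamond^E\leq\|z-w\|f(E)$ from their Eqs.~(3)--(4). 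Your route has the virtue of being self-contained and making transparent why the $\sqrt{\epsilon_S}$ and the $(M\alpha)^{3/2}(E+1)$ factors arise; the paper's route is shorter but hides this in the citation.
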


\begin{proof}
A general $M$-mode Gaussian unitary can be decomposed into a generalized phase rotation, which is a passive operation that does not change the energy, followed by $M$ separate single-mode squeezing transformations, followed by another passive generalized rotation, and finally a displacement, cf. Refs.~\cite{Arvind-et-al,KariusBaktus}. Thus, we construct two nets: one for the first three operations based on the symplectic group and another one on the displacements. Note that these sets are not compact but due to the energy limitation, we can introduce a cutoff in each of the sets to obtain compact ones. Where we place the cutoff depends on $\alpha$ and $\beta$. 

For the first net, let us consider the following compact subset of $Sp_{2M}(\mathbb{R})$
\begin{equation}
\label{eq:symplectic-cutoff}
Sp_{2M}^{\sqrt{\alpha}+1} (\mathbb{R}) \assign \{ S \in Sp_{2M}(\mathbb{R}), \|S - \mathds{1}_{\mathbb{R}^{2M}}\|_\infty \leq \sqrt{\alpha} + 1\}.
\end{equation}
To see what this does, the cutoff yields all elements with maximal singular value $r\leq \sqrt{\alpha}$. Since the singular values of a symplectic matrix come in pairs $x$ and $1/x$, this means that the above subset contains all matrices whose singular values lie between $\frac{1}{\sqrt{\alpha}}$ and $\sqrt{\alpha}$. 
From the Bloch-Messiah decomposition, which shows that modulo passive Gaussian transformations, every Gaussian unitary described by a symplectic matrix is equivalent to a tensor product of single-mode squeezing operators~\cite{BlochMessiah}, this means that we only have to consider the $(\alpha,\beta)$-energy limitation of those $M$ single-mode squeezers.

It is indeed elementary to see that a squeezing unitary with singular values of the symplectic matrix $r\geq 1$ and $1/r$ is not $(\alpha,\beta)$-energy-limited for any $r^2>\alpha$ and $\beta\geq 0$. Indeed, consider a squeezed vacuum state $\sigma$ with $\Delta x^2=\gamma^2 \gg 1$, $\Delta p^2=\frac{1}{2\gamma^2}$, which has photon number $\tr\sigma H \approx \frac{\gamma^2-1}{2}$; after squeezing, we have an even more squeezed vacuum state with photon number $\tr\sigma'H \approx \frac{r^2\gamma^2-1}{2} \geq r^2 \tr\sigma H$.
On the other hand, it is not difficult to see that the squeezing unitary is $\left(r^2,\frac{r^2-1}{2}\right)$-energy-limited. Namely, it transforms $x$ to $rx$ and $p$ to $p/r$, and so the Hamiltonian $H=\frac12(x^2+p^2-1)$ is transformed to $H'=\frac12(r^2x^2+p^2/r^2-1) \leq \frac12 r^2(x^2+p^2-1) + \frac{r^2-1}{2} = r^2 H + \frac{r^2-1}{2}$.

Thus, the set $Sp_{2M}^{\sqrt{\alpha}+1} (\mathbb{R})$ from Eq.~\eqref{eq:symplectic-cutoff} contains all symplectic matrices giving rise to $(\alpha,\beta)$-energy-limited Gaussian unitaries.
An $\epsilon$-net $\cI_{S}$ on this set is constructed and its cardinality $|\cI_{S}|$ determined in Ref.~\cite[Lemma~S16]{Datta20} as
\begin{equation}
|\cI_{S}| \leq \left(\frac{3(\sqrt{\alpha}+1)}{\epsilon_S} \right)^{4M^2}.
\end{equation}

Concerning the displacement, we must have $|d|^2 = |d_1|^2 + |d_2|^2 + \ldots + |d_M|^2\leq 2\beta$ for an admissible displacement vector $d = (d_1,\ldots,d_M)$, where each $d_j=(d_{j1},d_{j2})$ is a pair of single-mode phase-space coordinates; otherwise the unitary channel $\cU_d$ is not $(\alpha,\beta)$-energy-limited. Indeed, consider a coherent state $\ket{\xi}=\ket{\xi_1}\cdots\ket{\xi_M}$ as input; the displacement unitary $D(d)$ transforms it into the coherent state $\ket{\xi+d} = \ket{\xi_1+d_1}\cdots\ket{\xi_M+d_M}$, which changes the photon number from $\frac12|\xi|^2$ to $\frac12|\xi+d|^2 = \frac12|\xi|^2+\frac12|d|^2+\xi^\top\cdot d$; choosing $\xi=0$ we see that necessarily $|d|^2\leq 2\beta$.

At the same time, the displacement unitary channel $\cU_d$ is $\left(1+t,\left(1+\frac{1}{t}\right)\frac12|d|^2\right)$-energy-limited for all $t>0$. Namely, for the $j$th mode, $D(d_j)$ transforms its annihilation operator $a_j$ to $a_j+\frac{1}{\sqrt{2}}(d_{j1}+id_{j2}) =: a_j + \alpha_j$; this has the effect of transforming the photon-number Hamiltonian $H_j=a_j^* a_j$ to 
\begin{equation}
  \label{eq:H-strich}
  H_j' = (a_j^* + \overline{\alpha}_j)(a_j + \alpha_j) 
     = a_j^* a_j + |\alpha_j|^2 + \overline{\alpha}_j a_j + \alpha_j a_j^*.
\end{equation}
Now consider that for $t>0$, we have 
\[
  0 \leq \left(\sqrt{t}a_j-\frac{1}{\sqrt{t}}\alpha_j\right)^* \left(\sqrt{t}a_j-\frac{1}{\sqrt{t}}\alpha_j\right) 
    = t a_j^* a_j + \frac{1}{t}|\alpha_j|^2
     - \overline{\alpha}_j a_j - \alpha_j a_j^*, 
\]
which we can insert into Eq.~\eqref{eq:H-strich} to get
\[
  H_j' 
   \leq (1+t) a_j^* a_j + \left(1+\frac1t\right)|\alpha_j|^2
   =    (1+t) H_j + \left(1+\frac1t\right)\frac12|d_j|^2.
\]
Summing over $j$ yields the claim.

Hence, we construct a net on the ball of radius $\sqrt{2 \beta}$ in $\mathbb{R}^{2M}$ with the Euclidean metric. Its cardinality is known to be~\cite[Lemma 5.8]{ABMB}
\begin{equation}
|\cI_{d}| \leq \left( 1 + \frac{\sqrt{2\beta}}{\epsilon_d} \right)^{2 M}.
\end{equation}

Bringing these two nets together by multiplying the unitaries, results in an overall net cardinality
\begin{equation}\begin{split}
|\cI_{S,d}| 
  \leq |\cI_S| |\cI_d| 
  &\leq \bigg(\frac{3(\sqrt{\alpha}+1)}{\epsilon_S} \bigg)^{4M^2} \bigg( 1 + \frac{\sqrt{2\beta}}{\epsilon_d} \bigg)^{2 M} \\
  &\leq \bigg(\frac{3(\sqrt{\alpha}+1)}{\epsilon_S} \bigg)^{4M^2} \bigg( \frac{\sqrt{2\beta}+1}{\epsilon_d} \bigg)^{2 M} .
\end{split}\end{equation}

Having established $\epsilon$-nets on the parameter level, we require an upper bound for $\frac{1}{2}\|\cU_G - \cU_{G_i}\|_\diamond^E$. So we transfer both nets on the sets of parameters to the corresponding channels. The symplectic matrices correspond to the first three parts of the decomposition, a rotation followed by a squeezing and again a rotation,

Since we assume an energy limitation on the set of Gaussian unitary channels, we can construct a compact subset of channels that contains these channels, as follows. In fact, we obtain it from a compact subset of the symplectic group and a compact subset of the displacement group.

For the former, we use Ref.~\cite[Eq.~(4)]{Datta20}, which states 
\begin{equation}
    \frac{1}{2}\|\cU_S - \cU_{S^\prime} \|_\diamond^E \leq \sqrt{(\sqrt{6}+\sqrt{10}+5\sqrt{2} M)(E+1)} g \bigl( \| S^{-1}S' \|_\infty \bigr) \sqrt{\|S^{-1}S' - \mathds{1}\|_2},
\end{equation}
where $g(x)\assign \sqrt{\frac{\pi}{x+1}} + \sqrt{2x}$.
Note $\|S^{-1}\| \leq \sqrt{\alpha}$, $\|S'\| \leq 1+\sqrt{\alpha}$, hence the argument $x$ of $g(x)$ is between $1$ and $\alpha+\sqrt{\alpha} \leq 2\alpha$, thus $g(x) \leq \sqrt{\pi/2} + \sqrt{2x} \leq \left(2 + \sqrt{\pi/2}\right)\sqrt{\alpha} < 3.26\sqrt{\alpha}$.
Furthermore, $\|S^{-1}S' - \mathds{1}\|_2 = \|S^{-1}(S' - S)\|_2 \leq \sqrt{2M} \|S^{-1}(S' - S)\|_\infty \leq \sqrt{2M} \|S^{-1}\|_\infty \|S' - S\|_\infty \leq \sqrt{2M}\sqrt{\alpha}\epsilon_S$.
Finally, $\sqrt{6}+\sqrt{10}+5\sqrt{2}M \leq (\sqrt{6}+\sqrt{10}+5\sqrt{2})M < 13M$.
Hence, in simplified form the result says that
\begin{equation}\begin{split}
\frac12 \| \cU_S - \cU_{S_i} \|_\diamond^E 
&\leq \sqrt{13M}\sqrt{E+1} \cdot 
      3.26\sqrt{\alpha} \cdot
      \sqrt[4]{2M}\sqrt[4]{\alpha}
      \sqrt{\epsilon_S} \\
&\leq 14 (M \alpha)^{3/4}\sqrt{E+1}\sqrt{\epsilon_S}.
\end{split}\end{equation}

For the latter, we consider Ref.~\cite[Eq.~(3)]{Datta20}, which states 
\begin{equation}
    \frac{1}{2}\|\cD_z - \cD_w \|_\diamond^E \leq \sin \left( \min \left\{ \| z - w \| f(E) , \frac{\pi}{2} \right\} \right)
    \leq \sqrt{2}\sqrt{E+1}\|z-w\|,
\end{equation}
where $f(E)\assign \frac{1}{\sqrt{2}} (\sqrt{E} + \sqrt{E+1}) \leq \sqrt{2}\sqrt{E+1}$ and as before we use only the simplified upper bound.

Note that with the action of $\cU_S$ on the input, the input energy of the displacement part changes, i.e.,~we work with the $(\alpha E + \beta)$-energy constrained diamond norm here. 
Thus we obtain
\begin{equation}
\begin{split}
\frac12 \| \cU_d - \cU_{d_i} \|_\diamond^{\alpha E + \beta} 
&\leq \|d - d_i \| f (\alpha E  + \beta)\\
&\leq \sqrt{2} \sqrt{\alpha E +\beta +1}  \epsilon_d.
\end{split}
\end{equation}

Overall, we obtain for the net of channels
\begin{equation}
\begin{split}
  \frac12\| \cU_G - \cU_{G_i}\|_\diamond^{E} &\leq \frac12\| \cU_d \circ \cU_S  - \cU_{d_i} \circ \cU_{S_i}\|_\diamond^E \\
&\leq  \frac12\| \cU_S - \cU_{S_i} \|_\diamond^E + \frac12\| \cU_d - \cU_{d_i}\|_\diamond ^{\alpha E + \beta} \\
& \leq 14 ( M \alpha)^{3/4}\sqrt{E+1}\sqrt{\epsilon_S} + \sqrt{2} \sqrt{\alpha E +\beta +1} \epsilon_d.
\end{split}
\end{equation}

We now choose $\epsilon_S$ and $\epsilon_d$ in terms of $\epsilon$, $E$, $\alpha$ and $\beta$, as follows:
\begin{equation}
\epsilon_S = \frac{\epsilon^2}{784 (M \alpha)^{3/ 2}(E+1)}, \qquad
    \epsilon_d = \frac{\epsilon }{ 2\sqrt{2} \sqrt{\alpha E + \beta +1}},
\end{equation}
which yields an $\epsilon$-net with
\begin{equation}
    |\cI_{S,d}| \leq  \bigg(\frac{2352 (M \alpha)^{3/2} (\sqrt{\alpha}+1)(E+1)}{\epsilon^2} \bigg)^{4M^2} \bigg( \frac{2\sqrt{2}(\sqrt{2\beta}+1) \sqrt{\alpha E + \beta + 1}}{ \epsilon } \bigg)^{2 M}.
\end{equation}
With the PET, we construct a programmable quantum processor with $d_P^\infty = |\cI_{S,d}|$ which concludes the proof.
\end{proof}

\subsubsection{Lower bounds for Gaussian unitary channels}

Concerning lower bounds on the dimension of the program register of an $\epsilon\text{-EPQP}_{\cG\cU(\alpha, \beta)}$ for an $M$-mode Bosonic system with Hilbert space $\cH=\cH_1\otimes\cdots\otimes\cH_M$, evidently Lemma \ref{lemma:E-Yang} is applicable to suitably energy-limited unitaries. The difficulty is mainly that of finding a good distribution over those unitaries and a fiducial state with which to calculate a lower bound along the lines of the end of Section \ref{sec:recycling}.

Since phase rotations are a subset of Gaussian unitaries, the bounds from Theorem~\ref{thm:Gaussianlowerbounds}  (Section~\ref{subsec:lowerboundsgaugecovariant}) directly apply here, showing that the program dimension diverges at least with the inverse square root of $\epsilon$. 
We can however immediately do a little better by considering $M$-fold phase rotations $\cR_{\und{\varphi}} = \cR_{\varphi_1}\otimes\cdots\otimes\cR_{\varphi_M}$
with $\und{\varphi}=(\varphi_1,\ldots,\varphi_M) \in [0;2\pi]^M$ in Lemma~\ref{lemma:E-Yang}. Note that these are a subset of the passive linear transformations, and as such conserve energy, hence are $(1,0)$-energy limited. 
We get that the modified processor $\widehat{\mathcal{P}}$ approximately implements the $\cR_{\und{\varphi}}^{\otimes\ell}$:
\[
  \frac12 \left\| \widehat{\mathcal{P}}(\cdot\otimes\pi_{\und{\varphi}})  -\cR_{\und{\varphi}}^{\otimes\ell} \right\|_\diamond^{(E,\ldots,E)} \leq 2\ell\sqrt{2\epsilon}.
\]
In the $\ell M$-mode system $\cH^{\otimes\ell}$ we 
address the modes $\cH_{jk}$ by double indices, where $j=1,\ldots,M$ are the original physical modes and $k=1,\ldots,\ell$ the repetitions. The repetitions of the $j$th mode have the Hilbert space $\cH_j^{\otimes\ell} = \cH_{j1}\otimes\cdots\otimes\cH_{j\ell} =: \cH_{j\bullet}$. As in the proof of Theorem~\ref{thm:Gaussianlowerbounds}, we choose a virtual Fock space $\cH_{V_j} \subset \cH_{j\bullet}$ spanned by symmetric number states $\ket{\text{``$n$''}}_j$, and let 
\[
  \ket{\nu_j} \assign \sum_{n=0}^\infty c_n \ket{\text{``$n$''}}_j,
\]
where $\{|c_n|^2\}$ is the probability distribution of the thermal state (of the virtual mode $\cH_{V_j}$) of mean energy (i.e., photon number) $\frac{\ell E}{M}$.
The rotation $\cR_{\varphi_j}^{\otimes\ell}$ acts on $\cH_{j\bullet}$ and leaves $\cH_{V_j}$ invariant, in fact it puts phases $e^{i\varphi_j n}$ in the above superposition. 
Now, $\ket{\nu} \assign \ket{\nu_1}\otimes\cdots\otimes\ket{\nu_M}$ is our fiducial state. It has the property that on each copy $\cH_{\bullet k} = \cH_{1k}\otimes\cdots\otimes\cH_{Mk}$ 
of the original $M$-mode system, its energy is $E$.
We can evaluate the Holevo information of the ideal ensemble of uniformly distributed states $\cR_{\und{\varphi}}^{\otimes\ell}(\proj{\nu})$ as we did in Theorem~\ref{thm:Gaussianlowerbounds}:
\begin{equation}
  \label{eq:M-mode-chi}
  \chi\left(\left\{ \cR_{\und{\varphi}}^{\otimes\ell}(\proj{\nu}), \frac{{\rm d}^M\und{\varphi}}{(2\pi)^M} \right\}\right)
  =    M g\left(\frac{\ell E}{M}\right)
  \geq M \log\frac{\ell E}{M}
  =    M\left(\log\ell + \log E - \log M\right). 
\end{equation}

By Lemma~\ref{lemma:E-Yang}, we now have for all $\und{\varphi}$
\[
  \frac12 \left\| \widehat{\mathcal{P}}(\proj{\nu}\otimes\pi_{\und{\varphi}})  -\cR_{\und{\varphi}}^{\otimes\ell}(\proj{\nu}) \right\|_1 \leq 2\ell\sqrt{2\epsilon}.
\]
Now we can massage the processor outputs $\omega_{\und{\varphi}} = \widehat{\mathcal{P}}(\proj{\nu}\otimes\pi_{\und{\varphi}})$ as in the proof of Theorem~\ref{thm:Gaussianlowerbounds}, first by the compression maps $\cK_j$ from $\cH_{j\bullet}$ to $\cH_{V_j}$, resulting in 
$\omega_{\und{\varphi}}' = (\cK_1\otimes\cdots\otimes\cK_M)\omega_{\und{\varphi}}$ obeying the same approximation as before
\[
  \frac12 \left\| \omega_{\und{\varphi}}' -\cR_{\und{\varphi}}^{\otimes\ell}(\proj{\nu}) \right\|_1 \leq 2\ell\sqrt{2\epsilon}.
\]
Second, by applying a compression map $\cK$ from $\cH_{V_1}\otimes\cdots\otimes\cH_{V_M}$ to an energy-constrained subspace, according to Lemma~\ref{lemming}, resulting in $\omega_{\und{\varphi}}'' = \cK(\omega_{\und{\varphi}}')$
such that
\[
  \frac12 \left\| \omega_{\und{\varphi}}'' -\cR_{\und{\varphi}}^{\otimes\ell}(\proj{\nu}) \right\|_1 \leq 4 \sqrt{ 2\ell\sqrt{2\epsilon}},
\]
while obeying an energy bound $\tr\omega_{\und{\varphi}}'' H \leq \frac{E}{2\sqrt{2\epsilon}}$. 

As explained at the end of Section~\ref{sec:recycling}, we now have
\begin{equation}\begin{split}
  \log d_P^\infty
    &\geq \chi\left(\left\{ \omega_{\und{\varphi}}'', \frac{{\rm d}^M\und{\varphi}}{(2\pi)^M} \right\}\right) \\
    &\geq \chi\left(\left\{ \cR_{\und{\varphi}}^{\otimes\ell}(\proj{\nu}), \frac{{\rm d}^M\und{\varphi}}{(2\pi)^M} \right\}\right)
          - 16 \sqrt{2} \sqrt{\ell}  \sqrt{\sqrt{2\epsilon}} M g\left(\frac{E}{16 \sqrt{\epsilon} \sqrt{\ell} \sqrt{\sqrt{2 \epsilon}}M} \right) - 2,
\end{split}\end{equation}
cf.~the proof of Theorem~\ref{thm:Gaussianlowerbounds}.
Inserting Eq.~\eqref{eq:M-mode-chi} for the Holevo information in the last line expresses everything in terms of the $g$ function, which we can upper and lower bound as before. We choose $\sqrt{\ell}=\frac{\delta}{16\sqrt{2} \sqrt{\sqrt{2\epsilon}}}$ with $0<\delta<1$, and obtain
\begin{equation}\begin{split}
  \log d_P^\infty
    &\geq M \log\frac{\ell E}{M} - \delta M  \log\left(\frac{\sqrt{2}E}{M\delta\sqrt{\epsilon}} + 1 \right) - \delta M\log e - 2 \\
    &\geq M(1-\delta)\log\frac{1}{\sqrt{2\epsilon}} + M\log\frac{\delta^2 E}{512M} - \delta M\log\left(\frac{\sqrt{2}E}{M}+1\right) - M \log e - 2.
\end{split}\end{equation}
This proves the following lower bound on the program dimension: 

\begin{theorem}[Lower bounds generalized phase rotations]
\label{thm:Gaussianlowerbounds-M-modes}
Let $\epsilon > 0$ and $E > 0$ and consider an $M$-mode Bosonic system with Hilbert space $\cH$. Then, for every infinite-dimensional $\epsilon\text{-EPQP}_{\cU(\alpha, \beta)}$ $\cP \in CPTP (\cH \otimes \cH_P, \cH)$, its program register can be lower bounded as follows: 
\begin{equation}
  d_P^\infty \geq \frac{1}{4(512 e)^M}\left(\frac{\delta^2 E/M}{(\sqrt{2}E/M+1)^\delta}\right)^M \left(\frac{1}{\sqrt{2\epsilon}}\right)^{(1-\delta)M}
      \geq (C \delta^2)^M \left(\frac{E/M}{\sqrt{\epsilon}}\right)^{(1-\delta)M},
\end{equation}
for any $0<\delta<1$, and the latter for $\frac{E}{M}\geq 1$ and an absolute constant $C$.
\hfill $\square$
\end{theorem}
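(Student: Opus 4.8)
The plan is to specialize the information-theoretic lower-bound method sketched at the end of Section~\ref{sec:recycling} to a carefully chosen family of Gaussian unitaries, namely the generalized phase rotations $\cR_{\und{\varphi}} = \cR_{\varphi_1}\otimes\cdots\otimes\cR_{\varphi_M}$ with $\und{\varphi} \in [0;2\pi]^M$. These are passive linear transformations, hence energy-conserving and $(1,0)$-energy-limited, so they lie in $\cG\cU(\alpha,\beta)$ and the Replication Lemma~\ref{lemma:E-Yang} applies verbatim. First I would invoke that lemma to pass from a single implementation of $\cR_{\und{\varphi}}$ by the assumed $\epsilon$-EPQP to a modified processor $\widehat{\cP}$ implementing $\cR_{\und{\varphi}}^{\otimes\ell}$ on $\ell$ parallel copies of the $M$-mode system, with error $2\ell\sqrt{2\epsilon}$ in the multiply energy-constrained diamond norm of Eq.~\eqref{eq:multiply-E-diamond}.

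Next I would construct a fiducial input state adapted to the product structure. Indexing the $\ell M$ modes by pairs $(j,k)$, $j=1,\ldots,M$ and $k=1,\ldots,\ell$, I would, separately for each physical mode $j$, pass to the virtual Fock space $\cH_{V_j}$ spanned by the symmetrized number states $\ket{\text{``$n$''}}_j$ on the $\ell$ repetitions, exactly as in the single-mode proof of Theorem~\ref{thm:Gaussianlowerbounds}. Setting $\ket{\nu_j} = \sum_n c_n \ket{\text{``$n$''}}_j$ with $\{|c_n|^2\}$ the thermal distribution of mean photon number $\frac{\ell E}{M}$, the product $\ket{\nu} = \ket{\nu_1}\otimes\cdots\otimes\ket{\nu_M}$ carries energy exactly $E$ on each of the $\ell$ copies of the original system, as the constraint demands. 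Since $\cR_{\varphi_j}^{\otimes\ell}$ acts on $\cH_{V_j}$ merely by stamping phases $e^{i\varphi_j n}$, the ideal ensemble $\{\cR_{\und{\varphi}}^{\otimes\ell}(\proj{\nu})\}$ factorizes across the $M$ virtual modes, so its Holevo information is $M g\!\left(\frac{\ell E}{M}\right) \geq M\log\frac{\ell E}{M}$.

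The final step converts this into a bound on $\log d_P^\infty$. Following the template of Eq.~\eqref{useful-equation}, I would process the true outputs $\omega_{\und{\varphi}}$ by two compression maps — first the projectors $\cK_j$ onto the $\cH_{V_j}$, then an energy-truncation map $\cK$ supplied by Lemma~\ref{lemming} — and control the entropy defect through the continuity bound for the von Neumann entropy of energy-bounded states. Data processing for the Holevo information then lower-bounds $\log d_P^\infty$ by the ideal Holevo information minus a correction expressed via the thermal-entropy function $g$. Optimizing through the substitution $\sqrt{\ell} = \frac{\delta}{16\sqrt{2}\sqrt{\sqrt{2\epsilon}}}$ and using the elementary bounds $\log N \leq g(N) \leq \log(N+1)+\log e$ should yield the claimed $M$-th power divergence $\sim (E/M)^{(1-\delta)M}\,\epsilon^{-(1-\delta)M/2}$.

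I expect the main obstacle to be the per-copy energy bookkeeping: the product fiducial state must respect $\tr\rho H_j \leq E$ on every one of the $\ell$ copies simultaneously, while still loading energy $\frac{\ell E}{M}$ into each virtual mode to make the Holevo information as large as possible. It is precisely the multiply energy-constrained norm of Eq.~\eqref{eq:multiply-E-diamond} that makes this balancing act feasible. The second delicate point is that the entropy-continuity correction carries a $\sqrt{\ell}$ prefactor and so competes with the $M\log\ell$ gain from the ideal term; the optimal choice of $\ell$ reduces the correction to order $\delta M\log\frac{1}{\sqrt{\epsilon}}$, and it is the parameter $\delta$ that keeps it strictly subleading, producing the exponent $(1-\delta)M$.
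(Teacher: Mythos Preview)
Your proposal is correct and follows essentially the same route as the paper's own proof: the same choice of $M$-fold phase rotations, the same virtual Fock spaces $\cH_{V_j}$ with symmetrized number states carrying thermal weights of mean $\ell E/M$, the same two-stage compression (onto the $\cH_{V_j}$ and then energy truncation via Lemma~\ref{lemming}), and the same substitution $\sqrt{\ell} = \frac{\delta}{16\sqrt{2}\sqrt{\sqrt{2\epsilon}}}$ to balance the ideal Holevo term against the continuity correction. Your identification of the two delicate points --- the per-copy energy constraint and the competition between $M\log\ell$ and the $\sqrt{\ell}$-scaled entropy defect --- matches exactly where the paper's argument is most careful.
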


\medskip
In the above analysis, as in that of Theorem~\ref{thm:Gaussianlowerbounds}, it is helpful that the ensemble is obtained from an orbit of a compact subgroup of the Gaussian unitaries, for which we can impose the Haar measure as probability distribution. It might be possible to extend the analysis of Theorem~\ref{thm:Gaussianlowerbounds-M-modes} to the largest compact subgroup, which is the set of all passive linear transformations, parametrized precisely by the orthogonal symplectic matrices $Sp_{2M}(\mathbb{R})\cap SO_{2M}(\mathbb{R})$, which is well known to be isomorphic to the group $\cU(\mathbb{C}^M)$ of $M\times M$-unitary matrices (cf. Ref.~\cite{Arvind-et-al}). This would give access to a much larger set of ensembles, but in view of the above result (featuring the entropy-maximizing Gibbs states), it would be surprising if it yielded a significant improvement.

\section{The future}
\label{sec:conclusion}

We introduce approximate programmable quantum processors in infinite dimension and establish a relation to finite-dimensional ones. With two different constructions, the first reducing an infinite-dimensional processor to a finite-dimensional one, and the second vice versa, we obtain upper and lower bounds on the program dimension based on existing bounds. It remains an open question to determine how tight these bounds are, and may best be addressed in a specific setting. 

We do this for Bosonic systems and restrict to Gaussian channels. In the unitary case, we obtain lower bounds that diverge with the reciprocal of the approximation parameter, with a polynomial degree of at least half the number of Bosonic modes in the system. We also have upper bounds based on nets on compact sections of the Gaussian unitary group, also polynomial in the reciprocal of the approximation parameter, but the order is quadratic in the number of modes. We leave the determination of the exact optimal degree as an open question. To obtain these lower bounds, the method of Yang \emph{et al.}~\cite{Renner20} by which the program state can be recycled to implement the same unitary several times, is adapted to infinite dimension and the energy-constrained diamond norm (necessitating a generalization of the latter with multiple constraints).

However, this trick cannot be applied in the case of the  well-studied and physically relevant class of single-mode gauge-covariant Gaussian channels, because those are genuinely noisy channels. We still provide lower bounds, which are comparatively weak, though. Concretely, while the corresponding upper bounds that we develop (based on nets) diverge with the accuracy of implementation, the lower bounds tend to a constant function of the energy constraint. It remains an open problem to close this gap, and indeed to determine whether an infinite program register is actually necessary to implement all gauge-covariant channels (modulo phase rotations), or even only the attenuator channels. The latter is a particularly interesting subclass, as attenuators map pure coherent states to other pure coherent states, suggesting that perhaps a variant of Lemma \ref{lemma:E-Yang} could be shown to hold for repeated applications of the channel to coherent states. 

Furthermore, one could think of more degrees of generalizations towards a programma-\\ble quantum processor that implements all Gaussian channels. This could potentially be done by considering their Gaussian unitary dilations, for which we already have upper bounds. A necessary requirement to carry out this program would be a version of the Stinespring theorem that gives an energy-limited Gaussian dilation for every energy-limited Gaussian channel.

\vfill\pagebreak
\myacknowledgements

M.G. thanks Andreas Bluhm, Stefan Huber, Cambyse Rouz\'{e} and Simone Warzel for discussions. M.G. is funded by the Deutsche Forschungsgemeinschaft (DFG, German Research Foundation) under Germany's Excellence Strategy -- EXC-2111 -- 390814868.
A.W. acknowledges financial support by the Spanish MINECO (Projects FIS2016-86681-P and PID2019-107609GB-I00) with the support of FEDER funds, and the Generalitat de Catalunya (Project CIRIT 2017-SGR-1127).

\bibliographystyle{unsrturl}

\bibliography{mybib}

\end{document}